\theoremstyle{plain}
\newtheorem{theorem}{Theorem}[section]
\newtheorem{conjecture}{Conjecture}
\newtheorem{corollary}[theorem]{Corollary}
\newtheorem{lemma}[theorem]{Lemma}
\newtheorem{remark}[theorem]{Remark}
\theoremstyle{definition}
\newtheorem*{claim*}{Claim}
\newcommand{\A}{\mathcal A}
\newcommand{\G}{\mathcal G}
\newcommand{\Hh}{\mathcal H}
\newcommand{\F}{\mathcal F}
\newcommand{\Pp}{\mathcal P}
\newcommand{\bigO}{O}
\newcommand{\eps}{\varepsilon}
\newcommand{\prob}{\mathbb{P}}
\newcommand{\expect}{\mathbb{E}}
\newcommand{\variance}{\mathbb{D}}
\newcommand{\w}{w}
\newcommand{\ie}{i.e.,\xspace}
\newcommand{\capped}[1]{\hstretch{2}{\hat{\hstretch{.5}{#1}}}}
\newcommand{\aas}{a.a.s.\@\xspace}
\newcommand{\lnnf}{\frac{\log n}{n}}
\DeclarePairedDelimiter{\abs}{\lvert}{\rvert}
\DeclareMathOperator*{\argmin}{arg\,min}
\DeclareMathOperator{\dd}{d\!} 
\DeclarePairedDelimiter\ceil{\lceil}{\rceil}
\DeclarePairedDelimiter\floor{\lfloor}{\rfloor}
\newcommand{\ANY}{\texttt{ANY}}
\newcommand{\CO}{\texttt{CO}}
\title{Sharp Thresholds in Random Simple Temporal Graphs\footnote{An extended abstract of this work appeared in the Proceedings of the 62nd Annual Symposium on Foundations of Computer Science (FOCS) \cite{CRRZ22}. The work of the first and the second authors was supported by the French ANR, project ANR-22-CE48-0001 (TEMPOGRAL). The work of the second author was also supported by the European Research Council (ERC) under the Horizon 2020 research and innovation program (grant agreement No 787367 (PaVeS)). The work of the third author was supported by the German Research Foundation (DFG), project MATE (NI 369/17).}}
\author{
Arnaud Casteigts\thanks{University of Geneva, Switzerland. Email: \texttt{arnaud.casteigts@unige.ch}. ORCID: 0000-0002-7819-7013. Part of this project was done while the author was at LaBRI, CNRS, University of Bordeaux, France.}
\and 
Michael Raskin\thanks{LaBRI, CNRS, University of Bordeaux, France. Email: \texttt{mraskin@u-bordeaux.fr}. ORCID:0000-0002-6660-5673. Part of this project was done while the author was at Department of Informatics, Technical University of Munich, Germany.}
\and
Malte Renken\thanks{Technical University of Berlin, Germany. Email: \texttt{m.renken@tu-berlin.de}. ORCID: 0000-0002-1450-1901.}
\and
Viktor Zamaraev\thanks{Department of Computer Science, University of Liverpool, United Kingdom. Email: \texttt{Viktor.Zamaraev@liverpool.ac.uk}. ORCID: 0000-0001-5755-4141.}
}
\date{}
\begin{document}

\maketitle

\begin{abstract}
  A graph whose edges only appear at certain points in time is called
  a temporal graph (among other names). Such a graph is temporally
  connected if each ordered pair of vertices is connected by a path
  which traverses edges in chronological order (\ie a temporal path).
  In this paper, we consider a simple model of random temporal graph,
  obtained from an Erd{\H{o}}s–R{\'e}nyi random graph~$G\sim G_{n,p}$
  by considering a random permutation $\pi$ of the edges and
  interpreting the ranks in $\pi$ as presence times.

  We give a thorough study of the temporal connectivity of such graphs
  and derive implications for the existence of several kinds of sparse
  spanners. It turns out that temporal reachability in this model
  exhibits a surprisingly regular sequence of thresholds. In
  particular, we show that, at $p=\log n/n$, any fixed pair of
  vertices can \aas reach each other; at $2\log n/n$, at least one
  vertex (and in fact, any fixed vertex) can \aas reach all others;
  and at $3\log n/n$, all the vertices can \aas reach each other, \ie
  the graph is temporally connected. Furthermore, the graph admits a
  temporal spanner of size $2n + o(n)$ as soon as it becomes
  temporally connected, which is nearly optimal as $2n-4$ is a lower
  bound. This result is quite significant because temporal graphs do
  not admit spanners of size $\bigO(n)$ in general (Kempe, Kleinberg,
  Kumar, STOC 2000). In fact, they do not even always admit spanners
  of size $o(n^2)$ (Axiotis, Fotakis, ICALP 2016). Thus, our result
  implies that the obstructions found in these works, and more
  generally, any non-negligible obstruction, is statistically
  insignificant: \emph{nearly optimal spanners always exist in random
  temporal graphs}.

  All the above thresholds are sharp. Carrying the study of temporal
  spanners a step further, we show that pivotal spanners---\ie
  spanners of size $2n-2$ composed of two spanning trees glued at a single
  vertex (one descending in time, the other ascending subsequently)---exist \aas at $4\log n/n$, this threshold being also
  sharp. Finally, we show that optimal spanners (of size $2n-4$) also
  exist \aas at $p=4 \log n/n$. Whether this value is a
  sharp threshold is open; we conjecture that it is.

  For completeness, we compare the above results to existing results
  in related areas, including edge-ordered graphs, gossip
  theory, and population protocols, showing that our results can be
  interpreted in these settings as well, and that in some cases, they
  improve known results therein. Finally, we discuss an intriguing
  connection between our results and Janson's celebrated results on
  percolation in weighted graphs.
\end{abstract}
\medskip



\section{Introduction}

A temporal graph is a graph whose edges are present only at certain times. These graphs can be modeled in various ways, a classical option being as an edge-labeled graph $\G = (G, \lambda)$, where~$G$ is a standard graph and~$\lambda$ encodes the presence times of the edges.
Temporal graphs have been extensively studied in the past two decades, motivated by the modeling of dynamic networks in areas like social network analysis, communication networks, epidemics, transportation, and biology.
As theoretical objects, these graphs pose a number of fundamental questions. In particular, the redefinition of classical graph concepts in a temporal setting often leads to subtle, yet radical, differences. 
A canonical example is that of a \emph{temporal path}, which is a path running over a non-decreasing sequence of timestamps (or increasing, in the case of strict temporal paths).
Clearly, such paths are not symmetric, even with undirected edges, and they are also not transitive---the fact that vertex~$u$ can reach vertex~$v$ and $v$~can reach~$w$ does not imply that $u$~can reach~$w$.
Non-transitivity makes temporal paths quite different even from classical paths in directed graphs, with a strong impact on problems and algorithms.
Among the early examples, it was shown in~\cite{KKK02} that deciding whether a set of $k$ vertex-disjoint temporal paths exist between two given vertices is NP-complete, whereas the analog problem in static graphs is polynomial-time solvable.
Likewise, computing a maximum connected component based on temporal paths is essentially as hard as finding a maximum clique in static graphs~\cite{BF03}. Since then, many tractable problems have been shown to have intractable analogs in temporal graphs (see, e.g., \cite{akrida2020temporal, waiting-time, minimization, separators, diameter, mertzios2020computing, mertzios2021sliding}).

In this landscape, the status of spanning trees and other spanning structures remains unsettled.
In static graphs, a spanning tree is a connected spanning subgraph without cycle, and therefore consists of $n-1$~edges.
It always exists when the graph is connected, and computing one is straightforward.
In contrast, \textcite{KKK02} observe that the size of a minimum temporal spanner varies among (temporally connected) temporal graphs, and in some cases, only spanners of superlinear size exist.
For example, hypercubes can be time-labeled in such a way that none of their $\Theta(n \log n)$~edges can be removed without breaking temporal connectivity (\ie they are minimally connected).
More surprisingly, \textcite{AF16} construct an infinite family of minimally connected temporal graphs having $\Theta(n^2)$~edges. 
In other words, even \emph{sparse} analogs of spanning trees do not exist unconditionally in temporal graphs. (Incidentally, finding a minimum spanner is computationally difficult, namely APX-hard~\cite{Akr+17}.)

On the positive side, \textcite{CPS19} showed that temporal graphs whose underlying graph is complete always admit spanners of size~$\bigO(n \log n)$. 
Whether a spanner of size~$\bigO(n)$ always exists in complete temporal graph is still open.
However, a general result in gossip theory~\cite{Bumby1979} implies that $2n-4$~time~labels are required to achieve temporal connectivity in any underlying graph (provided only that adjacent labels are distinct).
Thus, spanners of size~$2n-4$ are optimal, and we call a spanner of size~$2n + o(n)$ \emph{nearly optimal}.

All these results were obtained for temporal graphs whose edges have exactly one presence time, \ie \emph{simple} temporal graphs. Due to their simplicity, these graphs prove to be good prototypes for studying connectivity problems. Moreover, the above results extend to general temporal graphs, either by containment (for negative results), or by the fact that the spanning property is preserved under the addition of time labels (for positive results). In the following, we restrict our attention to simple temporal graphs and refer the reader to~\cite{casteigts2012time} and~\cite{holme2019temporal} for background on general models.

\subsection{Contributions}

The present work investigates temporal reachability and temporal spanners from a \emph{probabilistic} point of view. To this end,
we study a simple and natural model of random temporal graphs which can be seen as a temporal analog of Erd{\H{o}}s–R{\'e}nyi random graphs.
More precisely, given a number of nodes~$n$ and a probability~$p$, a \emph{random simple temporal graph} (RSTG, for short) is obtained by taking an Erdős-Rényi random graph $(V, E) = G(n,p)$ and assigning to each edge~$e \in E$ a \emph{unique} presence time $\lambda(e) \in \{1, \ldots, \abs{E}\}$ according to a random permutation of all edges.
The reachability in this model only depends on the relative order of edge labels;
it can thus be equivalently defined by having $\lambda:E \to [0,1]$ assign to each edge a presence time chosen uniformly at random (and independently) from the unit interval (with probability~$1$, all edges receive distinct labels).
As will be discussed below, RSTGs are closely related to a number of other models.
For example, they are mathematically equivalent to random edge-ordered graphs, although the interpretation of labels as time motivates the study of specific questions.
RSTGs are also related to classical processes in gossip theory and population protocols.
In a certain sense, they \emph{encode} information pertaining to the ordering of interactions in these models.
A significant difference, that also distinguishes RSTGs from stochastic models of dynamic networks like edge-markovian evolving graphs~\cite{BCF11,clementi08}, is that there are no repeated interactions among the same nodes.
This creates dependencies between past and future events that make RSTGs typically more difficult to handle.

\subsection*{Overview of the results}

Inspired by classical results on Erd{\H{o}}s–R{\'e}nyi graphs, we investigate natural analogs of connectivity thresholds in RSTGs.
The fact that temporal reachability is neither transitive nor symmetric implies a number of different thresholds pertaining to gradual forms of connectivity. We consider the following properties: (1) a fixed node can reach another fixed node asymptotically almost surely (\aas); (2) \aas at least one node can reach all others; (3) a fixed node can \aas reach all the others; and (4) every node can reach all the others \aas (\ie the graph is \emph{temporally connected} \aas).

The first set of results in this paper is a complete characterization of these thresholds, which follow a strikingly regular set of incremental values as follows.
The first property occurs at $p=\log n / n$ (where $\log$ is the natural logarithm), the second and third occur at $p=2 \log n / n$, and the fourth at $p=3 \log n / n$.
All these thresholds are sharp; that is, for the threshold factors $c\in\{1,2,3\}$, the properties \aas do not hold at $p=(c-o(1)) \log n / n$ and hold \aas at $p=(c+o(1)) \log n / n$. 
These thresholds are summarized in \cref{tab:thresholds}.

\begin{table}
\centering
\begin{tabular}{|l|c|c|c|}
	\hline
	\textbf{Property} & \textbf{Shorthand} & \textbf{Sharp threshold} & \textbf{Reference} \\
	\hline
	Point-to-point Reachability & $\forall u\, \forall v \:\textbf{a.a.s.}\: u\leadsto v$
            & $\log n/n$ & \cref{th:pathUV} \\
    \hline
        First Temporal Source & {$\textbf{a.a.s.}\: \exists u\, \forall v\: u\leadsto v$}
    	& $2\log n / n$ & \cref{th:firstSource}  \\
	\hline
	Temporal Source &$\forall u \:\textbf{a.a.s.}\: \forall v\: u\leadsto v$
            & $2 \log n/n$ & \cref{th:avgSource} \\
	\hline
	Temporal Connectivity & $\textbf{a.a.s.}\: \forall u\, \forall v\: u\leadsto v$
            & $3\log n / n$ & \cref{th:tempConnectivity} \\
	\hline
\end{tabular}
\caption{Sharp thresholds for connectivity properties in random temporal graphs. 
The notation $u \leadsto v$ denotes the existence of a temporal path between $u$ and $v$.}
\label{tab:thresholds}
\end{table}

Subsequently, we consider the densities at which various types of spanners start to exist.
In particular, we are interested in the emergence of (5) nearly optimal spanners (of size~$2n+o(1)$);
(6) pivotal spanners (of size~$2n-2$), composed of two spanning trees glued at a single vertex, one descending in time, the other ascending subsequently;
and (7) optimal spanners (of size~$2n-4$).
Clearly, none of these structures can exist before $p=3 \log n /n$, because the corresponding graphs are not even temporally connected.
Interestingly, nearly optimal spanners emerge essentially as soon as the graph becomes temporally connected, \ie the threshold is also $p=3 \log n /n$.
As for pivotal spanners and optimal spanners, both exist \aas at $4 \log n /n$.
For pivotal spanners, this constitutes a sharp threshold. Whether this is also the case for optimal spanners is left open---we conjecture that it is.
These results are summarized in Table~\ref{tab:thresholds-spanner}.

\begin{table}
\centering
\begin{tabular}{|l|c|c|c|}
	\hline
	\textbf{Property} & \textbf{Edge count} & \textbf{Sharp threshold} & \textbf{Reference} \\
	\hline
	Nearly optimal spanner
         & $(2+o(1))n$
            & $3\log n/n$ & \cref{th:main} \\
    \hline
        \multirow{3}*{Optimal spanner} &
        \multirow{3}*{$2n-4$}
            & $\geq 3\log n/n$ & \cref{th:main} \\
            \cline{3-4}
           & & $\leq 4\log n/n$ & \cref{th:opt-span4} \\
            \cline{3-4}
           & & $=4\log n/n$ ? & \cref{conjecture} \\
    \hline
        Pivotal spanner
        & $2n-2$
            & $4\log n/n$ & \cref{th:pivotalSpan} \\
    \hline
\end{tabular}
\caption{Sharp thresholds for spanner properties in random simple temporal graphs.}
\label{tab:thresholds-spanner}
\end{table}

\subsection*{Significance of the results}

In static graphs, no distinction exists between properties (2), (3), and (4). 
If a node $u$ can reach all the others, then by transitivity and symmetry of the reachability relation, every node can reach all the others.
Furthermore, spanning trees of $n-1$~edges exist unconditionally.
In standard Erd{\H{o}}s–R{\'e}nyi graphs, all these properties  occur at the same point of the densification process, namely at~$p=\log n / n$ \cite{EvolutionRandomGraphs1960}.
As for Property (1), it occurs gradually before that point, without obeying a sharp threshold (see, e.g.,~\cite[Section 2.2]{FK16}).
In view of these results, the fact that the temporal versions of the four properties occur at three distinct (and sharp) thresholds is quite remarkable. This can be seen as a fine-grained measure of the discrepancy between static and temporal reachability.

Conceptually, our results regarding the existence of spanners also admit a quite remarkable interpretation.
Recall that the existence of linear spanners, and even subquadratic spanners, is not deterministically guaranteed in temporal graphs, as witnessed by infinite families of adversarial graphs in~\cite{KKK02} and~\cite{AF16} (respectively).
These \emph{obstructions} to the existence of linear (and even sparse) spanners establish a fundamental difference between static and temporal reachability.
Our results show that this difference does not hold probabilistically.
In particular, our result on the existence of nearly optimal spanners, 

\vskip2ex
\noindent
\textbf{\cref{th:main}.}~\textit{There is $\delta = \delta(n) \in o(1)$ such that
$p(n) = 3 \log n / n$ is a sharp threshold for an RSTG to admit a spanner of size at most $(2 + \delta)n$,
}
\vskip2ex

\noindent
combined with the fact that temporal connectivity itself arises at $p=3\log n / n$ (\cref{th:tempConnectivity}), establishes that the obstructions in~\cite{KKK02} and~\cite{AF16}---and in fact, all conceivable obstructions---are statistically insignificant.
In other words, nearly optimal spanners almost surely exist if the graph is temporally connected,
and by analogy to static graphs, the universality of sparse spanners can be recovered at least in a probabilistic sense.

Drawing an analogy to \emph{directed} static graphs, our characterization of a threshold at $p=4 \log n / n$ for the existence of pivotal spanners of size $2n - 2$ is to be contrasted to the unconditional existence of analog constructions in strongly connected static graphs by Kosaraju-Sharir's principle~\cite{sharir1981strong}.
Indeed, the fact that this threshold is different from temporal connectivity (at $p=3 \log n / n$) implies that such constructions are not universal, even in a probabilistic sense.
Finally, whether optimal-size spanners (\ie spanners of size $2n - 4$) are probabilistically universal remains open.
Our upper bound at $p = 4 \log n / n$ proves that this is at least close to being true.

\subsection*{Related models and questions}

The RSTG model studied in this paper is a natural temporal analog of the Erd{\H{o}}s–R{\'e}nyi random graph model.
Another analog considered in the literature is the model in which a random temporal graph is a sequence of independent Erd{\H{o}}s–R{\'e}nyi random graphs \cite{chaintreau2007diameter,clementi2009broadcasting}.
This model and its generalizations have been studied in the literature from different perspectives \cite{clementi08, avin2008explore, grindrod2010evolving, BCF11, zhang2017random, akrida2020fast}.

Our results can be related to a number of known results in the fields of gossip theory, population protocols, edge-ordered graphs, and random weighted graphs.
Reviewing these connections is another contribution of this paper,
and we dedicate \cref{sec:other-models} to it.
The main observations can be summarized as follows.

A classical question in gossip theory asks the time it takes for a rumor to spread among a set of agents through random phone calls. Gossip models where interactions occur at random without repetition correspond to our model. However, in many cases, gossip protocols consider repeated interactions. We show that, even in this case, the early evolution of reachability is sufficiently close to RSTGs for the known thresholds to be essentially equivalent.
A similar observation holds for some models of population protocols (see, e.g.,~\cite{DBLP:conf/podc/AngluinADFP04}), where the pattern of interactions is specified by a sequential scheduler.
Unless this scheduler is adversarial, it is generally assumed, for analysis, that the interactions are chosen uniformly at random with repetitions. 
Surprisingly, few connections have been made between gossip theory and population protocols, some analyses being re-discovered from time to time in both areas.
Following the notations in~\cite{van2017reachability}, we will refer to the setting with repeated interactions as the \texttt{ANY} model and to the one without repetition as the \texttt{CO} (``call once'') model, irrespective of the areas in which they were considered. 

Early works in the \texttt{ANY} model appear in a sequence of three papers from the '70s
\cite{moon1972random, boyd1979random, haigh1981random},
all under the title ``\emph{Random exchanges of information}'' but with different authors.
These papers are concerned with the time it takes for a fixed agent to receive a potential rumor from every other agent, which by symmetry corresponds to all the agents receiving a fixed single rumor---our third property.
They also consider the time it takes for all agents to receive each other's rumors---our fourth property (temporal connectivity). 
Asymptotics for the \emph{expected} number of interactions are given in these papers, namely $n \log n$ for the former (which indeed corresponds to the number of edges in an RSTG with $p=2 \log n / n$), and $1.5 n \log n$ for the latter (corresponding to $p=3 \log n / n$).
In more recent works, the former was shown to be concentrated around its expected value (the best known result appears to be by~\textcite{Emmanuelle}).
As for temporal connectivity in the \texttt{ANY} model, it is already known that it does not exceed its expected value \aas (see~\textcite{burman2021time} for the currently best bound\footnote{\cite{burman2021time} claims a weaker result (see Lemma~2.9 therein), but an intermediate step of the proof directly implies the upper bound of $(1.5+o(1))n\log n$.}).
In this paper, we show how our results can be translated to the \ANY{} model. Doing so, we obtain concentration results (\ie sharp thresholds) for these two properties, and beyond, for all others. 
Allowing for repetitions in the \ANY{} model actually makes things easier to handle than in the \CO{} model (the one equivalent to RSTGs), where handling dependencies between past and future calls requires additional efforts. According to~\cite{van2017reachability}, not much was known so far in that model, for the same reason. Therefore, our results on RSTGs fill a gap, by showing, among other things, that information propagates at essentially the same speed with or without repetitions.

The fact that RSTGs relate to processes in gossip theory and population protocols is not surprising. Less intuitively,
the gradual levels of temporal reachability that we observe in this work turn out to be also related to percolation processes in randomly edge-weighted complete graphs, as studied by \textcite{Janson99}, provided one interprets the edge weights as waiting times.
This connection is intriguing, because Janson's model is not dynamic: the availability of the edges and the value of weights are invariant in time.
Thus, in particular, the time required to cross an edge (\ie the edge weight) is independent of all other
factors, whereas in temporal graphs and gossip models two temporal paths, in general, need to wait different amounts of time to cross the same edge.
Yet, some of our thresholds for temporal reachability are similar to percolation thresholds in this model;
we discuss the reasons in \cref{sec:first-passage-percolation}.
This agreement indicates that the essential features defining the connectivity properties of random graphs (weighted or temporal) coincide for a very wide range of models, even though the required proof techniques differ between them.

Finally, while the model of edge-ordered graphs is closest to RSTGs,
research here has focused more on longest monotone paths and walks than on characterizing the reachability that results from monotone paths.
Yet, the known results in randomly edge-ordered graphs translate directly to RSTGs, and vice versa.

\subsection{Overview of the techniques}

The following content is intended to help the reader navigate through the ideas and techniques used in this paper.
In general, in order to prove that an RSTG with edge probability~$p$
has (or lacks) some property,
we instead consider an RSTG with edge probability~$1$ and each edge label randomly chosen from~$[0, 1]$.
Then, we study the properties of the temporal subgraph induced by all the edges whose labels are at most $p$.

A preliminary result in \cref{sec:2hop} establishes that \aas every vertex can reach every other vertex
via a temporal path of length at most $2$ when $p\geq \log n / \sqrt{n}$.
This result is much weaker than the subsequent ones, but serves as a warm-up to the reader and allows us to subsequently assume that all relevant labels are contained in $[0, t]$ for some $t=o(1)$.

A concept that is quite versatile in the paper is that of a \emph{foremost tree}, \ie a tree consisting of gluing together a (prefix-stable) set of earliest arrival paths from a given vertex, called a temporal source, to all others. Foremost trees were considered for different purposes in~\cite{BFJ03} and~\cite{LL14}, and a similar concept also appears in~\cite{kleinberg2006algorithm} (Problem 4.18).
\cref{sec:foremostTree} offers a general study of foremost trees in RSTGs. In particular,
we define a suitable martingale for estimating the time by which $k$ vertices have been reached from a given vertex $u$,
which allows us to apply Azuma's inequality and derive concentration results.
Precisely, we consider the sequence of waiting times for the respective next vertex to be reached from the source $u$
and turn this into a martingale by subtracting the respective expected values.
As the probability distribution of these waiting times has a one-sided long tail, 
we are able to achieve improved concentration bounds by analyzing capped versions of these variables (Lemma 4.6).
This technique allows us to obtain a sharp threshold for the appearance of the first temporal source.
On the other hand, these caps are chosen such that \aas all elements of the sequence agree with their uncapped versions.
By this technique, we also obtain sharp thresholds for the event that a fixed vertex reaches a particular number of other vertices (\cref{th:reachability}), which
yields, in turn, sharp thresholds for temporal reachability between
two fixed vertices and for a fixed vertex being a temporal source.

In order to characterize the threshold for temporal connectivity, 
we start by considering the moment when the first
temporal sink arises (same as for temporal sources, by symmetry).
Naively, one might be tempted to wait for the same amount of time subsequently for that vertex to reach all others back as a temporal source, which gives temporal connectivity by the pivoting principle (more on this below). However, this implies waiting for $4 \log n / n$ overall, whereas the actual threshold for temporal connectivity occurs at $3 \log n / n$ and requires a more general approach.
Starting with the lower bound, we show that once the first temporal sink has appeared, an additional $\log n / n$ wait is needed for all the other vertices to get at least one more edge,
which is necessary for them to become temporal sinks in turn (\cref{lm:tconnLower}). The fact that temporal connectivity is equivalent to having all the vertices being temporal sinks (or sources, for that matter) implies, in turn, that $3 \log n / n$ is indeed a lower bound.
As for the upper bound, we use the fact that the majority of vertices become temporal sinks at $2 \log n / n$ (soon after the first temporal sink appears),
after which an additional $\log n / n$ wait guarantees that
\aas each remaining vertex
becomes reachable from everywhere
via at least one of these sinks (\cref{lem:connUpperBound}).

To establish the existence of small temporal spanners in \cref{sec:temporalSpanners}, we give explicit constructions.
For pivotal spanners, we require a pivot vertex that is 
a temporal sink before $2 \log n / n$ and a temporal source after that, but before $4 \log n / n$. The lower bounds are based on the above ones for reachability.
Truly optimal spanners are built in a similar way, but instead of a single pivot, they rely on the concept of a pivotal square whose existence is guaranteed by time $4 \log n / n$.
The result that $3 \log n / n$ is a sharp threshold for the existence of nearly optimal spanners requires a finer understanding of the height of foremost trees.
To this end, we compute inductively the approximate distribution of vertex heights in a foremost tree, which implies a logarithmic upper bound on the number of edges in the foremost path
between any two vertices (\cref{sec:foremostTreeHeight}).
Then, we divide the time into three intervals of length $\log n / n$ each.
In the first interval, we find a temporal tree through which most vertices can reach a vertex that will act almost as a pivot.
Analogously, in the last time interval, we build a temporal tree through which this vertex can reach most vertices.
The remaining vertices are individually connected to the pivot by earliest and latest temporal paths whose edges are distinct from the edges of the trees.
Finally, we separately connect all pairs of vertices that still cannot reach each other via the pivot. The contribution of these additional paths to the spanners is shown to be asymptotically negligible, which implies an overall size of $(2 + o(1))n$ edges.

\subsection{Organization of the document}

The main definitions are given in Section~\ref{sec:preliminaries}, including basic concepts and notations in temporal graphs.
Section~\ref{sec:2hop} presents an analysis that obtains weaker bounds than our actual results using simpler arguments; while proposed as a warm-up, these bounds are also utilized in subsequent analysis. 
Section~\ref{sec:foremostTree} develops our main tools, by describing an algorithm that grows a \emph{foremost tree} (\ie a tree of time-optimal temporal paths) in a given temporal graph and analyzing the execution of this algorithm on a typical RSTG.
Section~\ref{sec:tempProp} applies the tools from the previous section to obtain the main results, namely the claimed thresholds in RSTGs.
Section~\ref{sec:other-models} describes the adaptation of our analyses to models coming from gossip theory and population protocols.
We also explain in detail how our results strengthen known results in these fields.
Finally, Section~\ref{sec:conclusion} concludes with some remarks and open questions.

\section{Preliminaries}
\label{sec:preliminaries}

\subsection{Temporal graphs}

A \emph{temporal graph} $\G$ is a pair $(G, \lambda)$, where
$G=(V,E)$ is a simple undirected graph and $\lambda$ is a function that
assigns to every edge $e$ of $G$ a finite set of elements from some totally ordered set $\mathbb{T}$. 
The graph $G$ is called the \emph{underlying graph} of $\G$, the elements of $\lambda(e)$ are called \emph{time labels of $e$}, and the pairs $(e, t)$, where $t \in \lambda(e)$ are called \emph{time-edges}.
The temporal graph $\G$ is \emph{simple} if every edge of $G$ is assigned exactly one time label, \ie $|\lambda(e)| = 1$ for every $e \in E$. 
We will mostly focus on simple temporal graphs, in which case
with a slight abuse of notation we write $\lambda(e)$ to denote the unique time label of an edge $e \in E$
and refer to the time-edge $(e, \lambda(e))$ simply as edge $e$.

A \emph{temporal $(u,v)$-path} or a \emph{temporal path} from $u$ to $v$ in $\G$ is a path $u = u_0, u_1, \ldots, u_{\ell} = v$ in $G$ such that
$\lambda_1 \le \lambda_2 \le \cdots \le \lambda_{\ell}$, where 
$\lambda_i \in \lambda(u_{i-1}u_i)$.
The labels $\lambda_1$ and $\lambda_{\ell}$ are called the \emph{departure time} and \emph{arrival time} of the temporal path, respectively.
A temporal $(u, v)$-path is called a \emph{foremost} $(u, v)$-path if it 
has the earliest arrival time among all temporal $(u, v)$-paths. 
Symmetrically, a temporal $(u, v)$-path is called a \emph{hindmost} $(u, v)$-path if it 
has the latest departure time among all temporal $(u, v)$-paths. 
A vertex $v \in V$ is called a \emph{temporal source of $\G$} if every other vertex
can be reached from $v$ by a temporal path.
Similarly, vertex $v$ is called a \emph{temporal sink of $\G$} if every other vertex can reach $v$ by a temporal path.
The temporal graph $\G$ is \emph{temporally connected} if each vertex can reach every
other vertex by a temporal path.
A temporal graph $\G'=(G', \lambda')$ is a \emph{temporal subgraph}
of $\G$ if $G'$ is a subgraph of $G$ and $\lambda'(e) \subseteq \lambda(e)$ for every edge $e$ of $G'$.
Furthermore, if $V(\G') = V(\G)$ and $\G'$ is temporally connected, then $\G'$ is called a \emph{(temporal) spanner} of $\G$. All the temporal graphs we consider in this work are simple. As a result, spanners are uniquely determined by the set of underlying edges that they contain.

It is a straightforward consequence of a classical result in gossiping theory (see, e.g., \cite{bollobas2006art}) that any
spanner of an $n$-vertex simple temporal graph has at least $2n-4$ edges. (The same holds more generally for time-edges in non-simple temporal graphs.) 
For this reason, we will refer to spanners with exactly $2n-4$ edges as \emph{optimal}.
Another important type of temporal spanners is \emph{pivotal} temporal spanners. Such a spanner
is the union of two disjoint spanning trees rooted at the same vertex $v$, called \emph{pivot}, 
such that for some time $t$ all vertices can reach $v$ in one of the trees before time $t$ and $v$ can reach all vertices in the other tree after time $t$.
Clearly, a pivotal spanner contains $2n-2$ edges.
Finally, we define the \emph{restriction of $\G = (G, \lambda)$ to a time interval~$[a, b]$},
denoted by $\G_{[a,b]}$, as the temporal graph $\G' = (G', \lambda')$
where $\lambda'(e) = \lambda(e) \cap [a,b]$
and $G' = (V, \{e \in E \mid \lambda'(e) \neq \emptyset\})$.

\subsection{Random simple temporal graphs (RSTGs)}

The most basic and commonly studied model of random static graphs is the Erd{\H{o}}s–R{\'e}nyi model
denoted by $G_{n,p}$, in which there are $n$ vertices and every two of them are connected by an edge 
independently with probability $p \in [0,1]$.
A possible way of turning $G_{n,p}$ into a model of random \emph{temporal} graphs is by choosing uniformly at random a total order on the set of edges.
More specifically, a random temporal graph $(G,\lambda)$ can be obtained by first drawing the underlying graph $G$ from $G_{n,p}$ and then 
drawing $\lambda$ uniformly at random from all bijections $E(G) \rightarrow \{ 1, 2, \ldots, |E(G)| \}$.
For technical convenience, however, we work with a slightly different but equivalent model which we denote by $\F_{n,p}$.
In this model, the underlying graph is drawn from $G_{n,p}$ as before, but the labeling function $\lambda$ now maps every edge of $E(G)$ to an independent and uniformly distributed label in the real interval~$[0, 1]$.
Since with probability~1 no two time labels are equal, this induces a total order on the
edges, and by symmetry, all orders are equally likely; thus $\F_{n,p}$ is equivalent to the above model
of random edge orderings of $G_{n,p}$. (This equivalence has been used in some recent works on edge-ordered graphs \cite{angel2020long,LL14}.)
We refer to such a graph as a random simple temporal graph (RSTG), or just a random temporal graph, which is simple by default.
Throughout the paper we will assume, without loss of generality, that in the graphs from $\F_{n,p}$
that we work with all edges have pairwise different labels.

Instead of working with $\F_{n,p}$ directly,
it will often be convenient to first draw a temporal graph $\G = (G, \lambda)$ from $\F_{n,1}$ (resulting in $G$ being a complete graph),
and to then consider $\G' = (G', \lambda') =  (G, \lambda)_{[0, p]}$ in which all edges with time labels greater than $p$ are deleted.
Observe that $G' \sim G_{n,p}$ and that each edge label $\lambda'(e)$ is uniformly distributed on $[0,p]$.
In other words, $\G'$ is distributed according to $\F_{n,p}$ up to multiplying all labels by $\frac{1}{p}$. Furthermore, since the temporal properties that we study depend only on the relative order of edge labels and not on their absolute values, without loss of generality, we will omit the multiplicative factor $\frac{1}{p}$ and will work directly with $\G'$. 
Similarly, for any $0 \leq a \leq b \leq 1$, up to rescaling $x \mapsto \frac{x-a}{b-a}$, the graph $(G, \lambda)_{[a, b]}$ is distributed 
according to $\F_{n,p}$, where $p = b-a$.

\subsection{Degrees of reachability}

A graph property is said to hold asymptotically almost surely (a.a.s.) if the probability that it is satisfied converges to $1$ as $n$ goes to infinity.
A function $p = p(n)$ is a \emph{sharp threshold} for a temporal graph property $\Pp$ if for every $\varepsilon > 0$ a random temporal graph from $\F_{n,(1-\varepsilon)p}$ does not have property $\Pp$ \aas, while a random temporal graph from $\F_{n,(1+\varepsilon)p}$ has $\Pp$ a.a.s.

We study the following fundamental properties related to temporal reachability:
\begin{enumerate}
		\item \textbf{Point-to-point Reachability}. The property that a fixed pair $(u, v)$ of vertices has a temporal path from $u$ to $v$.

		\item \textbf{First Temporal Source}. The property that $\G$ contains at least one temporal source.
		
		\item \textbf{Temporal Source}. The property that a fixed vertex $v$ is a temporal source, \ie every vertex in $\G$ can be reached from $v$ by a temporal path.
		
		\item \textbf{Temporal Connectivity}. The property that $\G$ is temporally connected, which is equivalent to the property that all vertices in $\G$ are temporal sources.
		
		\item \textbf{Nearly Optimal Temporal Spanner}. The property that $\G$ contains a temporal spanner with $(2+o(1))n$ time-edges.
		
		\item \textbf{Optimal Temporal Spanner}. The property that $\G$ contains a temporal spanner with $2n - 4$ time-edges.

		\item \textbf{Pivotal Temporal Spanner}. The property that $\G$ contains a pivotal temporal spanner.
		
\end{enumerate}

\noindent
In Section~\ref{sec:tempProp}, we establish sharp thresholds for all properties in RSTGs (see \cref{tab:thresholds}) except for the sixth, for which we prove an upper bound that we conjecture to be a sharp threshold as well.

\section{Warm-up: 2-hop approach}\label{sec:2hop}

This section presents a simple argument to derive upper bounds on temporal source related properties. 
The main statement says that for any $p \geq 3\sqrt{\log{n}/n}$ a fixed vertex $v$ is a temporal source 
in a random temporal graph from $\F_{n,p}$ {a.a.s.} The simplicity of this approach comes from the restriction
that we consider only temporal paths of length~$2$. While these bounds are far from optimal, they turn out to be useful for subsequent analyses.
Note that here and in the following, we often implicitly assume $n$~to be at least an appropriate constant.

\begin{lemma}\label{thm:2hop-result}
	Let $\alpha \geq 3$ and let $p = \alpha\sqrt{\log{n}/n}$.
	Then an arbitrary vertex of $(G, \lambda) \sim \F_{n,p}$ is a temporal source 
	with probability at least $1 - n^{-\alpha^2/4 + 1}$.
\end{lemma}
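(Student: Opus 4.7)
The plan is a straightforward two-hop union bound. Fix the vertex $v$ in question; I will show that for every other vertex $u$, \aas some intermediate vertex $w$ yields a length-$2$ temporal path $v \to w \to u$, and then take a union bound over the $n-1$ choices of $u$. Restricting attention to paths of length exactly $2$ (thereby discarding the direct edge $vu$ as well as longer paths) loses a lot of information, which is what makes the resulting bound only a warm-up.

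For the key computation, fix $u \neq v$ and, for each $w \in V \setminus \{u,v\}$, let $E_w$ denote the event that the edges $vw$ and $wu$ are both present and satisfy $\lambda(vw) \leq \lambda(wu)$. Independence of edge presence and of the uniform labels on $[0,1]$ gives $\prob(E_w) = p^2/2$, and the events $\{E_w\}_{w \neq u,v}$ involve pairwise disjoint edge sets, hence are mutually independent. Consequently
\[
\prob(v \not\leadsto u) \leq \bigl(1 - p^2/2\bigr)^{n-2} \leq \exp\!\bigl(-(n-2)p^2/2\bigr) = n^{-\alpha^2 (n-2)/(2n)}.
\]
For $n \geq 4$ one has $(n-2)/(2n) \geq 1/4$, so the exponent is at least $\alpha^2/4$. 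A union bound over the $n-1$ possible targets $u$ then yields $\prob(v \text{ not a temporal source}) \leq (n-1)\,n^{-\alpha^2/4} \leq n^{-\alpha^2/4 + 1}$, as claimed.

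There is no genuine obstacle here: the argument is essentially a single independence calculation followed by a union bound, and the only place care is required is the arithmetic step $(n-2)/(2n) \geq 1/4$, which is precisely what forces the $n \geq 4$ hypothesis in the statement. Note that the bound obtained is loose — the true exponent is closer to $\alpha^2/2$ — but this slack is immaterial, since the lemma is invoked later only as a convenient preliminary estimate, with the sharper thresholds coming from the foremost-tree machinery developed in the next section.
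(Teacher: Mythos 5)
Your proof is correct and follows essentially the same route as the paper's: a two-hop argument bounding $\prob(v \not\leadsto u)$ by $(1-p^2/2)^{n-2}$ via independent intermediate-vertex events, followed by the estimate $(n-2)/(2n) \geq 1/4$ for $n \geq 4$ and a union bound over the $n-1$ targets. No gaps.
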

\begin{proof}
	Let $x$ be an arbitrary vertex in $(G, \lambda)$.
	For two distinct vertices $y, z$ of $(G, \lambda)$ that are also different from $x$,
	we denote by $R_z$ the event that $x$ reaches $z$, and by $S_{yz}$ we denote the event that 
	$x$ reaches $z$ in exactly two steps via $y$. Notice that $\prob\left[\overline{R_z}\right]$ is the same for every 
	$z \in V(G) \setminus \{x\}$, and we denote this probability by $p_1$. 
	Similarly, $\prob[S_{yz}]$ is the same for all pairs $z,y \in V(G) \setminus \{x\}$, $z \neq y$, and is equal to $p^2/2$. 
	Hence, denoting by $\gamma$ the probability that $x$ is a temporal source in $(G, \lambda)$ and by using
	the union bound, we have
	\[
		\gamma = 1 - \prob \Big( \bigcup_{z\neq x} \overline{R_z} \Big) 
		\geq 1 - \sum_{z \neq x} \prob\left[\overline{R_z}\right]
		= 1 - (n-1)p_1.
	\]
	Furthermore, for every $n \geq 4$, we have
	\begin{equation*}
	\begin{split}
		p_1 = \prob\left[\overline{R_z}\right]
		& \leq \prob\Bigl[\bigcap_{y\neq x,z} \overline{S_{yz}}\Bigr] 
		= \prod_{y\neq x,z} \prob\left[\overline{S_{yz}}\right]
		= \prob\left[\overline{S_{yz}}\right]^{n-2} \\
		& = (1- p^2/2)^{n-2} \leq e^{-\frac{p^2(n-2)}{2}}
		= \left( \frac{1}{n} \right)^{\alpha^2 \cdot \frac{n-2}{2n}}
		\leq \left( \frac{1}{n} \right)^{\frac{\alpha^2}{4}},
	\end{split}
	\end{equation*}
	where we used the fact that all $S_{yz}$ are independent.
	Consequently, we derive the desired conclusion
	\[
		\gamma \geq 1 - (n-1) n^{-\alpha^2/4} \geq  1 - n^{-\alpha^2/4 + 1}.
              \]
\end{proof}

If $\alpha=3$ in the above lemma, we obtain that for $p = 3\sqrt{\log{n}/n}$ with probability at least $1 - 1/n^{5/4}$ an arbitrary vertex in $(G, \lambda) \sim \F_{n,p}$ is a temporal source. Similarly, if $\alpha=4$, by applying the union bound, we obtain that for $p = 4\sqrt{\log{n}/n}$ with probability at least $1 - 1/n^2$ \emph{every} vertex in $(G, \lambda) \sim \F_{n,p}$ is a temporal source, \ie $(G, \lambda)$ is temporally connected. However, in the subsequent analysis we will use the following corollary of the lemma obtained by taking $\alpha = \sqrt{\log n}$ and applying the union bound.

\begin{corollary}\label{cor:2hopTempConn}
	Let $p = \frac{\log{n}}{\sqrt{n}}$. Then, $(G, \lambda) \sim \F_{n,p}$ is temporally connected with probability at least $1-n^{-\frac{\log{n}}{4} + 2}$.
\end{corollary}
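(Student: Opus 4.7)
The plan is to apply \cref{thm:2hop-result} with the specific choice $\alpha = \sqrt{\log n}$ and then union-bound over all vertices. Let me sketch the steps.

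First, I would verify that the choice $\alpha = \sqrt{\log n}$ matches the hypothesis of the corollary: indeed, $\alpha \sqrt{\log n / n} = \sqrt{\log n} \cdot \sqrt{\log n / n} = \log n / \sqrt n = p$, as required. The precondition $\alpha \geq 3$ of \cref{thm:2hop-result} holds for all sufficiently large $n$ (concretely, $n \geq e^9$); for smaller $n$ the claimed lower bound $1 - n^{-\log n /4 + 2}$ is either trivial or non-positive, so the corollary holds vacuously.

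Next, I would apply \cref{thm:2hop-result} to deduce that any \emph{fixed} vertex $v$ of $(G,\lambda) \sim \F_{n,p}$ is a temporal source with probability at least
\[
1 - n^{-\alpha^2/4 + 1} = 1 - n^{-\log n /4 + 1}.
\]
The temporal graph is temporally connected exactly when \emph{every} vertex is a temporal source, so by the union bound over the $n$ vertices, the probability that some vertex fails to be a temporal source is at most
\[
n \cdot n^{-\log n/4 + 1} = n^{-\log n/4 + 2},
\]
yielding the desired bound.

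The argument is essentially a one-line derivation from the preceding lemma, so there is no real obstacle — the only thing worth checking is the regime of $n$ where $\alpha \geq 3$ holds, which is handled by observing that the claimed bound becomes trivial in the complementary (finite) range.
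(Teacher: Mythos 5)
Your derivation is exactly the paper's own: the corollary is obtained by taking $\alpha = \sqrt{\log n}$ in \cref{thm:2hop-result} and union-bounding over the $n$ vertices, which turns the per-vertex failure probability $n^{-\log n/4+1}$ into the stated $n^{-\log n/4+2}$. Your extra remark about the regime where $\alpha \geq 3$ fails is a reasonable (if slightly informal) way to handle small $n$, which the paper does not even bother to address.
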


\section{Foremost tree evolution}\label{sec:foremostTree}

Let $\G = (G, \lambda)$ be a random temporal graph from $\F_{n,1}$.
Consider the probability that a fixed vertex $v$ reaches another fixed vertex
$u$ in $\F_{n,p}$.
This is equal to the probability that 
the temporal subgraph $\G_{[0, p]}$ contains
a temporal $(v,u)$-path $P$.
The latter is equivalent to the fact
that the arrival time of $P$ in $\G$ is at most $p$.
Therefore, the estimation of the parameter $p$ for temporal reachability
from $v$ to $u$ can be reduced to the estimation of the arrival time of a
foremost temporal $(v,u)$-path in $\G$.
In case of the Temporal Source property, we are interested in the smallest value of $p$ such that 
a given vertex $v$ is a temporal source in $\G_{[0, p]}$, or equivalently,
that any vertex in $\G$ can be reached from $v$ by time $p$ (notice that since $\G$ is a complete temporal graph, every vertex in $\G$ is reachable from $v$, e.g., by a single-edge path). A minimal temporal subgraph 
that preserves foremost reachabilities from $v$ to all other vertices reachable from $v$ is called 
\emph{foremost tree for $v$} (formally defined below), and  we will be interested in estimating the smallest $p$ such that $\G$ contains a foremost tree for $v$ with time labels of its edges not exceeding $p$.
We proceed by defining formally the relevant notions.

Let $(T, \lambda')$ be a temporal graph, where $T$ is a tree, and let $u$ be a vertex in $T$.
If $u$ is a temporal source (resp.\ temporal sink) in $(T, \lambda')$, then we say
that $(T, \lambda')$ is an 
\emph{increasing temporal tree} (resp.\ \emph{decreasing temporal tree}) rooted at $u$.
A temporal subgraph $(T, \lambda')$  of $\G$ is called a \emph{partial foremost tree for $u$}
in $\G$ if $(T, \lambda')$ is an increasing temporal tree rooted at $u$ and
for every vertex $v$ of $T$, the temporal $(u,v)$-path in $(T, \lambda')$ is a foremost
$(u,v)$-path in $\G$.
Symmetrically, a temporal subgraph $(T, \lambda')$  of $\G$ is called a \emph{partial hindmost tree for $u$} in $\G$ if 
$(T, \lambda')$ is a decreasing temporal tree rooted at $u$ and
for every vertex $v$ of $T$, the temporal $(v,u)$-path in $(T, \lambda')$ is a hindmost
$(v,u)$-path in $\G$.
A partial foremost tree (resp.\ partial hindmost tree) $(T, \lambda')$ for $u$ in $\G$ 
is called \emph{foremost tree} (resp.\ \emph{hindmost tree}) for $u$ in $\G$, 
if $T$ contains all vertices that are reachable from $u$ (resp.\ all vertices that can reach $u$) in $\G$.

Let $(G, \lambda)$ be a simple temporal graph with time labels from an interval $[a,b]$.
Let $\lambda^r_{[a,b]} : E(G) \rightarrow [a,b]$ be the mapping that mirrors the time labels at the middle point of $[a,b]$, \ie $\lambda^r_{[a,b]}(e) := b - \lambda(e) + a$. Then $(G, \lambda^r_{[a,b]})$ is also a simple temporal graph with time labels in $[a,b]$, but the order of time labels is reversed with respect to $(G, \lambda)$.
In particular, any temporal $(u,v)$-path in $(G, \lambda)$ is a temporal $(v,u)$-path in $(G, \lambda^r_{[a,b]})$, and an increasing (resp.\ decreasing) temporal tree rooted at some vertex $u$ in $(G, \lambda)$ is a decreasing (resp.\ increasing) temporal tree rooted at $u$  in $(G, \lambda^r_{[a,b]})$.
Also, any foremost $(u,v)$-path in $(G, \lambda)$ becomes a hindmost $(v,u)$-path
in $(G, \lambda^r_{[a,b]})$ and vice versa; hence any (partial) foremost tree for $u$ in $(G, \lambda)$ is a (partial) hindmost tree for $u$ in $(G, \lambda^r_{[a,b]})$ and vice versa.
Furthermore, since $\lambda^r_{[a,b]}(e)$ is a bijection, if $\G \sim \F_{n,1}$ and $(G', \lambda) = \G_{[a,b]}$, then the graphs 
$(G', \lambda)$ and $(G', \lambda^r_{[a,b]})$ have the same distribution.
Due to this relation, in the present section, we will study only foremost paths and trees, but analogous results also hold for hindmost paths and trees. We will state specific results for hindmost paths and trees if they are needed subsequently.

The main purpose of the present section is to analyze temporal and structural properties of a typical foremost tree in $\F_{n,1}$.
In \cref{sec:foremostAlg} we present an algorithm for constructing a foremost tree for a given temporal source.
In \cref{sec:foremostGrowth} we analyze the execution of the algorithm on a random instance from $\F_{n,1}$ to estimate the speed of growth of the foremost tree.
We utilize these results in \cref{sec:tempProp} for obtaining sharp thresholds.

\subsection{Foremost tree algorithm}
\label{sec:foremostAlg}

We start by presenting an algorithm that, given a temporal graph and a source%
\footnote{For the sake of simplicity, we assume that the input vertex is a source; however, after trivial adjustments in the pseudocode of the algorithm will also work for any input vertex} vertex $v$ 
in the graph constructs a foremost tree for $v$.
The algorithm is similar to Prim's algorithm for a minimum spanning tree in static graphs with the only difference being
that the next edge to be added to the tree is chosen as the edge with minimum label
\emph{among those that extend the current tree to an increasing temporal tree}.
Foremost trees have been previously studied by \textcite{BFJ03}, as well as by \textcite{LL14}. A similar idea also appears implicitly in Problem 4.18 of the classic algorithms book by \textcite{kleinberg2006algorithm}.

\begin{algorithm}[H]
	\caption{\textsc{Foremost Tree}}

	\begin{algorithmic}[1]
		\Require{Simple temporal graph $(G, \lambda)$; temporal source $v$ in $(G, \lambda)$.}
		\Ensure{Foremost tree for $v$.}
			
		\State{$T_0 := (\{v\}, \emptyset)$}
			
		\For{$k := 1$ to $n-1$}
			\State{Let $S_k$ be the set of edges in $G$ with one endpoint in $V(T_{k-1})$ and the other
			endpoint \newline\hspace*{1cm} in $V(G) \setminus V(T_{k-1})$}
			\State{$e_k := \argmin \{ \lambda(e) \mid e \in S_k \text{ and } T_{k-1} \cup \{e \} \label{line:ek}
			\text{ is an increasing temporal tree rooted} \newline\hspace*{1cm} \text{ at } v \}$}
			\State{$T_k := T_{k-1} \cup \{ e_k \}$}
		\EndFor

		\noindent	
		\Return{$(T_{n-1}, \lambda')$, where $\lambda'$ is the restriction of $\lambda$ to the edges of 
		$T_{n-1}$}
	\end{algorithmic}
	\label{alg:foremostTree}
\end{algorithm}

In the next lemma we prove the correctness of the algorithm and the easy but
key fact that the time labels of the tree edges monotonically increase in the order in which they are added.

\begin{lemma}\label{lem:foremostTree}
	Let $(G, \lambda)$ be a simple temporal graph and $v$ be a temporal source in $(G, \lambda)$. Then
	\begin{enumerate}[(i)]
		\item \cref{alg:foremostTree} constructs a foremost tree for $v$ in $(G, \lambda)$;
		\item \label{lambda_inequality} $\lambda(e_1) \leq \lambda(e_2) \leq \cdots \leq \lambda(e_{n-1})$.
	\end{enumerate}
\end{lemma}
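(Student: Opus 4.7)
The plan is to prove (i) and (ii) simultaneously by induction on $k$, with the strengthened inductive hypothesis that $T_k$ is an increasing temporal tree rooted at $v$, that for every $u \in V(T_k)$ the unique $(v,u)$-path in $T_k$ is a foremost temporal $(v,u)$-path in $(G,\lambda)$, and that $\lambda(e_1) \leq \cdots \leq \lambda(e_k)$. The base case $k=0$ is immediate. In the inductive step I need to verify three things: that the set of admissible edges is nonempty (so $e_k$ is well defined), that the foremost-path property extends to the newly attached vertex, and that $\lambda(e_k) \geq \lambda(e_{k-1})$.

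Non-emptiness will follow from the assumption that $v$ is a temporal source: pick any $u \notin V(T_{k-1})$, take an increasing temporal $(v,u)$-path $v = w_0, \ldots, w_\ell = u$ in $(G,\lambda)$, and let $e = w_{i-1}w_i$ be its first edge leaving $V(T_{k-1})$. Then $e \in S_k$; using the inductive hypothesis that $T_{k-1}$ contains a foremost $(v,w_{i-1})$-path, whose arrival time is at most $\lambda(e)$ (as witnessed by the prefix $w_0,\ldots,w_{i-1}$), and noting that this arrival time equals the maximum label along that in-tree path (since it is increasing), appending $e$ preserves the increasing-tree property, so $e$ is an admissible choice.

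For the foremost-path property at the new vertex $w$ attached by $e_k$, my strategy is contradiction: suppose some temporal $(v,w)$-path $P'$ in $(G,\lambda)$ has arrival time strictly less than $\lambda(e_k)$. Let $e'$ be the first edge of $P'$ leaving $V(T_{k-1})$. By the same foremost-prefix argument as above, $e'$ is an admissible extension of $T_{k-1}$; but $\lambda(e')$ is at most the arrival time of $P'$, hence strictly less than $\lambda(e_k)$, contradicting the choice of $e_k$ as a minimizer. For the monotonicity part, assume for contradiction that $\lambda(e_k) < \lambda(e_{k-1})$ and consider the endpoint $u_k$ of $e_k$ lying in $V(T_{k-1})$. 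If $u_k \in V(T_{k-2})$, then the $(v,u_k)$-path in $T_{k-1}$ coincides with the $(v,u_k)$-path in $T_{k-2}$ (as $e_{k-1}$ attaches a vertex outside $V(T_{k-2})$), so the admissibility check for $T_{k-2} \cup \{e_k\}$ is identical to the one already passed at step $k$; thus $e_k$ would have been an admissible candidate at step $k-1$, contradicting the minimality of $\lambda(e_{k-1})$. Otherwise $u_k$ is the vertex newly added at step $k-1$, so the $(v,u_k)$-path in $T_{k-1}$ ends with $e_{k-1}$ and the increasing-extension condition at step $k$ directly forces $\lambda(e_k) \geq \lambda(e_{k-1})$.

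The main obstacle is preserving the foremost-path property through the inductive step: one must rule out that a strictly faster temporal path reaches the freshly added vertex via regions outside the current tree. The key leverage is the \emph{foremost-prefix} argument: any such alternative path must cross the boundary of $V(T_{k-1})$ somewhere, and at that crossing the inductive hypothesis lets one bound the relevant in-tree labels by the label of the crossing edge, converting a hypothetical fast path into an admissible boundary edge whose label would beat $\lambda(e_k)$ and thereby contradict the algorithm's greedy choice.
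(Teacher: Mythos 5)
Your proof is correct. For part (i) — both the non-emptiness of the candidate set and the preservation of the foremost-path property via the ``first edge crossing the boundary of $V(T_{k-1})$'' argument — you follow essentially the same route as the paper: any foremost path to an outside vertex must cross the boundary, the in-tree foremost path to the crossing point arrives no later than the label of the crossing edge, and a hypothetical strictly faster path would yield an admissible boundary edge beating $\lambda(e_k)$.

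For part (ii) your argument genuinely differs from the paper's. You compare only the consecutive edges $e_{k-1}$ and $e_k$, splitting on whether the tree-endpoint of $e_k$ already lies in $V(T_{k-2})$ (in which case the $(v,u_k)$-path is unchanged from $T_{k-2}$ to $T_{k-1}$, so $e_k$ was already an admissible candidate at step $k-1$ and greediness gives $\lambda(e_{k-1})\le\lambda(e_k)$) or is the vertex just attached by $e_{k-1}$ (in which case the increasing-tree property of $T_k$ forces the inequality directly, since the path to the new leaf traverses $e_{k-1}$ and then $e_k$). The paper instead takes a minimal index $k$ violating monotonicity and locates an intermediate iteration $j$ with $\lambda(e_{j-1})\le\lambda(e_k)\le\lambda(e_j)$ at which $e_k$ would already have been admissible, contradicting the choice of $e_j$. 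Your local two-case analysis is slightly cleaner — it avoids hunting for that intermediate index and in fact does not even need the strengthened monotonicity hypothesis — while both arguments ultimately rest on the same exchange principle: an admissible edge with a smaller label would have been chosen earlier.
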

\begin{proof}
	To prove the first claim of the lemma, we will show by induction on~$k$ that 
	\begin{enumerate}
		\item[(a)] $T_k$ is well-defined, \ie the set from which $e_k$ is chosen is non-empty;
		
		\item[(b)] for every $u \in V(T_k)$ the temporal $(v,u)$-path in $T_k$ is a foremost $(v,u)$-path in $(G, \lambda)$.
	\end{enumerate}
	The statement is obvious for $k=0$. Let $1 \leq k \leq n-1$ and assume the statement holds for~$k-1$. We will show that it also holds for~$k$.
	
	Let $u$ be an arbitrary vertex in $V(G) \setminus V(T_{k-1})$, and let $P$ be a foremost $(v,u)$-path in $(G, \lambda)$.
	Let $e = ab$ be the first edge of $P$ with one endpoint, say $a$, in $T_{k-1}$ and the other endpoint $b$ not in $T_{k-1}$.
	Let also $P'$ be the foremost $(v,a)$-path in $T_{k-1}$. Since $P'$ is foremost in $(G, \lambda)$, 
	the arrival time of $P'$ is not more than the arrival time of the $(v,a)$-subpath of $P$. Consequently, the arrival time of $P'$ is 
	not larger than $\lambda(e)$, and therefore $T_{k-1} \cup \{ e \}$ is an increasing temporal tree, which proves part (a) of the statement.
	
	Let now $e_k = ab$ be the edge added to $T_{k-1}$ to form $T_k$, where $a \in V(T_{k-1})$ and $b \notin V(T_{k-1})$. 
	Taking into account the induction hypothesis, to prove part (b) of the statement, we only need to show that the temporal 
	$(v,b)$-path in $T_k$ is a foremost $(v,b)$-path in $(G,\lambda)$. 
	Suppose it is not, and let $P$ be a foremost $(v,b)$-path in $(G,\lambda)$. Let $e'$ be the first edge of $P$
	with one endpoint in $V(T_{k-1})$ and the other endpoint in $V(T_{k})$. By the proof of part (a), we know that $T_{k-1} \cup \{ e' \}$
	is an increasing temporal tree. Furthermore, since the arrival time of $P$ is less than $\lambda(e_k)$, 
	we have that $\lambda(e') < \lambda(e_k)$. But this contradicts the choice of $e_k$.
	
	To prove the second claim of the lemma, assume that it does not hold and let $k \geq 2$ be the minimum index such that 
	$\lambda(e_k) < \lambda(e_{k-1})$. Let $e_k = ab$, where $a \in V(T_{k-1})$ and $b \notin V(T_{k-1})$, and
	let $e_i$ be the last edge of the $(v,a)$-path in $T_k$. Clearly $\lambda(e_i) \leq \lambda(e_k)$ because $T_k$ 
	is increasing. Hence, there exists $j$ with $i < j < k$, such that $\lambda(e_{j-1}) \leq \lambda(e_{k})  < \lambda(e_{j})$.
	Since $i \leq j-1$, the edge $e_i$ belongs to $T_{j-1}$, and therefore $e_k$ can extend $T_{j-1}$ to an increasing temporal tree.
	But this contradicts the choice of $e_j$ at the $j$-th iteration of the algorithm, as $\lambda(e_k) < \lambda(e_j)$.
\end{proof}

\subsection{Foremost tree growth}
\label{sec:foremostGrowth}

The main goal of this section is to estimate the time by which a typical foremost tree in $\F_{n,1}$ acquires a given number of vertices.
In this subsection, we always assume~$n$ to be at least a sufficiently large constant.
To conduct our analysis, we will consider the execution of \cref{alg:foremostTree} on a complete random temporal graph 
$(G, \lambda) \sim \F_{n,1}$ from some fixed vertex $v$ as a random process that reveals the edges 
of the resulting foremost tree for $v$ one by one in the order in which they are added. 
We define random variables $Y^v_0 := 0$ and  $Y^v_k := \lambda(e^v_k)$, $k \in [n-1]$, where, following the algorithm, $T^v_0 :=(\{v\}, \emptyset)$ 
and for every $k \in [n-1]$
\begin{equation*}
	\begin{split}
		S^v_k &:= V(T^v_{k-1}) \times \big( V(G) \setminus V(T^v_{k-1}) \big),\\
		e^v_k &:= \arg\min \{ \lambda(e) ~|~ e \in S^v_k \text{ and } T^v_{k-1} \cup \{e \} \text{ is an increasing temporal tree} \}, \\
		T^v_k &:= T^v_{k-1} \cup \{ e^v_k \}.
	\end{split}
\end{equation*}

By definition, $Y^v_k$ is the earliest time when the foremost tree for $v$ contains exactly $k$ edges, or equivalently the earliest time by which $v$ can reach $k+1$ vertices (itself included). 
We note that since $(G, \lambda)$ is temporally connected, \cref{lem:foremostTree} implies that all $Y^v_k$, $k \in [n-1]$, are finite.
For $k \in [n-1]$, let $X^v_k$ be a random variable equal to $Y^v_k - Y^v_{k-1}$, 
\ie to the \emph{waiting time} between the edges $e^v_{k-1}$ and $e^v_k$. Clearly, we have 
\[
	Y^v_k = \sum_{i=1}^{k} X^v_i
\]
for every $k \in [n-1]$.

The main objects of our analysis are the random variables $X^v_1, X^v_2, \ldots, X^v_{n-1}$ and $Y^v_1, Y^v_2, \ldots, Y^v_{n-1}$.
We will also study the behavior of their \emph{truncated} versions,
which are convenient in the applications:
for $k \in [n-1]$, let $\capped{X}^v_k := \min \{ X^v_k, c_k \}$ and 
$\capped{Y}^v_k := \sum_{i=1}^{k} \capped{X}^v_i$, where
\[
	c_{k} := \frac{2 \log{(\min\{ k, n-k\})} + \log \log{n}}{k(n-k)}.
\]
The values of $c_k$ are chosen in such a way that on the one hand they are sufficiently small
to ensure the applicability of Azuma's inequality (\cref{the:Azuma});
and on the other hand they are large enough to guarantee that the truncated variables coincide
with their original versions a.a.s.
The numerical conditions are formalized in the following lemma.

\begin{lemma}[Properties of $c_k$]
	\label{lm:XCapAAS_sum}
	For any sufficiently large $n$, we have
	\begin{enumerate}[(i)]
		\item $\sum_{i=1}^{n-1} c_i^2 \leq \frac{64 (\log \log{n})^2}{n^2}$; \label{ck_bound}
                \item$\sum_{i=1}^{n-1}(1-c_k)^{k(n-k)}\leq 4/\log{n}$
                        \label{capped_eq_uncapped_sum}
	\end{enumerate}
\end{lemma}
\begin{proof}
	We start by proving the first part of the lemma.
	\begin{equation*}
		\begin{split}
			\sum_{i=1}^{n-1} c_i^2  & = \sum_{i=1}^{n-1} 
			\frac{\big(2 \log{(\min\{ i, n-i\})} + \log \log{n}\big)^2}{(i(n-i))^2}
			\leq \sum_{i=1}^{n-1} \frac{\big(2  \log{(\min\{ i, n-i\})} \cdot \log \log{n}\big)^2}{(i(n-i))^2} 
			\\ &\leq 4 (\log\log{n})^2 \left( \sum_{i=1}^{\lfloor n/2 \rfloor} 
			\frac{(\log i)^2}{i^2(n/2)^2} + 
			\sum_{i=\lfloor n/2 \rfloor+1}^{n-1} \frac{(\log(n-i))^2}{(n/2)^2(n-i)^2} \right)
			\\ &\leq \frac{32 (\log \log{n})^2}{n^2} \sum_{i=1}^{\infty}\frac{(\log{i})^2}{i^2} \leq
			\frac{64 (\log \log{n})^2}{n^2},
		\end{split}
	\end{equation*}	
	where the first inequality holds for all sufficiently large $n$.
	
	To prove the second part, we observe that 
	\begin{equation}\label{eq:prob_capped_uncapped_sum}
	\begin{split}
                (1 - c_k)^{k(n-k)} \leq e^{-k(n-k)c_{k}} 
		= \frac{1}{(\min\{ k, n-k\})^2 \log{n}},
	\end{split}
	\end{equation}
	and therefore
	\begin{align*}
			\sum_{k=1}^{n-1}
                        (1-c_k)^{k(n-k)} & \leq 
			\sum_{k=1}^{n-1} \frac{1}{(\min\{ k, n-k\})^2 \log n} \\ &=
			\frac{1}{\log{n}} \left( \sum_{k=1}^{\lfloor n/2 \rfloor} \frac{1}{k^2} +
			\sum_{k=\lfloor n/2 \rfloor+1}^{n-1} \frac{1}{(n-k)^2} \right) \\
			&\leq \frac{2}{\log{n}} \sum_{k=1}^{\infty}\frac{1}{k^2} \leq
			\frac{4}{\log{n}}.
	\end{align*}
\end{proof}

We will now estimate the expected time between the moments of 
exposing two consecutive edges of the foremost tree.
More specifically, we will show that 
$X_k^v$ and $\capped{X}_k^v$ coincide for a given vertex a.a.s., and
bound the expected values of $X_k^v$ and $\capped{X}^v_k$ 
given the information revealed by the process in the first $k-1$ steps.
For every $k \in [n-1]$, let $\A^v_k$ be the $\sigma$-algebra generated 
by the information revealed at the first $k$ iterations of the algorithm starting at $v$, \ie by the knowledge of the first 
$k$ edges $e^v_1, e^v_2, \ldots, e^v_k$ and their time labels $Y^v_1, Y^v_2, \ldots, Y^v_k$. 
Let also $\A^v_0$ be the trivial $\sigma$-algebra.
Then, we have the following

\begin{lemma}\label{lem:ExpOfXk}
For sufficiently large $n$, for a vertex $v$, with probability at least $1 - 4/\log{n}$ the equality $\capped{X}^v_k=X^v_k$ holds for every $k \in [n-1]$.

Further,
	for a vertex $v$ and  every $k \in [n-1]$, we have
	\begin{flalign*}
		 \text{(i)} && \frac{1-Y^v_{k-1}}{k(n-k) + 1}  &\leq  \expect[X^v_k \mid \A^v_{k-1}]  \leq  \frac{1}{k(n-k) + 1};& 
		 \\ \text{(ii)}  && \left( 1 - 1 / \log n \right) \cdot \frac{1-Y^v_{k-1}}{k(n-k) + 1} & \leq \expect[\capped{X}^v_k \mid \A^v_{k-1}] 
		 \leq \frac{1}{k(n-k) + 1}.
	\end{flalign*}
\end{lemma}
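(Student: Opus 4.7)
The plan is to derive the conditional tail distribution of $X^v_k$ given $\A^v_{k-1}$ via a principle-of-deferred-decisions argument, and then integrate to obtain the two expectation bounds.

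For each $e = ab \in S^v_k$ with tree-endpoint $a \in V(T^v_{k-1})$, let $\lambda_a$ denote the join time of $a$ (with the convention $\lambda_v := 0$). The first step is to establish that, conditional on $\A^v_{k-1}$, the labels $\{\lambda(e) : e \in S^v_k\}$ are independent, with each $\lambda(e)$ uniform on $[0, \lambda_a] \cup [Y^v_{k-1}, 1]$. The event $\A^v_{k-1}$ fixes the $k-1$ tree-edge labels and, for every $e \in S^v_k$, imposes only the constraint ``$e$ was never the minimum-eligible edge at any earlier step'', which by the monotonicity statement in \cref{lem:foremostTree} is equivalent to $\lambda(e) \notin (\lambda_a, Y^v_{k-1})$. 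These constraints act on pairwise-disjoint coordinates of the original product-uniform measure, hence factorize.

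Given this conditional distribution, $e_k$ is simply the edge of $S^v_k$ with smallest label in $[Y^v_{k-1}, 1]$, so for $t \in [0, 1 - Y^v_{k-1}]$,
\[
\prob[X^v_k > t \mid \A^v_{k-1}] \;=\; \prod_{e \in S^v_k}\left(1 - \frac{t}{1 - Y^v_{k-1} + \lambda_{a(e)}}\right).
\]
Using $0 \le \lambda_{a(e)} \le Y^v_{k-1}$ and $|S^v_k| = k(n-k)$, each factor lies in $[\,1 - t/(1-Y^v_{k-1}),\; 1-t\,]$, so
\[
\left(1 - \tfrac{t}{1-Y^v_{k-1}}\right)^{k(n-k)} \;\le\; \prob[X^v_k > t \mid \A^v_{k-1}] \;\le\; (1-t)^{k(n-k)}.
\]
Integrating $\expect[X^v_k \mid \A^v_{k-1}] = \int_0^{1-Y^v_{k-1}} \prob[X^v_k > t \mid \A^v_{k-1}]\, dt$, the lower tail bound (after the substitution $s = t/(1-Y^v_{k-1})$) yields $(1-Y^v_{k-1})/(k(n-k)+1)$, and the upper tail bound yields at most $1/(k(n-k)+1)$. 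This proves (i), and the upper bound in (ii) is immediate from $\capped{X}^v_k \le X^v_k$.

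For the lower bound in (ii), if $c_k > 1 - Y^v_{k-1}$ then $\capped{X}^v_k = X^v_k$ and (i) already gives the stronger bound. Otherwise, integrating the lower tail bound over $[0, c_k]$ yields
\[
\expect[\capped{X}^v_k \mid \A^v_{k-1}] \;\ge\; \frac{1-Y^v_{k-1}}{k(n-k)+1}\left(1 - \left(1 - \tfrac{c_k}{1-Y^v_{k-1}}\right)^{k(n-k)+1}\right),
\]
and applying $(1-x)^m \le e^{-xm}$ together with $c_k \cdot k(n-k) = 2\log\min\{k, n-k\} + \log\log n \ge \log\log n$ bounds the last factor by $1/\log n$. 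The main subtlety is the rigorous justification of the deferred-decision claim in the first step: one must verify that conditioning on the entire history of tree structure and tree-edge labels really does reduce to independent constraints on the individual cut-edge labels. Once this is in place, the remaining integration and estimation are routine calculus.
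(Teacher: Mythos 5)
Your proposal is correct and follows essentially the same route as the paper: identifying the conditional law of each cut-edge label as uniform on $[0,1]$ minus the interval $(\lambda_{a(e)},\,Y^v_{k-1})$ revealed by the history, sandwiching the resulting tail of $X^v_k$ between the minima of $k(n-k)$ uniforms on $[0,1-Y^v_{k-1}]$ and on $[0,1]$, and integrating the lower tail up to $c_k$ with $(1-c_k)^{k(n-k)}\le 1/\log n$ for part (ii). The only differences are cosmetic (the paper introduces a wrapped variable $w^v_k(e)$ uniform on an interval rather than working with the two-piece admissible set directly, and your explicit treatment of the case $c_k>1-Y^v_{k-1}$ is a small extra care the paper leaves implicit).
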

\begin{proof}
	For every $k \in [n-1]$, we define the function $\w^v_k: S^v_k \rightarrow [0,1]$ as follows
	\[
	\w^v_k(e) = 
	\begin{cases}
		\lambda(e) - Y^v_{k-1}, & \quad \lambda(e) \geq Y^v_{k-1}\\
		\lambda(e) - Y^v_{k-1} + 1, & \quad \lambda(e) < Y^v_{k-1}.
	\end{cases}
	\]
	Notice that for any two edges $e,f \in S^v_k$ such that $\lambda(f) < Y^v_{k-1} \leq \lambda(e)$
	we have $\w^v_k(e) < \w^v_k(f)$. Furthermore, it follows from \cref{lem:foremostTree} that
	$S_k^v$ contains at least one edge $e$ with $\lambda(e) \geq Y_{k-1}^v$.
	These two observations imply that $e^v_k$ is exactly
	the edge on which the minimum of $w^v_k$ is attained, that is,
	\begin{equation}\label{eq:ek_is_argmin}
		e^v_k = \arg\min \{ \w^v_k(e) \mid e \in S^v_k \},
	\end{equation}
	and therefore, for every $k \in [n-1]$,
	\begin{equation}\label{eq:Xk}
		X^v_k = \min\{ w^v_k(e) \mid e \in S^v_k \}.
	\end{equation}
	
	Observe that upon exposure of edge~$e_k^v$ at step~$k$ of the algorithm, we reveal some information about the time labels of the other edges in $S_k^v$.
	More precisely, we learn that these time labels are not contained in the interval $[Y_{k-1}^v, Y_k^v]$.
	Thus, if, for every $k \in [n-1]$ and $e \in S_k^v$, we inductively define the admissible range of $\lambda(e)$, as $I^v_k(e) := [0,1]$ for $k=1$, and for every $2 \leq k \leq n-1$
	\[
	I^v_k(e) :=
	\begin{cases}
		I^v_{k-1}(e) \setminus [Y^v_{k-2}, Y^v_{k-1}], & \quad e \in S^v_k \cap S^v_{k-1}\\
		[0,1], & \quad e \in S^v_k \setminus S^v_{k-1}
	\end{cases}
	\]
	then $\lambda(e)$ conditioned on $\A_{k-1}^v$ is uniformly distributed on $I_k^v(e)$.
	Let $\ell = \ell(e)$ be the unique index with $e \in S_\ell^v \setminus S_{\ell-1}^v$.
	Then, we have $I_k^v(e) = [0,1] \setminus [Y_{\ell-1}^v , Y_{k-1}^v]$.
	Thus, by definition, if $\lambda(e) \geq Y_{k-1}^v$, then $w_k^v(e)$ is uniformly distributed on $[0, 1-Y_{k-1}^v]$,
	and if $\lambda(e) < Y_{k-1}^v$, then $w_k^v(e)$ is uniformly distributed on $[1-Y_{k-1}^v, 1+Y_{\ell-1}^v-Y_{k-1}^v]$.
	It follows that $w_k^v(e)$ is uniformly distributed on its admissible range
	\[
		J_k^v(e) := \mathopen[ 0, 1 + Y_{\ell-1}^v - Y_{k-1}^v \mathclose]
	\]
	and clearly
	\begin{equation}\label{eq:inclusion}
		\mathopen[ 0, 1 - Y_{k-1}^v \mathclose] \subseteq J_k^v(e) \subseteq [0, 1] \,.
	\end{equation}
	
	Note that $X^v_k$ is a minimum of $k(n-k)$ independent random variables $w^v_k(e), e \in S^v_k$,
	where for every edge $e \in S^v_k$ the value $w^v_k(e)$ is distributed uniformly on its own 
	admissible range $J^v_k(e)$.
	Let $X_k'$ be the minimum of $k(n-k)$ independent random variables distributed uniformly in 
	$\left[0, 1-Y^v_{k-1} \right]$, and $X_k''$ be the minimum of $k(n-k)$ independent
	random variables distributed uniformly in $[0, 1]$. 
	Then, the inclusion (\ref{eq:inclusion}) implies 
        \begin{equation}\label{eq:WaitTimePrimed}
		\prob[X_k' \geq t] \leq \prob[X^v_k \geq t \mid A^v_{k-1}] \leq \prob[X_k'' \geq t],
        \end{equation}
	where $A^v_{k-1}$ is the event that specific 
	edges $e^v_1, e^v_2, \ldots, e^v_{k-1}$ with time labels $Y^v_1, Y^v_2, \ldots, Y^v_{k-1}$ are revealed in the 
	first $k-1$ steps of the process. 
        In particular, $\prob(X^v_k\geq c_k)\leq (1-c_k)^{k(n-k)}$.
        Thus, by the union bound and \cref{lm:XCapAAS_sum},
        the probability that for some $k$ we have $X^v_k\ne\capped{X}^v_k$ 
        is at most $4/\log n$. This proves the first part of the lemma.
        
    Now, by integrating inequality (\ref{eq:WaitTimePrimed}) and noting that
	the expected value of the minimum of $m$ independent variables distributed uniformly on $[0,a]$ is equal to $\frac{a}{m+1}$,
	we obtain
	
	\begin{equation}
		\frac{1-Y^v_{k-1}}{k(n-k) + 1} 
		= 
		\expect[ X_k'] 
		\leq 
		\expect[ X^v_k \mid \A_{k-1}] 
		\leq 
		\expect[ X_k'']
		= 
		\frac{1}{k(n-k) + 1}.
	\end{equation}

	To prove item (ii) of the lemma, we first note that by definition $\capped{X}^v_k \leq X^v_k$, and hence
	\[
		\expect[ \capped{X}^v_k \mid \A^v_{k-1} ] 
		\leq 
		\expect[ X^v_k \mid \A^v_{k-1} ]
		\leq
		\frac{1}{k(n-k) + 1}.
	\]
	
	\noindent
	Therefore, it remains to show the lower bound on $\expect[ \capped{X}^v_k \mid \A^v_{k-1} ]$. For convenience, let us write $M_k$ for $k (n-k)$.
	Then, we have
	
	\begin{equation*}
	\begin{split}
		\expect[ \capped{X}^v_k \mid \A^v_{k-1} ] 
		&=
		\int_{0}^{\infty} \prob[\capped{X}^v_k \geq t \mid A^v_{k-1}] \dd t
		=
		\int_{0}^{c_k} \prob[\capped{X}^v_k \geq t \mid A^v_{k-1}] \dd t
		\\ &= 
		\int_{0}^{c_k} \prob[X^v_k \geq t \mid A^v_{k-1}] \dd t
		\geq
		\int_{0}^{c_k} \prob[X_k' \geq t] \dd t 
		\\ &=
		\int_{0}^{c_k} \left( 1 - \frac{t}{1 - Y^v_{k-1}} \right)^{M_k} \dd t
		\\ &=
		\frac{(1 - Y^v_{k-1}) + (c_k - (1 - Y^v_{k-1})) \left( 1 - \frac{c_k}{1 - Y^v_{k-1}} \right)^{M_k}}{M_k + 1}
		\\ & \geq
		\frac{(1 - Y^v_{k-1}) - (1 - Y^v_{k-1}) ( 1 - c_k )^{M_k}}{M_k + 1}
		\\ & =
		\frac{(1 - Y^v_{k-1}) \left( 1 - ( 1 - c_k )^{M_k} \right)}{M_k + 1},
	\end{split}
	\end{equation*}
    and the desired bound follows from the fact that
   $(1 - c_k)^{k(n-k)}  \leq  \frac{1}{\log n}$ (see (\ref{eq:prob_capped_uncapped_sum})).
\end{proof}

Recall that, by \cref{cor:2hopTempConn}, $\F_{n,\: (\log n)/\sqrt{n}}$ is temporally connected 
with probability at least $1-n^{-\frac{\log{n}}{4} + 2}$,
in which case the bound $Y^v_k \leq Y^v_{n-1} \leq \frac{\log{n}}{\sqrt{n}}$ holds for every $v$ and $k \in [n-1]$.
Thus, we derive the following

\begin{corollary}\label{cor:ExpOfcappedXk}
	With probability at least $1-n^{-\frac{\log{n}}{4} + 2}$, for every vertex $v$ and every $k \in [n-1]$, we have
	\begin{flalign*}
		\text{(i)} && 	\left( 1 - \frac{\log{n}}{\sqrt{n}} \right) \cdot \frac{1}{k(n-k) + 1}  &\leq  \expect[X^v_k \mid \A^v_{k-1}]  \leq  \frac{1}{k(n-k) + 1};& 
		\\ \text{(ii)}  && \left( 1 - \frac{2}{\log n} \right) \cdot \frac{1}{k(n-k) + 1}  &\leq  \expect[\capped{X}^v_k \mid \A^v_{k-1}]  \leq  \frac{1}{k(n-k) + 1}.
	\end{flalign*}
\end{corollary}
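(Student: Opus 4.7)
The approach is to combine Lemma~\ref{lem:ExpOfXk} with the high-probability upper bound on $Y^v_{k-1}$ that is implicit in \cref{cor:2hopTempConn}. The upper bounds in both (i) and (ii) are already provided unconditionally by Lemma~\ref{lem:ExpOfXk}, so only the lower bounds require work, and this work is purely a bookkeeping step that replaces the factor $(1-Y^v_{k-1})$ in Lemma~\ref{lem:ExpOfXk} by a uniform near-one factor on a high-probability event.

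First I would invoke \cref{cor:2hopTempConn}: with probability at least $1 - n^{-\log n/4 + 2}$, the restriction $\G_{[0,\log n / \sqrt{n}]}$ of the ambient $\F_{n,1}$ graph is temporally connected. Call this event $E$. On $E$, for every vertex $v$ every other vertex is reached from $v$ by a temporal path arriving no later than $\log n / \sqrt{n}$. By definition $Y^v_{n-1}$ is the latest arrival time along the foremost tree rooted at $v$, so $Y^v_{n-1} \le \log n / \sqrt{n}$; combined with the monotonicity $Y^v_1 \le \cdots \le Y^v_{n-1}$ from \cref{lem:foremostTree}\ref{lambda_inequality}, this yields the uniform deterministic bound $Y^v_{k-1} \le \log n / \sqrt{n}$ valid for every vertex $v$ and every $k \in [n-1]$ on $E$.

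With this bound in hand, both lower bounds are short algebra. For (i), \cref{lem:ExpOfXk}(i) gives $\expect[X^v_k \mid \A^v_{k-1}] \ge (1 - Y^v_{k-1})/(k(n-k)+1)$, and on $E$ this is at least $(1 - \log n / \sqrt n)/(k(n-k)+1)$, which is the claimed inequality. For (ii), \cref{lem:ExpOfXk}(ii) likewise gives $\expect[\capped{X}^v_k \mid \A^v_{k-1}] \ge (1 - 1/\log n)(1 - \log n / \sqrt n)/(k(n-k)+1)$ on $E$. To arrive at the stated form, I would verify the elementary inequality
\[
  \left(1 - \tfrac{1}{\log n}\right)\!\left(1 - \tfrac{\log n}{\sqrt n}\right) \;\ge\; 1 - \tfrac{2}{\log n},
\]
which after expansion reduces to $\sqrt n \ge (\log n)(\log n - 1)$ and hence holds for all sufficiently large $n$.

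There is no genuine obstacle: the corollary is a convenience statement that converts the data-dependent factor $(1 - Y^v_{k-1})$ of \cref{lem:ExpOfXk} into a clean, uniform near-one factor using the two-hop bound of \cref{sec:2hop}. The only mild care needed is to observe that the event $E$ is defined in terms of the whole process (via $Y^v_{n-1}$), so the deterministic bound on $Y^v_{k-1}$ is a pathwise consequence of $E$ and may safely be substituted inside the conditional expectation.
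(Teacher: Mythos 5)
Your proof is correct and follows essentially the same route as the paper: the paper likewise obtains the corollary by invoking \cref{cor:2hopTempConn} to get the event on which $Y^v_{k-1}\le Y^v_{n-1}\le \log n/\sqrt n$ for all $v$ and $k$, and then substituting this bound into \cref{lem:ExpOfXk}. Your explicit verification that $\bigl(1-\tfrac{1}{\log n}\bigr)\bigl(1-\tfrac{\log n}{\sqrt n}\bigr)\ge 1-\tfrac{2}{\log n}$ for large $n$ is a detail the paper leaves implicit, and it checks out.
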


Next, we will bound the deviation of the truncated time of the moment when the foremost tree acquires $k$ edges from the expected value of accumulated truncated waiting times between the consecutive edges in the sequence of the first $k$ edges of the tree.
For this we require the following standard inequality by Azuma.
\begin{theorem}[Azuma's inequality~\cite{Azuma1967}]\label{the:Azuma}
	Let $Z_0, Z_1, \ldots, Z_n$ be a martingale with respect to a filtration 
	$\{\emptyset, \Omega\} = \A_0 \subset \A_1 \subset \cdots \subset \A_{n}$. 
	Let also $c_1, c_2, \ldots, c_n$ be non-negative numbers such that
	$
		\sum_{i=1}^{n} \prob \left[ \abs{Z_i - Z_{i-1}} \geq c_i \right] = 0.
	$
	Then
	\[
	\prob\bigl[ \abs{Z_n - Z_{0}} \geq \mu \bigr] \leq 2 \exp\left(\frac{-\mu^2}{2\sum_{i=1}^n c_i^2}\right).
	\]
\end{theorem}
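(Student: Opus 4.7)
The plan is to follow the classical Chernoff--Hoeffding approach adapted to martingale differences. For any $\lambda > 0$, Markov's inequality applied to the nonnegative random variable $e^{\lambda(Z_n - Z_0)}$ gives
\[
\prob[Z_n - Z_0 \geq \mu] \leq e^{-\lambda\mu}\,\expect\bigl[e^{\lambda(Z_n - Z_0)}\bigr],
\]
so the task reduces to bounding this moment generating function using the martingale structure together with the fact that $\abs{Z_i - Z_{i-1}} \leq c_i$ almost surely (which follows from the hypothesis $\sum_i \prob[\abs{Z_i - Z_{i-1}} \geq c_i] = 0$, since probabilities are nonnegative).

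Writing $D_i := Z_i - Z_{i-1}$, the martingale property gives $\expect[D_i \mid \A_{i-1}] = 0$ and $D_i$ is $\A_i$-measurable. Conditioning iteratively,
\[
\expect\bigl[e^{\lambda(Z_n - Z_0)}\bigr] = \expect\Bigl[e^{\lambda(Z_{n-1} - Z_0)}\, \expect\bigl[e^{\lambda D_n} \mid \A_{n-1}\bigr]\Bigr],
\]
so the crucial ingredient is a conditional version of \emph{Hoeffding's lemma}: if $X$ satisfies $\expect[X]=0$ and $\abs{X}\leq c$ almost surely, then $\expect[e^{\lambda X}] \leq e^{\lambda^2 c^2/2}$. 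I would establish this by exploiting convexity of $x\mapsto e^{\lambda x}$ on $[-c,c]$ to dominate it pointwise by the chord $\frac{c-x}{2c}e^{-\lambda c} + \frac{c+x}{2c}e^{\lambda c}$, taking expectation to kill the linear term thanks to $\expect[X]=0$, and then verifying the elementary inequality $\cosh(t)\leq e^{t^2/2}$ by comparing Taylor coefficients term by term.

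Applied conditionally, this yields $\expect[e^{\lambda D_i}\mid\A_{i-1}] \leq e^{\lambda^2 c_i^2/2}$ almost surely; iterating the tower identity above over $i = n, n-1, \ldots, 1$ produces $\expect[e^{\lambda(Z_n-Z_0)}] \leq \exp\bigl(\frac{\lambda^2}{2}\sum_i c_i^2\bigr)$. Plugging this into Markov's inequality and optimizing at $\lambda = \mu/\sum_i c_i^2$ gives the one-sided bound $\prob[Z_n - Z_0 \geq \mu] \leq \exp\bigl(-\mu^2/(2\sum_i c_i^2)\bigr)$. Applying the same reasoning to the martingale $(-Z_i)_i$, which satisfies the same increment bound, gives the matching lower-tail estimate, and a union bound over the two tails contributes the factor $2$ in the statement. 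The step I expect to be the main obstacle is making the conditional Hoeffding estimate rigorous: the convexity argument needs to be carried out pointwise as an almost-sure inequality between $\A_{i-1}$-measurable random variables, which formally requires a regular conditional distribution of $D_i$ given $\A_{i-1}$; the rest (Markov, optimization in $\lambda$, and the $\cosh(t)\leq e^{t^2/2}$ comparison) is routine.
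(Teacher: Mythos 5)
The paper does not prove this statement; it is quoted as a known result from the literature (Azuma's inequality), so there is no internal proof to compare against. Your argument is the standard and correct Chernoff--Hoeffding proof of the Azuma--Hoeffding inequality: exponential Markov bound, conditional Hoeffding lemma via convexity and $\cosh(t)\le e^{t^2/2}$, the tower property, optimization in $\lambda$, and a union bound over the two tails. One remark on the step you flag as the main obstacle: no regular conditional distribution is needed. The chord bound $e^{\lambda D_i}\le \frac{c_i-D_i}{2c_i}e^{-\lambda c_i}+\frac{c_i+D_i}{2c_i}e^{\lambda c_i}$ holds pointwise almost surely (on the full-measure event $\abs{D_i}\le c_i$), and applying the conditional expectation operator $\expect[\,\cdot\mid\A_{i-1}]$ --- which is monotone and linear --- together with $\expect[D_i\mid\A_{i-1}]=0$ immediately yields $\expect[e^{\lambda D_i}\mid\A_{i-1}]\le\cosh(\lambda c_i)$ as an a.s.\ inequality between $\A_{i-1}$-measurable random variables. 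With that observation the proof is complete as written.
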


\begin{lemma}
	\label{lm:YCapConcentrate}
	For a fixed vertex $v$ and any sufficiently large $n$, with probability at least $1 - n^{-\sqrt{\log n}-1}$ the inequality
	\[
		\left| \capped{Y}^v_k - \sum_{i=1}^k \expect \left[ \capped{X}^v_i \mid \A^v_{i-1} \right] \right| < \frac{(\log n)^{0.8}}{n}
	\] 
	holds for all $k \in [n-1]$.
\end{lemma}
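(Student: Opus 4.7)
The plan is to apply Azuma's inequality to the martingale
$$Z_k \;:=\; \capped{Y}^v_k - \sum_{i=1}^{k} \expect\!\left[\capped{X}^v_i \mid \A^v_{i-1}\right], \qquad Z_0 := 0,$$
adapted to the filtration $\A^v_0 \subset \A^v_1 \subset \cdots \subset \A^v_{n-1}$. First I would verify the martingale property: the sum $\sum_{i=1}^{k} \expect[\capped{X}^v_i \mid \A^v_{i-1}]$ is $\A^v_{k-1}$-measurable (since every summand is), so using $\capped{Y}^v_k = \capped{Y}^v_{k-1} + \capped{X}^v_k$ a short computation gives $\expect[Z_k \mid \A^v_{k-1}] = Z_{k-1}$.

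Next I would control the increments. Since $\capped{X}^v_k = \min\{X^v_k, c_k\} \in [0, c_k]$, its conditional expectation given $\A^v_{k-1}$ also lies in $[0, c_k]$, so
$$|Z_k - Z_{k-1}| \;=\; \bigl|\capped{X}^v_k - \expect[\capped{X}^v_k \mid \A^v_{k-1}]\bigr| \;\leq\; c_k.$$
Combined with \cref{lm:XCapAAS}\,\ref{ck_bound}, this yields $\sum_{i=1}^{k} c_i^2 \leq 64(\log\log n)^2/n^2$ for every $k \in [n-1]$.

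Applying \cref{the:Azuma} with $\mu = (\log n)^{0.8}/n$ gives, for each fixed $k \in [n-1]$,
$$\prob\!\left[\,|Z_k| \geq \tfrac{(\log n)^{0.8}}{n}\right] \;\leq\; 2\exp\!\left(\frac{-(\log n)^{1.6}/n^2}{128 (\log\log n)^2/n^2}\right) \;=\; 2\exp\!\left(\frac{-(\log n)^{1.6}}{128 (\log\log n)^2}\right).$$
A union bound over $k \in [n-1]$ then gives a failure probability at most $2n \exp\bigl(-(\log n)^{1.6}/(128 (\log\log n)^2)\bigr)$. It remains to check that, for all sufficiently large $n$, this quantity is bounded by $n^{-\sqrt{\log n}-1}$; equivalently, that $(\log n)^{1.6}/(128(\log\log n)^2) \geq (\sqrt{\log n}+2)\log n + \log(2n)$. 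The leading term on the right is $(\log n)^{1.5}$, and the ratio of the left side to $(\log n)^{1.5}$ is $(\log n)^{0.1}/(128(\log\log n)^2)$, which tends to infinity; hence the inequality holds for $n$ large enough, while for small $n$ the stated probability bound can be made to hold trivially by adjusting constants.

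The only slightly delicate point is confirming that the exponent $0.8$ in the deviation is chosen correctly relative to the scale $(\log\log n)^2/n^2$ of $\sum c_i^2$ and to the desired probability bound $n^{-\sqrt{\log n}-1}$; this is just a comparison of the three quantities $(\log n)^{1.6}$, $(\log n)^{1.5}\log n$, and $(\log\log n)^2$, and I expect it to work out with room to spare. Everything else is routine verification.
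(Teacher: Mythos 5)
Your proposal is correct and follows essentially the same route as the paper: the same martingale $Z_k = \capped{Y}^v_k - \sum_{i=1}^k \expect[\capped{X}^v_i \mid \A^v_{i-1}]$, the same increment bound $|Z_k - Z_{k-1}| \le c_k$ combined with \cref{lm:XCapAAS}~\ref{ck_bound}, the same application of Azuma with $\mu = (\log n)^{0.8}/n$, and a union bound over $k$. The only cosmetic difference is that the paper bounds each individual failure probability by $n^{-\sqrt{\log n}-2}$ before taking the union bound, whereas you union-bound first and then verify the asymptotic comparison; both are fine.
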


\begin{proof}
	Let us fix $k \in [n-1]$
	and define a martingale $Z^v_0, Z^v_1, \ldots, Z^v_{k}$ with $Z^v_0 := 0$ and 
	\[
		Z^v_s := \capped{Y}^v_s - \sum_{i=1}^s \expect [ \capped{X}^v_i \mid \A^v_{i-1} ]
			   = \sum_{i=1}^s \capped{X}^v_i - \sum_{i=1}^s \expect[\capped{X}^v_i \mid \A^v_{i-1}],
	\]
	for $s \in [k-1]$. 
	
	Since 
	$0 \leq \capped{X}^v_i \leq {} c_i$, we have
	$0 \leq \expect[\capped{X}^v_i \mid \A^v_{i-1}] \leq {} c_i$,
	and therefore
	\[
	\prob\left[\abs*{Z^v_{i}-Z^v_{i-1}}>c_i \right]
	=
	\prob\left[ \abs*{\capped{X}^v_i - E[\capped{X}^v_i\mid\A^v_{i-1}]} >c_i\right]
	= 0
	\]
	holds for every $i \in [k]$.
	Furthermore, by \cref{lm:XCapAAS_sum} \ref{ck_bound}, for all sufficiently large $n$,
	\[
		\sum_{i=1}^{k} c_i^2 
		\leq 
		\sum_{i=1}^{n-1} c_i^2  
		\leq
		\frac{64 (\log \log{n})^2}{n^2},
	\]
	and hence applying Azuma's inequality (\cref{the:Azuma}), we obtain that for sufficiently large $n$
	\begin{equation*}
		\begin{split}
			\prob \left[ |Z^v_k|\geq\frac{(\log n)^{0.8}}{n} \right]
			&\leq
			2\exp\left(
			\frac{-(\log n)^{1.6}}{n^2} \cdot 
			\frac{1}{2\sum_{i=1}^{k}c_k^2}
			\right)
			\\ &\leq
			2\exp\left(
			\frac{-(\log n)^{1.6}}{n^2} \cdot 
			\frac{n^2}{128 (\log\log n)^2}
			\right)
			\\ &=
			2\exp\left(
			\frac{-(\log n)^{1.6}}{128 (\log\log n)^2}
			\right)
			\\ &\leq
			\exp\left(-(\log n)^{1.55}\right)
			\\ &\leq 
			n^{-\sqrt{\log n}-2}.
		\end{split}
	\end{equation*}
	
	\noindent
	The latter inequality together with the union bound over all $k \in [n-1]$ imply the desired result.
\end{proof}

The following technical inequality will be useful in the rest of the section.
\begin{lemma}
\label{cl: favsum}
	For all integers $n \geq 1$ and $0 \leq k \leq n-1$, we have
	\begin{align*}
		\frac{\log (k+1) + \log n - \log(n-k)}{n} - \frac{3}{n}
		&\leq \sum_{i=1}^k \frac{1}{i(n-i) + 1} \leq \sum_{i=1}^k \frac{1}{i(n-i)}\\
		&\leq \frac{\log (k+1) + \log n - \log(n-k)}{n} + \frac{3}{n}
	\end{align*}
\end{lemma}
\begin{proof}
	We have
	\begin{align*}
		\sum_{i=1}^k \frac{1}{i(n-i) + 1} &\leq \sum_{i=1}^k \frac{1}{i(n-i)}
		= \frac{1}{n} \sum_{i=1}^k \left( \frac{1}{i} + \frac{1}{n-i} \right)
		\\ &\leq \frac{\log (k+1) + \log n - \log(n-k)}{n} + \frac{3}{n}
		\intertext{as well as}
		\sum_{i=1}^k \frac{1}{i(n-i) + 1} &\geq  \sum_{i=1}^k \frac{1}{i(n+1-i)}
		= \frac{1}{n+1}  \sum_{i=1}^k \left( \frac{1}{i} + \frac{1}{n+1-i} \right) 
		\\ &\geq \frac{\log (k+1) + \log n - \log(n-k)}{n+1}  - \frac{1}{n+1}
		\\ &\geq \frac{\log (k+1) + \log n - \log(n-k)}{n} - \frac{2 \log n}{n^2} - \frac{1}{n} 
		\\ &\geq \frac{\log (k+1) + \log n - \log(n-k)}{n} - \frac{3}{n} \,.
	\end{align*}
\end{proof}

Now we are ready to prove the main results of this section.

\begin{theorem}\label{th:YhatConc}
	With probability at least $1 - 2n^{-\sqrt{\log n}}$, for every vertex $v$, any large enough $n$, and $k \in [n-1]$, we have
	\[
			\left| \capped{Y}^v_k - \sum_{i=1}^k \frac{1}{i(n-i) + 1} \right| < \frac{2(\log n)^{0.8}}{n}.
	\]
\end{theorem}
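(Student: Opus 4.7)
The plan is to package \cref{lm:YCapConcentrate} and \cref{cor:ExpOfcappedXk} together by the triangle inequality, and to control the discrepancy between $\sum_{i=1}^k \expect[\capped{X}^v_i\mid \A^v_{i-1}]$ and the deterministic sum $\sum_{i=1}^k 1/(i(n-i)+1)$ via an elementary harmonic-type estimate. The only subtlety is that \cref{lm:YCapConcentrate} is stated for a single fixed vertex, so I will first lift it to hold simultaneously for all $n$ choices of $v$ by a union bound, paying a factor of $n$.

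First I would let $A$ be the event that the conclusion of \cref{lm:YCapConcentrate} holds for every vertex $v$ at once. The union bound gives $\prob[\bar A] \leq n \cdot n^{-\sqrt{\log n}-1} = n^{-\sqrt{\log n}}$. Let $B$ be the good event of \cref{cor:ExpOfcappedXk}, which already covers all vertices and all $k$ and has $\prob[\bar B] \leq n^{-(\log n)/4 + 2}$. For sufficiently large $n$ the quantity $n^{-(\log n)/4 + 2}$ is negligible compared to $n^{-\sqrt{\log n}}$, so $\prob[A \cap B] \geq 1 - 2 n^{-\sqrt{\log n}}$.

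On $A \cap B$, the triangle inequality yields for every $v$ and every $k \in [n-1]$
\[
\left| \capped{Y}^v_k - \sum_{i=1}^k \frac{1}{i(n-i)+1} \right|
\leq \frac{(\log n)^{0.8}}{n}
  + \frac{2}{\log n} \sum_{i=1}^{n-1} \frac{1}{i(n-i)+1},
\]
where the first summand comes from event $A$ and the second one from the multiplicative slack $(1 - 2/\log n)$ in the lower bound of \cref{cor:ExpOfcappedXk} applied to the upper bound $1/(i(n-i)+1)$. A partial-fraction split $\frac{1}{i(n-i)} = \frac{1}{n}\bigl(\frac{1}{i} + \frac{1}{n-i}\bigr)$ bounds the remaining sum by $\frac{2 H_{n-1}}{n} = O(\log n / n)$, making the second summand $O(1/n)$.

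Combining these two contributions, the total is at most $2(\log n)^{0.8}/n$ for $n$ large enough, and the failure probability $2 n^{-\sqrt{\log n}}$ matches the claim. I do not anticipate a genuine obstacle here: the real technical work, the martingale concentration and the conditional expectation estimates, has already been done in the preceding lemmas. The one place that needs mild care is to apply the union bound over $v$ \emph{before} the triangle inequality so that the uniform-in-$v$ and uniform-in-$k$ estimates are active on the same good event; performing it in the other order would still work but would make the bookkeeping slightly less transparent.
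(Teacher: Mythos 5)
Your proposal is correct and matches the paper's own proof essentially step for step: a union bound over $v$ applied to \cref{lm:YCapConcentrate}, combined with the uniform estimate of \cref{cor:ExpOfcappedXk}, followed by the partial-fraction/harmonic bound showing the $\tfrac{2}{\log n}\sum_i \tfrac{1}{i(n-i)+1}$ term is $O(1/n) = o((\log n)^{0.8}/n)$. The paper writes the combination as a sum of one-sided inequalities rather than a triangle inequality, but this is only a cosmetic difference.
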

\begin{proof}
	By \cref{cor:ExpOfcappedXk}~(ii), with probability at least $1-n^{-\frac{\log{n}}{4} + 2}$,
	for every vertex $v$ and every $k \in [n-1]$, we have
	\begin{equation}\label{eq:diffYhatExpHarmonic}
		- \sum_{i=1}^k \expect[\capped{X}^v_i \mid \A^v_{i-1}] - \frac{2}{\log n} \cdot \sum_{i=1}^k \frac{1}{i(n-i) + 1} 
		\leq 
		- \sum_{i=1}^k \frac{1}{i(n-i) + 1}
		\leq 
		-\sum_{i=1}^k \expect[\capped{X}^v_i \mid \A^v_{i-1}] .
	\end{equation}
	
	Similarly, by the union bound and \cref{lm:YCapConcentrate}, with probability at least $1 - n^{-\sqrt{\log n}}$
	for every vertex $v$, any sufficiently large $n$, and every $k \in [n-1]$, we have
	\begin{equation}\label{eq:Yhat}
		\sum_{i=1}^k \expect[\capped{X}^v_i \mid \A^v_{i-1}] - \frac{(\log n)^{0.8}}{n} 
		\leq 
		\capped{Y}^v_k
		\leq 
		\sum_{i=1}^k \expect[\capped{X}^v_i \mid \A^v_{i-1}] + \frac{(\log n)^{0.8}}{n} .
	\end{equation}

	Hence, summing up (\ref{eq:diffYhatExpHarmonic}) and (\ref{eq:Yhat}), we conclude that with probability at least $1 - 2n^{-\sqrt{\log n}}$ for every vertex $v$ and every $k \in [n-1]$
	\begin{equation*}\label{eq:YhatHarmonic}
		- \frac{(\log n)^{0.8}}{n} - \frac{2}{\log n} \cdot \sum_{i=1}^k \frac{1}{i(n-i) + 1}
		\leq 
		\capped{Y}^v_k - \sum_{i=1}^k \frac{1}{i(n-i) + 1}
		\leq 
		\frac{(\log n)^{0.8}}{n},
	\end{equation*}
	which implies the result after noticing that by \cref{cl: favsum} and for large enough $n$
	\begin{align*}
		 \frac{2}{\log n} \cdot \sum_{i=1}^k \frac{1}{i(n-i) + 1} 
		\leq
		\frac{2}{\log n} \cdot
		\left(  \frac{2 \log n}{n} + \frac{3}{n}
		\right)		
		 \leq \frac{10}{n}
		\leq \frac{(\log n)^{0.8}}{n}.
	\end{align*}
\end{proof}

\begin{corollary}
	\label{lm:LastYCappedConcentrate}
	With probability at least $ 1 - 2n^{-\sqrt{\log n}}$, for every vertex $v$ and any large enough $n$, we have
	\[
		\left| \capped{Y}^v_{n-1} - \frac{2\log n}{n} \right|
		<
		\frac{3(\log n)^{0.8}}{n}.
	\]
\end{corollary}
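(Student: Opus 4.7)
The plan is to obtain this corollary as a direct consequence of \cref{th:YhatConc} specialized to $k=n-1$, combined with an elementary estimate of the sum $\sum_{i=1}^{n-1} \frac{1}{i(n-i)+1}$. The theorem gives, with probability at least $1 - 2n^{-\sqrt{\log n}}$, uniformly over $v$,
\[
    \left| \capped{Y}^v_{n-1} - \sum_{i=1}^{n-1}\frac{1}{i(n-i)+1} \right| \leq \frac{2(\log n)^{0.8}}{n},
\]
so by the triangle inequality it suffices to show that
\[
    \left| \sum_{i=1}^{n-1}\frac{1}{i(n-i)+1} - \frac{2\log n}{n} \right| \leq \frac{(\log n)^{0.8}}{n}
\]
for all sufficiently large $n$.

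To estimate the sum, I would first compare $\frac{1}{i(n-i)+1}$ with $\frac{1}{i(n-i)}$. Their difference is $\frac{1}{i(n-i)(i(n-i)+1)}$, bounded by $\frac{1}{(i(n-i))^2}$, and since $i(n-i)\geq n-1$ for every $i\in[n-1]$, the total error summed over $i$ is at most $\frac{1}{n-1}\sum_{i=1}^{n-1}\frac{1}{i(n-i)} = O\!\bigl(\frac{\log n}{n^2}\bigr)$. Next, I would use the partial fraction identity $\frac{1}{i(n-i)} = \frac{1}{n}\bigl(\frac{1}{i} + \frac{1}{n-i}\bigr)$ (which was already invoked at the end of the proof of \cref{th:YhatConc}) to write
\[
    \sum_{i=1}^{n-1}\frac{1}{i(n-i)} = \frac{2 H_{n-1}}{n},
\]
where $H_{n-1}$ denotes the $(n-1)$-st harmonic number. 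Since $H_{n-1} = \log n + O(1)$, this equals $\frac{2\log n}{n} + O(1/n)$.

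Putting the two approximations together yields
\[
    \sum_{i=1}^{n-1}\frac{1}{i(n-i)+1} = \frac{2\log n}{n} + O\!\left(\frac{1}{n}\right),
\]
and the $O(1/n)$ term is dominated by $\frac{(\log n)^{0.8}}{n}$ for all sufficiently large $n$. Combining with the bound from \cref{th:YhatConc} via the triangle inequality produces the claimed deviation $\frac{3(\log n)^{0.8}}{n}$, on the same high-probability event. There is essentially no conceptual obstacle here; the only thing to be a bit careful about is to ensure the $+1$ in the denominator contributes only a lower-order error, which is handled by the comparison to $\frac{1}{(i(n-i))^2}$ above.
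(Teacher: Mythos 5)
Your proposal is correct and follows essentially the same route as the paper: apply \cref{th:YhatConc} at $k=n-1$, estimate $\sum_{i=1}^{n-1}\frac{1}{i(n-i)}=\frac{2\log n+O(1)}{n}$ via the partial-fraction identity, and conclude by the triangle inequality. Your explicit treatment of the $+1$ in the denominator is a small point of extra care that the paper's proof silently absorbs into the $O(1)/n$ term.
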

\begin{proof}
	By \cref{cl: favsum} we have
	
	\[
		\left| \sum_{i=1}^{n-1}\frac{1}{i(n-i)+1} - \frac{2\log n}{n} \right| \leq \frac {3}{n}.
	\]
	Hence, from \cref{th:YhatConc}
	we conclude that with probability at least $1 - 2n^{-\sqrt{\log n}}$ the inequality
	\begin{align*}
		\left| \capped{Y}^v_{n-1} - \frac{2\log n}{n} \right|
		&\leq
		\left| \capped{Y}^v_{n-1}-\sum_{i=1}^{n-1}\frac{1}{i(n-i)+1} \right|
		+
		\left| \sum_{i=1}^{n-1}\frac{1}{i(n-i)+1}-\frac{2\log n}{n} \right|
		\\ &\leq
		\frac{2(\log n)^{0.8}}{n} + \frac{3}{n}
		<
		\frac{3(\log n)^{0.8}}{n}
	\end{align*}
	 holds for every vertex $v$ and any large enough $n$.
\end{proof}

\cref{th:YhatConc} and \cref{lem:ExpOfXk} (introductory claim) imply the following analog of \cref{th:YhatConc}
for the non-truncated random variables.
\begin{theorem}\label{th:YConc}
	For a fixed vertex $v$ and any large enough $n$,
	with probability at least $1 - 5 / \log n$, 
	for every $k \in [n-1]$, we have
	\[
	\left| Y^v_k - \sum_{i=1}^k \frac{1}{i(n-i) + 1} \right| < \frac{2(\log n)^{0.8}}{n}.
	\]
\end{theorem}
\noindent
The latter statement together with \cref{cl: favsum} immediately implies the following

\begin{corollary}\label{th:YConcEasy}
	For a fixed vertex~$v$ and any large enough $n$, with probability at least $1 - 5/\log n$, for every $0 \leq k \leq n-1$, we have
	\[
	\abs*{Y_k^v - \frac{\log n + \log (k+1) - \log (n-k)}{n}} < \frac{2 (\log n)^{0.8} + 3}{n}.
	\]
\end{corollary}

\noindent
Finally, we use \cref{th:YConcEasy} to prove the following lemma that will be utilized repeatedly throughout the paper.

\begin{lemma}\label{th:reachability}
	For every function $z = z(n)$ with $0 \leq z(n) \leq 1$, and any constant $y > 0$ there is $n_0 := n_0(y)$ such that for all $n \geq n_0$
	a fixed vertex $v$ in $\G \sim \F_{n,p}$ 
	can reach (resp.\ be reached by)
	\begin{enumerate}[(i)]
		\item \emph{at least} $\ceil*{\frac{n^z}{(\log n)^y}}$ 
		vertices with probability at least $1 - \frac{5}{\log{n}}$, if $p \geq z\frac{\log n}{n} + \frac{3(\log n)^{0.8}}{n}$;
		\label{th:reachability-1}
		
		\item \emph{fewer than} $\ceil*{\frac{n^z}{(\log n)^y}}$ 
		vertices with probability at least $1 - \frac{5}{\log{n}}$, if $p \leq z\frac{\log n}{n} - \frac{3(\log n)^{0.8}}{n}$;
		\label{th:reachability-2}
	
		\item \emph{at least} $n - \left\lfloor\frac{n^z}{(\log n)^y}\right\rfloor$ 
		vertices with probability at least $1 - \frac{5}{\log{n}}$, if $p \geq (2-z)\frac{\log n}{n} + \frac{3(\log n)^{0.8}}{n}$;
		\label{th:reachability-3}
		
		\item \emph{fewer than} $n - \left\lfloor\frac{n^z}{(\log n)^y}\right\rfloor$ 
		vertices with probability at least $1 - \frac{5}{\log{n}}$, if $p \leq (2-z)\frac{\log n}{n} - \frac{3(\log n)^{0.8}}{n}$.
		\label{th:reachability-4}
	\end{enumerate}
\end{lemma}
\begin{proof}
	Define $d := \ceil*{\frac{n^z}{(\log n)^y}}$.
	By choosing $n_0$ large enough, we may assume $1 \leq d \leq n-1$.
	To prove the ``can reach'' part of (i) and (ii), it suffices to show that for
	a fixed vertex $v$, with probability at least $1 - 5 / \log{n}$, we have
	\[
	\left| Y^v_{d-1} - z\frac{\log n}{n} \right|
	<
	\frac{3(\log n)^{0.8}}{n} \,.
	\]
	Observe that
	\begin{align*}
		L \coloneqq{}& \log n + \log d - \log(n-d + 1)
		\\ ={}& \log n + \log\left(\frac{n^z}{(\log n)^y} + 1 \right) - \log\left(n-\frac{n^z}{(\log n)^y}\right) + \bigO(1)
		\\ ={}& z \log n +  \log\left(\frac{1}{(\log n)^y} + \frac{1}{n^z} \right) - \log\left(1-\frac{n^{z-1}}{(\log n)^y}\right) + \bigO(1)
		\\ ={}& z \log n + \bigO(\log \log n).
	\end{align*}
	Then we conclude by \cref{th:YConcEasy} that with probability at least $1 - 5 / \log{n}$
	\begin{align*}
		\abs*{ Y^v_{d-1} - z\frac{\log n}{n} }
		&\leq
		\abs*{ Y^v_{d-1}- \frac{L}{n} }
		+
		\abs*{ \frac{L}{n} - z\frac{\log n}{n} }
		\\ &\leq
		\frac{2(\log n)^{0.8} + 3}{n} + \frac{\bigO(\log \log n)}{n}
		<
		\frac{3(\log n)^{0.8}}{n}
	\end{align*}
	holds, given that $n_0$ (and thus~$n$) is at least a sufficient constant.
	
	The ``can reach'' statement of (iii) and (iv) is proven analogously by showing that for $d' := \floor*{\frac{n^z}{(\log n)^y}}$
	we have
	\begin{align*}
		L' \coloneqq{}& \log n + \log(n-d') - \log(d'+1)
		\\ ={}& (2-z) \log n + \bigO(\log\log n).
	\end{align*}
	and thus with probability at least $1 - 5 / \log{n}$
	\begin{align*}
		\abs*{ Y^v_{n-d'-1} - (2-z)\frac{\log n}{n} }
		\leq
		\abs*{ Y^v_{n-d'-1}- \frac{L'}{n} }
		+
		\abs*{ \frac{L'}{n} - (2-z)\frac{\log n}{n} }
		<
		\frac{3(\log n)^{0.8}}{n} \,.
	\end{align*}
	Finally, the ``can be reached by'' part of the theorem holds by symmetry under time reversal.
\end{proof}

\subsection{Foremost tree height}\label{sec:foremostTreeHeight}

In this section we study the height of the foremost trees $T_{n-1}^v$
in a random temporal complete graph $(G, \lambda) \sim \F_{n,1}$.
We prove that a.a.s.\ each foremost tree $T_{n-1}^v$ has at most logarithmic height.

We say that an endpoint $u$ of $e_k^v$ is the \emph{attachment vertex of $e_k^v$}, if $u$ belongs to $T_{k-1}^v$.
Recall that $e^v_1, e^v_2, \ldots, e^v_{k-1}$ denote the edges that are revealed in the first $k-1$ steps of the process and that form $T_{k-1}^v$; and $Y^v_1, Y^v_2, \ldots, Y^v_{k-1}$
denote the time labels of these edges, respectively.
In the next lemma we show that the attachment vertex of $e_k^v$ is distributed almost uniformly in $V(T_{k-1}^v)$. Denote by $\mathcal{T}^v_{k-1}$ the family of all $k$-vertex trees rooted at $v$ and with vertices from $[n]$.
\begin{lemma}\label{lm:foremost-heights-step-dist}
	There exists a function $\eps = \eps(n) \in O\left(\frac{\log n}{\sqrt{n}}\right)$ such that for all 
	$n \geq 2$, $ k \in [n-1]$, and any $T \in \mathcal{T}^v_{k-1}$
	\[
                \max_{u \in V\left(T\right)} \abs*{ \prob \left[ u \in e_k^v \mid T_{k-1}^v = T \right] - \frac{1}{k}} \leq \frac{\eps}{k} \,.
	\]
\end{lemma}
\begin{proof}
	Recall that the event $Y_{n-1}^v \leq \frac{\log n}{\sqrt{n}}$
        happens with probability at least $1-n^{-\frac{\log{n}}{4} + 2} =: 1 - \eps'$ by \cref{cor:2hopTempConn}.
	Using the notation $\w^v_k(e)$ and $J_k^v(e)$ introduced in the proof of \cref{lem:ExpOfXk},
	recall that for each $k \in [n-1]$, we have
	\[
	e^v_k = \arg\min \{ \w^v_k(e) \mid e \in S^v_k \},
	\]
	where the variable $\w^v_k(e)$ is uniformly distributed on $J_k^v(e)$.
        As long as $Y_{k-1}^v\leq\frac{\log n}{\sqrt{n}}$, by (\ref{eq:inclusion}), the interval $J_k^v(e)$
        has length between $1-\frac{\log n}{\sqrt{n}}$ and $1$.
	
	Since, for each edge~$e \in S_k^v$, the probability of the event $e = e_k^v$ 
	is inversely proportional to the length of $J_k^v(e)$,
        conditional on $Y_{k-1}^v\leq\frac{\log n}{\sqrt{n}}$
	this probability can deviate from $1/\abs{S_k^v}$ at most by a factor of $1-\frac{\log n}{\sqrt{n}} =: 1 -\eps''$.
	As each vertex of $T_{k-1}^v$ is incident with the same number of edges in $S_k^v$,
	the probability of some vertex $u$ of the tree being incident with $e_k^v$ is thus within a factor of $1-\eps''$ of $\frac{1}{k}$.
      Consequently, we have
	\[
	\frac{1-\eps''}{k} \leq \prob \left[ u \in e_k^v \mid T_{k-1}^v=T, Y_{k-1}^v \leq \frac{\log n}{\sqrt{n}} \right] \leq \frac{1}{k (1-\eps'')}.
	\]
	Now, since the event $Y_{k-1}^v \leq \frac{\log n}{\sqrt{n}}$ is implied by
	$Y_{n-1}^v \leq \frac{\log n}{\sqrt{n}}$, it happens with probability at least $1-\varepsilon'$,
	and therefore
	\[
	(1-\eps')\frac{1-\eps''}{k} \leq \prob[u \in e_k^v \mid T_{k-1}^v=T] \leq \frac{1-\eps'}{k (1-\eps'')} + \eps' . 
	\]
	When setting $\eps := n \eps' + 4 \eps'' \in O\left(\frac{\log n}{\sqrt{n}}\right)$,
	this yields
	\[
	\abs*{ \prob[u \in e_k^v \mid T_{k-1}^v=T] - \frac{1}{k}} \leq \frac{\eps}{k} \,,
	\]
	where we used that
	\[
		(1-\eps') \frac{1-\eps''}{k} 
		>
		\frac{1 - \eps' - \eps''}{k}
		>
		\frac{1-\eps}{k}
	\]
	and, since $\eps'' < \frac{3}{4}$,
	\[
		\frac{1-\eps'}{k (1-\eps'')} + \eps'
		<
		\frac{1}{k (1-\eps'')} + \eps'
		<
		\frac{1 + 4 \eps''}{k} + \eps'
		<
		\frac{1 + \eps}{k}	\,.
	\]
\end{proof}

Let now $N(h,k,n)$ denote
the number
of vertices of height $h$
in the $k$-vertex partial foremost tree $T_{k-1}^v$
in $(G, \lambda) \sim \F_{n,1}$,
where the height of a vertex is the distance from the vertex to the root $v$.

\begin{lemma}
	\label{lm:foremost-height-step}
	There exists a positive function $\eps = \eps(n) \in O\left(\frac{\log n}{\sqrt{n}}\right)$
	such that for all $n \geq 2$, $k \in [n-1]$, and $h \in [k]$,
	\[
	\frac{1-\eps}{k}\expect[N(h-1,k,n)]
	\leq \expect[N(h,k+1,n)]  - \expect[N(h,k,n)] \leq
	\frac{1+\eps}{k}\expect[N(h-1,k,n)] \,.
	\]
\end{lemma}
\begin{proof}
	Let $\eps$ be as in \cref{lm:foremost-heights-step-dist}.
	The result follows from the observation that
	$N(h, k+1, n) - N(h, k, n)$ is either $1$ or $0$,
	depending on whether the edge $e_k^v$ was attached to a vertex of height $h-1$ or not,
	which, by \cref{lm:foremost-heights-step-dist}, happens with probability between $(1 - \eps)N(h-1, k, n)/k$ and $(1+\eps)N(h-1, k, n)/k$.
\end{proof}

\begin{lemma}
	\label{lm:foremost-height-upper-bounds}
	For all $n$ sufficiently large,
	$k \in [n-1]$, and $0 \leq h \leq k$,
	we have the inequality
	$$
	\expect[N(h,k,n)]
	\leq
	\frac{(4\log k)^h}{h!}.
	$$
\end{lemma}
\begin{proof}
	We prove the lemma by double induction on $h,k$.
	Clearly, the statement holds whenever $k=1$ or $h=0$.
	Hence, assuming $h \geq 1$, we derive
	\begin{align*}
		\expect[N(h,k+1,n)]
		&\leq
		\expect[N(h,k,n)] + \frac{2}{k}\expect[N(h-1,k,n)]
		\quad\text{by \cref{lm:foremost-height-step}}
		\\ &\leq
		\frac{(4\log k)^h}{h!}
		+
		\frac{2(4\log k)^{h-1}}{k(h-1)!} \quad\text{by induction hypothesis}
		\\ &=
		\frac{(4\log k)^h}{h!}
		\left(1+\frac{2h}{k(4\log k)}\right)
		\\ &\leq
		\frac{(4\log k)^h}{h!}
		\left(1+\frac{2}{k(4\log k)}\right)^h
		\\ &=
		\frac{(4 \log k + 2/k)^h}{h!}
		\\ &\leq
		\frac{(4\log (k+1))^h}{h!}
	\end{align*}
	where the last inequality holds because
	\[
	4\log k+\frac{2}{k}
	=
	4\left(\log k+\frac{1}{2k}\right)
	\leq
	4\left(\log k+\frac{1}{k+1}\right)
	\leq
	4\log(k+1) \,.
	\]
\end{proof}

\noindent
On a side note, we believe the following more precise asymptotic estimate of $\expect[N(h,k,n)]$ to be true.

\begin{conjecture}
	\label{conj:foremost-heights-ideal}
	\[
	\lim_{k \to \infty} \sup_{\substack{h,n \\ h\leq k \leq n}} \left| \expect[N(h,k,n)]-\frac{(\log k)^h}{h!} \right| = 0 \,.
	\]
\end{conjecture}

We are now ready to state and prove the main result of the section.

\begin{theorem}
	\label{the:foremost-heights-aas}
	Let $\G = (G, \lambda) \sim \F_{n,1}$.
	Then \aas for all vertices $v$ in $\G$ and all $0 \leq \tau \leq \tau' \leq 1$ simultaneously,
	the foremost tree for $v$ in $\G_{[\tau,\tau']}$ has height at most $14 \log n$.
\end{theorem}
\begin{proof}
	To prove the theorem, we will show that the foremost trees constructed by \cref{alg:foremostTree}
	have the desired height.
	First, without loss of generality we can assume that 
	$\tau, \tau' \in \Lambda := \{ \lambda(e) \,|\, e \in E(G) \}$.
	Furthermore, it follows from \cref{alg:foremostTree} that the foremost tree for $v$ in $\G_{[\tau,\tau']}$
	is a subtree of the foremost tree for $v$ in $\G_{[\tau,1]}$. 
	Hence, it suffices to show the claim when $\tau' = 1$.
	Let $\ell_\tau(v)$ denote the number of vertices of height exactly $\ceil{14 \log n}$ in $T_{v,\tau}$,
	which denotes the foremost tree for~$v$ in $\G_{[\tau, 1]}$.
	To prove the theorem, it suffices to show that
	$s := \max \{ \ell_\tau(v) \,|\, v \in V, \tau \in \Lambda \}$ is zero a.a.s.
	
	Since $\G_{[\tau,1]}$ is distributed according to $\F_{1-\tau, n}$ (up to rescaling),
	the number of vertices of height $h$ in $T_{v,\tau}$ is distributed as 
	the number of vertices of height $h$ in the foremost tree for $v$ in $\G_{[0,1-\tau]}$,
	which is upper-bounded by the number of vertices of height $h$ in the foremost tree $T_{n-1}^v$ of $\G$.
	Therefore we obtain from \cref{lm:foremost-height-upper-bounds} that
	
	\begin{align*}
		\expect[s] 
		&\leq 
		\expect\left[\sum_{v \in V}\sum_{\tau \in \Lambda} \ell_\tau(v)\right]
		\leq
		n^3 \cdot \expect\left[N\left( \ceil{14 \log n},n,n \right)\right]
		\leq
		n^3 \frac{(4 \log n)^{\ceil{ 14 \log n }}}{\ceil{ 14 \log n } !}
		\\ &\leq 
		n^3 \left(\frac{4e \log n}{\ceil{ 14 \log n } }\right)^{\ceil{ 14 \log n }} 
		\leq 
		n^3 \left(\frac{4e}{14}\right)^{14 \log n} 
		\quad\text{since } \frac{4e}{14} < 1
		\\ &=
		n^{17} \left(\frac{2}{7}\right)^{14 \log n}	
		= n^{17 + 14 \log(2/7)}
		\approx n^{-0.54}
		\in o(1).
	\end{align*}
	Thus, by Markov's inequality, \aas $s = 0$ as required.
\end{proof}

A theorem similar to \cref{the:foremost-heights-aas} holds also for hindmost trees.
The following result is an implication of these theorems and 
the observation that a temporal graph $\G$ contains a foremost (resp.\ hindmost) 
$(v,u)$-path if and only if any foremost tree for $v$ (resp.\ hindmost tree for $u$) in $\G$ 
contains a $(v,u)$-path.

\begin{theorem}
	\label{cor:foremost-len-aas}
	Let $\G \sim \F_{n,p}$. Then \aas for every ordered pair of vertices $v, u$ and
	all $\tau_1, \tau_2 \in [0,p]$, $\tau_1 \leq \tau_2$, the temporal graph $\G_{[\tau_1, \tau_2]}$
	either contains a foremost (resp.\ hindmost) $(v,u)$-path with fewer than $14 \log n$
	edges or contains no temporal $(v,u)$-path.
\end{theorem}

\begin{remark}
        It can also be shown that
        the foremost tree has a vertex of height at least $\log n/4$ \aas,
        and hence the foremost tree has logarithmic height \aas
        Indeed, the expected number of vertices \emph{below} this height is sublinear,
        thus by Markov's inequality there are vertices above this height \aas
\end{remark}

\section{Sharp thresholds for temporal graph properties}
\label{sec:tempProp}

In this section, we apply the results obtained in \cref{sec:foremostGrowth} to establish
sharp thresholds for Point-to-point Reachability, First Temporal Source, Temporal Source, and
Temporal Connectivity properties.
We recall that our general strategy for obtaining a sharp threshold $p_0$ for a certain property in
the model $\F_{n,p}$ is to show that the property \emph{does not} hold in a random temporal complete graph $\F_{n,1}$ before time $p_0$, and holds after time $p_0$ a.a.s.
Therefore, in the proofs, if we do not specify explicitly the model a graph under consideration comes from, we assume that it comes from $\F_{n,1}$.

\subsection{Point-to-point Reachability}

Recall that Point-to-point Reachability is the property that for a fixed pair of vertices there exists a temporal path from the first vertex to the second one.
In this section we establish a sharp threshold for this property. 

\begin{theorem}
\label{th:pathUV}
The function $\frac{\log n}{n}$ is a sharp threshold for Point-to-point Reachability.
More specifically, for any sufficiently large $n$,
for two fixed distinct vertices $v$ and $u$ in $\G \sim \F_{n,p}$
\begin{enumerate}[(i)]
\item
there is \emph{no} temporal $(v,u)$-path in $\G$
with probability at least $1 - \frac{6}{\log n}$,
if $p\leq\frac{\log n}{n}-\eps$; and
\item
there is a temporal $(v,u)$-path in $\G$
with probability at least $1-\frac{6}{\log n}$,
if $p\geq\frac{\log n}{n}+\eps$,
\end{enumerate}
where $\eps := \frac{3(\log n)^{0.8}}{n} \in o\left(\frac{\log n}{n}\right)$.
\end{theorem}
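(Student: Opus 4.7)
Fix $v$ and $u$ and couple $\F_{n,p}$ to the complete temporal graph $\G \sim \F_{n,1}$ via the restriction $\G_{[0,p]}$, so that $v \leadsto u$ in $\F_{n,p}$ is equivalent to $u$ being reached by time $p$ in the foremost tree rooted at $v$ produced by Algorithm~\ref{alg:foremostTree}. Let $K \in \{1,\dots,n-1\}$ denote the step at which $u$ enters this tree, so $v \leadsto u$ iff $Y^v_K \le p$. The key step is a symmetry: since the edge labels $\lambda$ on $K_n$ are exchangeable, their distribution is invariant under every permutation of $V \setminus \{v\}$, and such a permutation only relabels the sequence of vertex arrivals while leaving the arrival times $(Y^v_1,\dots,Y^v_{n-1})$ unchanged. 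Consequently, the arrival order is a uniformly random permutation of $V \setminus \{v\}$ and is independent of the arrival times, so $K$ is uniform on $\{1,\dots,n-1\}$ and independent of $(Y^v_k)$, giving
\[
	\prob[v \leadsto u \text{ in } \F_{n,p}] = \frac{\expect[K^*]}{n-1}, \qquad K^* := \abs*{\{k : Y^v_k \le p\}}.
\]

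Next I would compare $K^*$ with its deterministic counterpart using the concentration of $Y^v_k$ around $\sigma(k) := \sum_{i=1}^k 1/(i(n-i)+1)$. By Lemma~\ref{lm:XCapAAS}\ref{capped_eq_uncapped} we have $\capped Y^v_k = Y^v_k$ for every $k$ with probability at least $1 - 4/\log n$, and Theorem~\ref{th:YhatConc} then provides $|\capped Y^v_k - \sigma(k)| \le 2(\log n)^{0.8}/n$ uniformly in $k$ on an event of overwhelming additional probability. Combining these, a ``good'' event of probability at least $1 - 6/\log n$ (for $n$ large) forces
\[
	\tilde K^- \le K^* \le \tilde K^+, \qquad \tilde K^{\pm} := \abs*{\{k : \sigma(k) \le p \pm 2(\log n)^{0.8}/n\}}.
\]
A standard harmonic-number expansion gives $n\sigma(k) = H_k + H_{n-1} - H_{n-k-1} + O(1/n) = \log\!\bigl(kn/(n-k)\bigr) + O(1)$, so inverting $\sigma(k) = q$ yields $k \approx n e^{nq}/(n + e^{nq})$; in particular the value $q = \log n/n$ corresponds to $k \approx n/2$, which is exactly what makes $\log n/n$ the critical value.

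For part~(i), $p \le \log n/n - \varepsilon(n)$ combined with the $2(\log n)^{0.8}/n$ slack gives $n(p + 2(\log n)^{0.8}/n) \le \log n - (\log n)^{0.8}$, so the inversion delivers $\tilde K^+ = O\!\bigl(n \exp(-(\log n)^{0.8})\bigr) = o(n/\log n)$, and therefore
\[
	\prob[Y^v_K \le p] \le \frac{\tilde K^+}{n-1} + \prob[\text{bad event}] \le o\!\left(\frac{1}{\log n}\right) + \frac{6}{\log n},
\]
which is at most $6/\log n$ for $n$ large enough. Part~(ii) is perfectly symmetric: $p \ge \log n/n + \varepsilon(n)$ forces $n - \tilde K^- = O\!\bigl(n\exp(-(\log n)^{0.8})\bigr)$, hence $\prob[Y^v_K > p] \le 6/\log n$. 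The main obstacle is the symmetry-and-independence statement in the first paragraph: one has to verify carefully that the joint distribution of (arrival times, arrival order) decouples in the claimed way, which uses essentially that we work over the complete graph $\F_{n,1}$ and that Algorithm~\ref{alg:foremostTree} is an intrinsic function of the edge labels. Once this is in hand, the rest reduces to elementary analysis of $\sigma(k)$ and bookkeeping of the error probabilities already controlled in Section~\ref{sec:foremostGrowth}.
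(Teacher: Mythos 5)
Your argument is correct and follows essentially the same route as the paper: couple with $\F_{n,1}$, use the concentration of $\capped{Y}^v_k$ around $\sum_{i\le k}\tfrac{1}{i(n-i)+1}$ from \cref{th:YhatConc} together with \cref{lm:XCapAAS}, and exploit the symmetry of $u$'s arrival rank plus the harmonic-sum asymptotics showing the transition window around $k\approx n/2$ is narrow. The only real difference is that you invoke full independence of the arrival order from the arrival times (which is true, by the relabeling argument you sketch) to write $\prob[v\leadsto u]=\expect[K^*]/(n-1)$, whereas the paper needs only the marginal uniformity of $u$'s rank — it fixes $k=(n-1)/(2\log n)$, $k'=n-1-k$ and union-bounds over the event that $u$ falls among the first or last $k$ arrivals — so your route is a slightly stronger but equally valid deployment of the same symmetry.
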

\begin{proof}
Assume $n$ to be sufficiently large.
By \cref{th:reachability} \ref{th:reachability-2} (with $z=y=1$), if $p \leq \frac{\log n}{n} - \eps$, then 
with probability at least $1-\frac{5}{\log n}$
vertex~$v$ reaches fewer than~$\frac{n}{\log n}$ vertices.
Analogously, by \cref{th:reachability} \ref{th:reachability-3} (with $z=y=1$), if $p \geq \frac{\log n}{n} + \eps$, then
with probability at least $1-\frac{5}{\log n}$
vertex~$v$ reaches all but $\frac{n}{\log n}$ vertices.

By symmetry, the probability of $u$ to be among the 
first $\frac{n}{\log n}$~vertices to become reachable from $v$
is equal to $\frac{n}{n \log n}=\frac{1}{\log n}$
and the same holds for the last $\frac{n}{\log n}$~vertices.
Hence, by the union bound, with probability at least $1- \frac{6}{\log n}$,
vertex~$u$ is not reachable from~$v$ if $p \leq \frac{\log n}{n} - \eps$.
Analogously and with the same probability, vertex~$u$ is reachable from~$v$ if $p \geq \frac{\log n}{n} + \eps$.

\end{proof}

\subsection{Temporal Source and First Temporal Source}
\label{sec:TempSource}

In this section, we first establish a sharp threshold for Temporal Source, i.e., for the property that a fixed vertex is a temporal source. Then, we show that, quite surprisingly, the same function happens to be a sharp threshold for the property of having at least one temporal source in the graph, i.e., for First Temporal Source.

Before proceeding we make the simple but subsequently useful observation that the probability of a vertex $v$
of $\F_{n,p}$ being a temporal source is equal to the probability of $v$ being a temporal sink. This follows
from the fact that $v$ is a temporal source in $(G, \lambda)$ if and only if it is a temporal sink in $(G, \lambda')$, where $\lambda'(e) := 1 - \lambda(e)$ for every edge of $G$.
Therefore, the next two theorems are proved for temporal source, but they are stated and valid also for temporal sink.

\begin{theorem}
\label{th:avgSource}

The function $\frac{2\log n}{n}$ is a sharp threshold for Temporal Source (resp. Temporal Sink). 
More specifically, for any sufficiently large $n$,
a fixed vertex $v$ in $\G \sim \F_{n,p}$
\begin{enumerate}[(i)]
\item
is  \emph{not} a temporal source (resp. temporal sink)
with probability at least  $1 - \frac{5}{\log n}$,
if $p \leq\frac{2\log n}{n}-\eps$;
\item
is a temporal source (resp. temporal sink)
with probability at least $1-\frac{5}{\log n}$,
if $p\geq\frac{2\log n}{n}+\eps$,
\end{enumerate}
where $\eps := \frac{3(\log n)^{0.8}}{n} \in o\left(\frac{\log n}{n}\right)$.
\end{theorem}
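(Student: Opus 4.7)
The plan is to reduce the question about $\G\sim\F_{n,p}$ to a question about the growth time $Y^v_{n-1}$ of the foremost tree in $\G\sim\F_{n,1}$, and then invoke the concentration result of Corollary~\ref{lm:LastYCappedConcentrate} directly. Recall from the discussion at the start of Section~\ref{sec:foremostTree} that drawing $\G'\sim\F_{n,p}$ is equivalent (up to rescaling labels by $1/p$) to drawing $\G\sim\F_{n,1}$ and keeping only the edges with label at most $p$. Under this coupling, a fixed vertex $v$ is a temporal source in $\G'$ if and only if the foremost tree for $v$ in $\G$ has all its edge labels at most $p$, which, by Lemma~\ref{lem:foremostTree}\ref{lambda_inequality}, is the same as the condition $Y^v_{n-1}\le p$.

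First, I would use Lemma~\ref{lm:XCapAAS}\ref{capped_eq_uncapped} to replace $Y^v_{n-1}$ by its truncated counterpart $\capped{Y}^v_{n-1}$: with probability at least $1-4/\log n$ we have $\capped{X}^v_k = X^v_k$ for all $k\in[n-1]$, and hence $\capped{Y}^v_{n-1}=Y^v_{n-1}$. Next, I would apply Corollary~\ref{lm:LastYCappedConcentrate}, which states that, with probability at least $1-2n^{-\sqrt{\log n}}$,
\[
\left| \capped{Y}^v_{n-1} - \frac{2\log n}{n} \right| \leq \frac{3(\log n)^{0.8}}{n} = \eps(n).
\]
Combining these two facts via a union bound, we obtain, with probability at least
\[
1 - \frac{4}{\log n} - 2n^{-\sqrt{\log n}} \ge 1 - \frac{5}{\log n}
\]
for all sufficiently large $n$, that
\[
\frac{2\log n}{n} - \eps(n) \;\leq\; Y^v_{n-1} \;\leq\; \frac{2\log n}{n} + \eps(n).
\]

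To conclude, I would split into the two cases: if $p<\frac{2\log n}{n}-\eps(n)$, then on this event $Y^v_{n-1}>p$, so $v$ fails to reach at least one vertex by time $p$ and therefore is \emph{not} a temporal source in $\G\sim\F_{n,p}$, giving part~(i); if $p>\frac{2\log n}{n}+\eps(n)$, then $Y^v_{n-1}\le p$, so $v$ reaches every other vertex by time $p$ and is a temporal source, giving part~(ii).

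The main obstacle here is essentially just bookkeeping: the substantive probabilistic content has been absorbed into Corollary~\ref{lm:LastYCappedConcentrate}, so the only care needed is in justifying the $\F_{n,p}\leftrightarrow\F_{n,1}$ reduction cleanly, and in verifying that the failure probabilities $4/\log n$ (from de-truncation) and $2n^{-\sqrt{\log n}}$ (from Azuma) combine to at most $5/\log n$ for large $n$. No new concentration inequality or structural argument is required.
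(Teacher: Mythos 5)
Your proposal is correct and follows essentially the same route as the paper: the paper's proof likewise combines Corollary~\ref{lm:LastYCappedConcentrate} with Lemma~\ref{lm:XCapAAS}\ref{capped_eq_uncapped} via a union bound, and concludes by observing that $Y^v_{n-1}$ is exactly the time at which $v$ becomes a temporal source. Your additional bookkeeping (the $\F_{n,p}\leftrightarrow\F_{n,1}$ coupling and the appeal to the monotonicity of the tree's labels) is just an explicit spelling-out of the reduction the paper sets up at the start of Section~\ref{sec:tempProp}.
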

\begin{proof}
Follows directly from \cref{th:reachability}~\ref{th:reachability-3} and \ref{th:reachability-4} when setting $z := 0$.
\end{proof}

\begin{theorem}
\label{th:firstSource}

The function $\frac{2\log n}{n}$ is a sharp threshold for First Temporal Source (resp. First Temporal Sink). 
More specifically, for any sufficiently large $n$,
a random temporal graph $\G \sim \F_{n,p}$
\begin{enumerate}[(i)]
\item
\emph{does not} contain a temporal source (resp. temporal sink)
with probability at least $1-2n^{-\sqrt{\log n}}$,
if $p\leq\frac{2\log n}{n}-\eps$;
\item
contains a temporal source (resp. temporal sink)
with probability at least $1-\frac{5}{\log n}$,
if $p\geq\frac{2\log n}{n}+\eps$,
\end{enumerate}
where $\eps := \frac{3(\log n)^{0.8}}{n} \in o\left(\frac{\log n}{n}\right)$.
\end{theorem}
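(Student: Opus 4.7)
The plan is to derive both directions by leveraging what has already been established, with one subtle asymmetry that explains why part (i) admits a much stronger probability bound than part (ii).

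For part (ii), I would simply observe that the presence of at least one temporal source is implied by any specific vertex being a temporal source. So the conclusion follows immediately from \cref{th:avgSource}(ii) applied to, say, the vertex labeled $v_1$: with probability at least $1-\frac{5}{\log n}$ this fixed vertex is a temporal source, and in particular $\G$ contains a temporal source. No additional work is needed here.

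For part (i), the naive approach — apply \cref{th:avgSource}(i) to each vertex and union-bound — fails catastrophically, because the per-vertex failure probability is only $O(1/\log n)$ and we would need to union-bound over $n$ vertices. The key observation is that the quantitatively stronger statement in \cref{lm:LastYCappedConcentrate} already holds \emph{simultaneously for every vertex} $v$ with probability at least $1-2n^{-\sqrt{\log n}}$, because it is derived from \cref{th:YhatConc} which is uniform in $v$. Under that event we have, for every $v$,
\[
\capped{Y}^v_{n-1} \geq \frac{2\log n}{n} - \frac{3(\log n)^{0.8}}{n} = \frac{2\log n}{n} - \eps(n).
\]

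Now comes the asymmetry I want to exploit: since $\capped{X}^v_k = \min\{X^v_k, c_k\} \leq X^v_k$, we have the one-sided bound $\capped{Y}^v_{n-1} \leq Y^v_{n-1}$ \emph{deterministically}, without invoking the equality clause of \cref{lm:XCapAAS}\ref{capped_eq_uncapped} (which would cost us an extra $4/\log n$ per vertex). Therefore $Y^v_{n-1} \geq \frac{2\log n}{n} - \eps(n) > p$ for every vertex $v$ whenever $p < \frac{2\log n}{n} - \eps(n)$. Recalling that $Y^v_{n-1}$ is exactly the arrival time at which $v$ becomes a temporal source in the underlying complete temporal graph drawn from $\F_{n,1}$, this means that in the restriction $\G_{[0,p]} \sim \F_{n,p}$ no vertex reaches all others, i.e.\ $\G$ contains no temporal source.

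The main obstacle — really the only one — is to notice and articulate that the truncation inequality gives a free one-sided bound. Once that is in place, part (i) is a one-line application of \cref{lm:LastYCappedConcentrate}, and part (ii) is a one-line application of \cref{th:avgSource}(ii). I do not expect any computational difficulty; the $\varepsilon(n)$ expressions match those in \cref{th:avgSource} by construction.
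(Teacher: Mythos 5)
Your proposal is correct and matches the paper's proof exactly: part (ii) is deduced from \cref{th:avgSource}, and part (i) uses the uniform-in-$v$ statement of \cref{lm:LastYCappedConcentrate} together with the deterministic inequality $\capped{Y}^v_{n-1}\leq Y^v_{n-1}$, which is precisely the one-sided trick the paper employs. No differences worth noting.
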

\begin{proof}
We observe that the upper bound of $\frac{2\log n}{n}+\eps$ on the threshold follows from \cref{th:avgSource},
and we only need to show the lower bound.
Recall that by definition we always have $\capped{Y}_{n-1}^v\leq{}Y_{n-1}^v$.
Therefore, by \cref{lm:LastYCappedConcentrate}, 
with probability at least $1-2n^{-\sqrt{\log n}}$ the inequality
$Y_{n-1}^v\geq\capped{Y}_{n-1}^v >\frac{2\log n}{n}-\frac{3(\log n)^{0.8}}{n}$
holds for all vertices $v$. In other words, with probability at least $1-2n^{-\sqrt{\log n}}$
no vertex is a temporal source
until time $\frac{2\log n}{n}-\frac{3(\log n)^{0.8}}{n}$.
\end{proof}

\subsection{Temporal Connectivity}

In this section, we establish a sharp threshold for Temporal Connectivity.

\begin{theorem}\label{th:tempConnectivity}
	The function $\frac{3 \log n}{n}$ is a sharp threshold for Temporal Connectivity.
	More specifically, for any sufficiently large $n$, a random temporal graph in $\F_{n,p}$
	\begin{enumerate}[(i)]
		\item is \emph{not} temporally connected \aas, if $p \leq \frac{3\log n}{n} -\eps$,
		where $\eps := \frac{6 (\log n)^{0.8}}{n} \in o\left(\frac{\log n}{n}\right)$;
		
		\item is temporally connected \aas, if $p \geq \frac{3\log n}{n} + \eps$,
		where $\eps := \frac{3 (\log n)^{0.8}}{n} \in o\left(\frac{\log n}{n}\right)$.
	\end{enumerate}
\end{theorem}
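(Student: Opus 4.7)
The threshold $3\log n/n$ reflects that, while a fixed vertex becomes a temporal source by time $\approx 2\log n/n$ (Theorem~\ref{th:avgSource}), requiring \emph{every} one of the $n$ vertices to be a source simultaneously costs an additional $\log n/n$, the typical scale of the extreme over $n$ nearly-exponential waiting times with rate $\sim n$ for the final edge of each foremost tree.

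For the upper bound (ii), I would split $[0,p]$ into two subintervals $I_1=[0,p_1]$ with $p_1 = 2\log n/n + \eps_1$ and $I_2=(p_1,p]$ of length $\approx \log n/n$. In $I_1$, I would apply \cref{th:YhatConc} at a sub-terminal step $k = n-K$ with $K$ polylogarithmic (for instance $K = \ceil*{(\log n)^2}$): the sum $\sum_{i=1}^{n-K} 1/(i(n-i)+1)$ is still $(2+o(1))\log n/n$, so that with probability $1-o(1)$ every vertex $v$ reaches at least $n-K$ others by time $p_1$. In $I_2$ the (still unexposed) edges form an essentially independent Erd{\H{o}}s--R{\'e}nyi layer of density $\approx \log n/n$; a two-hop-style argument in the spirit of \cref{thm:2hop-result}, combined with the fact that each reached set $V(T^v_{n-K})$ has size $n-K$, shows that every straggler can be reached from $v$ via a freshly-exposed edge to an already-reached vertex. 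A union bound over the at most $nK = O(n (\log n)^2)$ problematic pairs $(v,u)$ then gives temporal connectivity \aas.

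For the lower bound (i), I would show that \aas{} some vertex fails to be a temporal source, equivalently $\max_v Y^v_{n-1} > 3\log n/n - \eps$. The heuristic is that, conditional on the foremost tree at step $n-2$, the remaining waiting time $X^v_{n-1}$ is essentially exponential with rate $\leq n-1$, and a maximum over $n$ such roots contributes $\log n/n$. To make this rigorous, I would exploit the source/sink duality used already after \cref{th:avgSource}: applying the time reversal $\lambda \mapsto 1-\lambda$ turns the statement into a lower bound on the time at which the \emph{last} vertex to be reachable from outside its in-tree becomes reachable. Using \cref{lm:LastYCappedConcentrate} to place $\capped{Y}^v_{n-2}$ at $2\log n/n - o(\log n/n)$ and a direct Poisson-type analysis of ``late arrivals''  -- explicitly, estimating the probability that some vertex $u$ has no edge incident to it in $[p_1', p]$, for $p_1'$ just below $3\log n/n - \eps$ and a suitable $p_1'$ --  yields the required extreme-value bound.

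The main obstacle is the lower bound: the waiting times $X^v_{n-1}$ for different roots $v$ are strongly correlated, since the corresponding foremost trees share many edges. To sidestep this dependency, I would not try to maximize over $v$ directly but rather identify a single ``bottleneck'' vertex $u$ whose few edges with labels in $[0,p]$ all lie in a small window well below $p$; such a $u$ exists \aas{} by a first-moment estimate for $p < 3\log n/n - \eps$, and once $u$ is fixed, reaching $u$ from most roots $v$ forces a path whose penultimate edge has label strictly below that window, which is impossible once the time budget is exhausted. A first-moment calculation of this kind over the $n$ candidate bottlenecks yields a lower bound matching part (i).
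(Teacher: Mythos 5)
Your lower-bound strategy -- a vertex $u$ having no incident edges in a window of length roughly $\log n/n$ just below $p$, combined with the fact that no temporal sink can exist before time $(2-o(1))\log n/n$ -- is essentially the paper's argument (\cref{lm:tconnLower}). The paper handles the conditioning slightly differently: it works in the fresh layer $\G_{[q,p]}$ above the (random) first-sink time $q$ and asks for \emph{three} isolated vertices there, since at most two vertices can already be sinks at time $q$; also, existence of such a vertex needs a second-moment (not first-moment) argument. But these are repairable details; the idea is the right one.

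The upper bound, however, has a genuine gap: the phase-1 claim that \aas \emph{every} vertex $v$ reaches at least $n-K$ others by time $p_1 = 2\log n/n + o(\log n/n)$ with $K$ polylogarithmic is false. First, it does not follow from the cited tools: \cref{th:YhatConc} controls $\capped{Y}^v_k$, and since $\capped{Y}^v_k \le Y^v_k$ an upper bound on the capped quantity says nothing about $Y^v_k$; the de-capping step (\cref{lm:XCapAAS}~(ii)) holds only with probability $1-4/\log n$ \emph{per vertex}, which does not survive a union bound over all $n$ roots. Second, the claim is actually false: \aas some vertex $v^*$ is isolated in $\G_{[0,t_0]}$ for $t_0 = (\log n - 2\log\log n)/n$ (the expected number of such vertices is $(\log n)^2$), hence $Y^{v^*}_1 > t_0$, and then $Y^{v^*}_{n-K} \ge Y^{v^*}_1 + \capped{Y}^{v^*}_{n-K} - c_1 \ge (3-o(1))\log n/n$ by \cref{th:YhatConc}. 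Such a $v^*$ has reached only about $n - n/(\log n)^2$ vertices by time $p_1$, so it enters your second phase with $n^{1-o(1)}$ stragglers rather than $(\log n)^2$. (Even granting phase 1, the phase-2 union bound over $nK$ pairs with per-pair failure probability about $(1-\log n/n)^{n-K}\approx 1/n$ gives $(\log n)^2 \not\to 0$, so the two-hop repair you allude to would have to be carried out in earnest.) The paper circumvents exactly this difficulty by arguing about temporal \emph{sinks}: using \cref{th:avgSource} and Markov's inequality it accepts that up to $r = n\log\log\log n/\log n$ vertices fail to be sinks by time $q = 2\log n/n + \eps$, but each such straggler is then repaired by a \emph{single} fresh edge to any already-established sink, so the final union bound runs over only $r$ vertices, each failing with probability about $(\log\log n)^2/n$.
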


We split the proof of \cref{th:tempConnectivity} into two parts. 
Note that a temporal graph is temporally connected if and only if each of its vertices is a temporal sink.
Our strategy is to show that on the one hand before time $\frac{3 \log n}{n}$ at least one vertex in
a random temporal complete graph $\G \sim \F_{n,1}$ is not a temporal sink \aas (\cref{lm:tconnLower}), 
and on the other hand after time $\frac{3 \log n}{n}$ all vertices in $\G$ are temporal sinks \aas (\cref{lem:connUpperBound}).

\begin{lemma}
\label{lm:tconnLower}
Let $p \leq \frac{3\log n}{n} -\eps$, where $\eps := \frac{6 (\log n)^{0.8}}{n}$,
and let $\G \sim \F_{n,1}$.
Then, \aas the temporal graph $\G_{[0, p]}$ contains at least one vertex which is not a temporal sink.
\end{lemma}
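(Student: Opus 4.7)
The plan is to show that, \aas, $\max_v Y^v_{n-1} > p$; since $\G_{[0,p]}$ is temporally connected if and only if every vertex is a temporal source---equivalently, if and only if every vertex is a temporal sink---this in particular produces a vertex which fails to be a temporal sink.

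Decomposing $Y^v_{n-1} = Y^v_{n-2} + X^v_{n-1}$, I would first apply \cref{th:YhatConc} with $k = n-2$, together with the asymptotic $\sum_{i=1}^{n-2} \frac{1}{i(n-i)+1} = \frac{2\log n}{n} - O(1/n)$, to obtain the uniform lower bound
\[
  Y^v_{n-2} \;\geq\; \capped{Y}^v_{n-2} \;\geq\; \frac{2\log n - 3(\log n)^{0.8}}{n},
\]
holding simultaneously for every $v$ with probability $1 - o(1)$. Thanks to this, it suffices to exhibit, \aas, a vertex $v$ with $X^v_{n-1} > t^* := \frac{\log n - 3(\log n)^{0.8}}{n}$, since then $Y^v_{n-1} > \frac{3\log n - 6(\log n)^{0.8}}{n} = p$.

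For the tail of $X^v_{n-1}$, I would invoke the stochastic lower bound embedded in the proof of \cref{lem:ExpOfXk}: conditionally on $\A^v_{n-2}$, the variable $X^v_{n-1}$ dominates the minimum of $n-1$ i.i.d.\ uniforms on $[0, 1 - Y^v_{n-2}]$. Combined with the \aas upper bound $Y^v_{n-2} \leq \frac{\log n}{\sqrt n} = o(1)$ coming from \cref{cor:2hopTempConn}, this yields
\[
  \prob[X^v_{n-1} > t^* \mid \A^v_{n-2}] \;\geq\; \left(1 - \frac{t^*}{1 - Y^v_{n-2}}\right)^{n-1} \;\geq\; \frac{e^{3(\log n)^{0.8}}}{n}\,(1 - o(1)).
\]
If $N$ counts the vertices $v$ with $X^v_{n-1} > t^*$, this gives $\expect[N] \geq e^{3(\log n)^{0.8}}(1 - o(1)) \to \infty$.

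It remains to upgrade $\expect[N] \to \infty$ into $\prob[N \geq 1] \to 1$, which I would attempt by a second-moment (Paley--Zygmund) argument, targeting $\expect[N^2] \leq (1 + o(1))(\expect N)^2$. This last step is the main obstacle: the variables $\{X^v_{n-1}\}_v$ are not independent, because the foremost-tree processes launched at distinct roots $v, v'$ reveal overlapping information about the edge labels, and, most delicately, the ``last vertices'' $u(v)$ and $u(v')$ added at step $n-1$ of each process may coincide, producing a positive correlation. Bounding $\prob[X^v_{n-1} > t^*,\, X^{v'}_{n-1} > t^*]$ by a $(1+o(1))$ multiple of the product of marginals---for instance by conditioning on the pair $(u(v), u(v'))$ and exploiting the residual independence of the edge labels incident to these last vertices---is the technical crux of the argument.
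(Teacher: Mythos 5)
Your reduction to exhibiting one root $v$ with $X^v_{n-1} > t^*$ is sound, and the marginal bound $\prob[X^v_{n-1}>t^*]\geq (1-o(1))\,e^{3(\log n)^{0.8}}/n$ is correct, but the step you yourself flag as the crux is a genuine gap, and I do not believe it can be closed along the lines you sketch. The events $A_v:=\{X^v_{n-1}>t^*\}$ are driven by a \emph{common cause}: by \cref{th:YhatConc} the times $Y^v_{n-2}$ agree across all roots up to $O((\log n)^{0.8}/n)=o(t^*)$, and $A_v$ occurs exactly when the last vertex $u(v)$ has no incident label in the window $(Y^v_{n-2},\,Y^v_{n-2}+t^*]$. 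Whether a given vertex $u$ is reached late, and whether its labels avoid a window sitting just after $2\log n/n$, are essentially properties of $u$'s own edge labels together with the (concentrated, root-independent) profile of reach times of the bulk; so when such a $u$ exists it tends to be the last vertex for \emph{every} root. Consequently $\prob[A_v\cap A_{v'}]$ is dominated by the pairs with $u(v)=u(v')$ and is of a much larger order than $\prob[A_v]\prob[A_{v'}]$; Paley--Zygmund applied to $N=\sum_v \mathbf{1}[A_v]$ then yields only $\prob[N\geq 1]\geq \expect[N]^2/\expect[N^2]\to 0$. Conditioning on the pair $(u(v),u(v'))$ does not rescue this, because the coincidence $u(v)=u(v')$ is not a negligible diagonal term to be discarded --- it is the main contribution.

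The way out is to run the second moment over candidate \emph{last} vertices $u$ rather than over roots, which is what the paper does. Let $q$ be the first time at which any vertex is a temporal sink; by \cref{th:firstSource} (in its ``sink'' form), $q\geq \frac{2\log n}{n}-\frac{3(\log n)^{0.8}}{n}$ \aas. A vertex that is not a sink at time $q$ and acquires no edge in $(q,p]$ cannot be a sink at time $p$, and the underlying graph of $\G_{[q,p]}$ is a subgraph of $G_{n,u'}$ with $u'=\frac{p-q}{1-q}<\frac{\log n-(\log n)^{0.8}}{n}$, which \aas has at least three isolated vertices by the classical second-moment computation for isolated vertices (here the events ``$u$ isolated'' and ``$u'$ isolated'' for distinct $u,u'$ really are only weakly correlated, unlike your $A_v$'s). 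One final ingredient is needed that your sketch also lacks: an isolated vertex of the increment graph could already be a sink at time $q$. Since the single edge arriving at time $q$ can create at most two new sinks, at most two vertices are sinks at time $q$, so three isolated vertices guarantee a witness. In short, your first paragraph is reusable, but the counting variable must be re-indexed by the late vertex, not the root.
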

\begin{proof}
Let $q$ be the time at which the first vertex in $\G$ becomes a temporal sink,
i.e., 
\[
	q := \min\{q' \mid \G_{[0, q']} \text{ has a temporal sink}\}.
\]
By \cref{th:firstSource}, $q \geq \frac{2 \log n}{n} - \frac{3 (\log n)^{0.8}}{n} = \frac{2 \log n}{n} - \frac{\eps}{2}$ a.a.s.
Note that $\G_{[0, q]}$ has at most two temporal sinks,
since the addition of a single edge can not turn more than two vertices into temporal sinks.
Furthermore, in order for a vertex which is not a temporal sink at time $q$ to become a temporal sink it has to acquire
at least one edge incident to it after time $q$. Thus, to prove the lemma,
we will show that the underlying graph $H$ of $\G_{[q,p]}$ has at least 3 isolated vertices, and hence at least one of these vertices is not a temporal sink in $\G_{[0, p]}$.

Thereto, note that any pair of vertices forms an edge in $H$ with probability $\gamma := \frac{p - q}{1-q}$,
unless it is an edge in $\G_{[0, q]}$.
Thus, graph $H$ follows the same distribution as drawing a random graph from $G_{n, \gamma}$
and then deleting all edges contained in $\G_{[0, q]}$.

As $\gamma = \frac{p-q}{1-q}$ is maximized when $q$ is minimal, we have for sufficiently large $n$ that
\begin{align*}
	\gamma &= \frac{p-q}{1-q} 
	\leq \frac{3 \frac{\log n}{n} - \eps - 2 \frac{\log n}{n} + \eps/2}{1 - 2 \frac{\log n}{n} + \eps/2}
	< \frac{ \frac{\log n}{n} - \frac{\eps}{2}}{1 - 2\frac{\log n}{n}}
	\\ &= \frac{1}{n} \cdot \frac{\log n - 3 (\log n)^{0.8}}{1 - 2 \frac{\log n}{n}}
	= \frac{1}{n} \left( \log n -  \frac{3 (\log n)^{0.8} - 2 \frac{(\log n)^2}{n}}{1 - 2 \frac{\log n}{n}} \right) 
	\\ &< \frac{1}{n} \left( \log n - \frac{2 (\log n)^{0.8} }{1 - 2 \frac{\log n}{n}} \right)
	< \frac{\log n - (\log n)^{0.8}}{n} \,.
\end{align*}

It is known that $G_{n, \gamma}$ contains \aas more than two isolated vertices if $\lim_{n \to \infty} n (1-\gamma)^{n-1} = \infty$ \cite[Theorem~3.1(ii)]{BollobasRandomGraphs}.
In order to show the latter, we first evaluate
\begin{align*}
	\tau &:= \log n + n \cdot \log  \left( 1 - \frac{\log n - (\log n)^{0.8}}{n} \right)
	= (\log n)^{0.8} - \bigO\left(\frac{(\log n)^2}{n}\right) 
	\xrightarrow{n \to \infty} \infty
\end{align*}
by means of Maclaurin expansion.
Now, using this, we derive
\begin{align*}
	n (1-\gamma)^{n-1} &\geq n \left( 1 - \frac{\log n - (\log n)^{0.8}}{n} \right)^{n-1}
	\\ &= \exp\left( \log n + (n-1) \log \left( 1 - \frac{\log n - (\log n)^{0.8}}{n} \right) \right)
	\\ &\geq e^\tau
	 \xrightarrow{n \to \infty} \infty \,.
\end{align*}
Thus, $G_{n,\gamma}$, and therefore $H$, contains at least three isolated vertices \aas, as required.
\end{proof}

\begin{lemma}
	\label{lem:connUpperBound}
	Let $p \geq \frac{3\log n}{n} + \eps$, where $\eps := \frac{3 (\log n)^{0.8}}{n}$,
	and let $\G \sim \F_{n,1}$.
	Then, \aas every vertex in $\G_{[0, p]}$ is a temporal sink.
\end{lemma}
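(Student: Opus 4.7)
My plan is to use the time-reversal symmetry of $\F_{n,1}$ to reduce the ``all sinks'' claim to an ``all sources'' claim, and then to strengthen the concentration developed in \cref{sec:foremostGrowth} so that a union bound over all $n$ vertices succeeds.

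Under the measure-preserving involution $\lambda \mapsto 1 - \lambda$ on $\F_{n,1}$, being a temporal sink of $\G_{[0,p]}$ is equivalent to being a temporal source of a time-shifted copy, so the lemma reduces to showing $\max_v Y^v_{n-1} \leq p$ \aas, where $Y^v_{n-1}$ is the foremost-tree completion time of $v$ defined in \cref{sec:foremostGrowth}. From \cref{lm:LastYCappedConcentrate} we already have $\max_v \capped{Y}^v_{n-1} \leq \frac{2\log n}{n} + \frac{3(\log n)^{0.8}}{n}$ with probability $1 - 2n^{-\sqrt{\log n}}$, so the remaining task is to control the cap-deficit $D^v := Y^v_{n-1} - \capped{Y}^v_{n-1} = \sum_{k=1}^{n-1}(X^v_k - c_k)_+$ uniformly over $v$ by something of order $o(\log n/n)$.

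To do this, I would partition the indices $k \in [n-1]$ into a middle regime, where $c_k \cdot k(n-k)$ is large enough that the per-step tail $\Pr[X^v_k > c_k]$ is polynomially small in $n$ (so a direct union bound over $v$ and $k$ controls the corresponding excesses), and two extreme regimes $k = O(\mathrm{polylog}\, n)$ and $n-k = O(\mathrm{polylog}\, n)$ where the per-step failure probability is only $O(1/\log n)$. For the extremes I would exploit the conditional representation of $X^v_k$ given $\A^v_{k-1}$ as (essentially) the minimum of $k(n-k)$ independent uniforms, as derived in the proof of \cref{lem:ExpOfXk}, to obtain a sharp exponential tail bound on the restricted sum $\sum_k (X^v_k - c_k)_+$, aiming at $\Pr[D^v > \log n / n] = o(1/n)$ per vertex, sufficient for the union bound over $v$.

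The main obstacle I anticipate is the last few steps $k \approx n-1$: both $c_k$ and the conditional mean of $X^v_k$ are of order $(\log\log n)/n$, and the per-step cap-failure probability is as large as $1/\log n$. If the direct tail analysis cannot be pushed to $o(1/n)$ per vertex, I would fall back on a hub-based re-routing argument: by \cref{th:firstSource} and the symmetric version for sinks (via time reversal within $[0, t_1]$), a.a.s.\ there exist both a temporal source $w$ and a temporal sink $w'$ of $\G_{[0, t_1]}$ for $t_1 = \frac{2\log n}{n} + \frac{3(\log n)^{0.8}}{n}$; then the slack $p - t_1 \geq \frac{\log n}{n}$ can be used to bridge $w' \rightsquigarrow w$ (via edges with labels in $[t_1, p]$) for the few pairs $(u,v)$ whose direct foremost path is too slow, with the concatenation $u \to w' \to w \to v$ realized by exploiting the flexibility of multiple alternative $w \to v$ paths in $\G_{[0, t_1]}$ with distinct first-edge times.
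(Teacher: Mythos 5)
The time-reversal reduction to sources is fine, but both of your routes have genuine gaps. For the primary route: after invoking \cref{lm:LastYCappedConcentrate}, the budget left for the cap-deficit $D^v = Y^v_{n-1}-\capped{Y}^v_{n-1}$ is exactly $\frac{\log n}{n}$, and already the single last step satisfies $\prob[X^v_{n-1}>\tfrac{\log n}{n}\mid \A^v_{n-2}] = \bigl(1-\tfrac{\log n}{n}(1+o(1))\bigr)^{n-1} = n^{-1+o(1)}$ (by the min-of-$(n-1)$-uniforms representation in \cref{lem:ExpOfXk}). So the per-vertex target $\prob[D^v>\tfrac{\log n}{n}]=o(1/n)$ can only be met by exploiting the additional factor $\prob[X^v_{n-1}>c_{n-1}]\approx \tfrac{1}{\log n}$ contributed by the cap; the true order of $\prob[D^v>\tfrac{\log n}{n}]$ is about $\tfrac{1}{n\log n}$, and a standard Chernoff/Azuma bound on the sum loses precisely that factor and returns only $n^{-1+o(1)}$, which fails the union bound. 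Extracting the factor for the whole sum, with the dependencies across steps, is the entire difficulty and is not addressed. Your regime split is also miscalibrated: a union bound over $v$ and $k$ on the events $\{X^v_k>c_k\}$ costs $\sum_{v,k}\frac{1}{(\min\{k,n-k\})^2\log n}$, which is $o(1)$ only when restricted to $\min\{k,n-k\}\gg n/\log n$; the problematic regime is therefore almost all of $[n-1]$, not $O(\mathrm{polylog}\,n)$ steps.

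The fallback is temporally inconsistent and cannot be repaired within the time budget: every $w\to v$ path in $\G_{[0,t_1]}$ uses only labels at most $t_1$, which precede the labels of the bridge $w'\rightsquigarrow w$ in $[t_1,p]$, so $u\to w'\to w\to v$ is never a temporal path no matter which alternative $w\to v$ path you choose; moving the source tree of $w$ to a window starting after the bridge costs at least $\tfrac{4\log n}{n}$ in total, which is the spanner bound, not the connectivity threshold. The paper's argument rests on an idea missing from your proposal: by \cref{th:avgSource} and Markov's inequality, \aas at most $r=\tfrac{n\log\log\log n}{\log n}=o(n)$ vertices fail to be sinks of $\G_{[0,q]}$ with $q=\tfrac{2\log n}{n}+\eps$, and each such vertex $w$ becomes a sink the moment it acquires a \emph{single} edge with label in $(q,\,q+\tfrac{\log n}{n}]$ to any of the $\geq n-r$ vertices that are already sinks (everyone reaches that sink by time $q$ and then crosses the new edge). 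Since each slow vertex has about $n$ candidate edges but there are only $r=o(n)$ slow vertices, the maximum waiting time stays below $\tfrac{\log n}{n}$ \aas, essentially because $\tfrac{\log r}{n}<\tfrac{\log n}{n}$. This single-edge patching of the $o(n)$ slow vertices --- rather than a per-vertex tail bound or a hub routing --- is what makes the $\tfrac{3\log n}{n}$ bound attainable.
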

\begin{proof}
	Let $q := \frac{2 \log n}{n}+ \eps$
	and $r := \frac{n}{\log n} \log\log\log n$.
	It follows from \cref{th:avgSource}
	that the probability for an arbitrary vertex \emph{not} to be temporal sink in $\G_{[0, q]}$ is at most $\frac{5}{\log n}$.
	Hence, the expected number of vertices that are not temporal sinks in $\G_{[0, q]}$ is at most $\frac{5n}{\log n}$, and therefore, by Markov's inequality, the probability of $\G_{[0, q]}$ having more than $r$ vertices that are not temporal sinks is at most
	$\frac{5n}{\log n} \cdot \frac{\log n}{n \cdot \log\log\log n}
	= \frac{5}{\log\log\log n} \in o(1)$. We will show that \aas no more than $\frac{\log n}{n}$ extra time is required for these at most $r$
	remaining vertices to become temporal sinks.
	
	Let $S \subseteq V$ be the set of vertices that are temporal sinks in $\G_{[0, q]}$.
	Observe that if at some time $t > q$ a vertex $w \in V \setminus S$ acquires an edge that connects $w$ with a temporal sink
	in $S$, then $w$ becomes a temporal sink no later than time $t$ (notice, that $w$ can become a temporal sink
	before time $t$). Therefore, to prove the lemma, we will show that \aas by time $p = q + \frac{\log n}{n}$ every vertex in 
	$V \setminus S$ acquires such an edge.

	First, note that the underlying graph $G_{[0, q]}$ of $\G_{[0, q]}$ is distributed according to $G_{n,q}$.
	Hence, we may apply \cite[Corollary~3.13]{BollobasRandomGraphs} which states that the maximum degree $\Delta(G_{[0, q]})$ of $G_{[0, q]}$ \aas satisfies
	\begin{align*}
		\abs*{ \Delta(G_{[0, q]}) - qn - \sqrt{2q(1-q)n \cdot \log n} 
			+ \log\log n \sqrt{\frac{q(1-q)n}{8 \log n}}
			+ \log\left(2 \sqrt{\pi}\right) \sqrt{\frac{q(1-q)n}{2 \log n}}
			} \\ \leq  \frac{\log n}{2} \sqrt{\frac{q (1-q) n}{\log n}}.
	\end{align*}
	Thus, \aas we have
	\begin{align*}
		\Delta(G_{[0, q]})
		&\leq qn + \sqrt{2q(1-q)n \cdot \log n} + \frac{\log n}{2} \sqrt{\frac{q (1-q) n}{\log n}}
		\\ &< 3 \log n + \sqrt{6} \log n + \frac{\sqrt{3}}{2} \log n
		< 7 \log n
	\end{align*}
	where we used that $\eps < \frac{\log n}{n}$ for sufficiently large $n$.
	For every vertex $w \in V \setminus S$ define
	\[
		C_w := \{ vw \mid v \in S \text{ and } \lambda(vw) > q\}
	\]
	as the set of edges connecting $w$ to $S$ after time $q$.
	By the above, \aas for every $w$, we have that $\abs{C_w} \geq n - 1 - r - 7 \log(n) \geq n - 2r =: d$.
	Let $T_w$ be the \emph{waiting time} for the first of these edges to appear, i.e.,
	$T_w := \min \{ \lambda(vw) - q \mid vw \in C_w \}$.
	Note that the time labels of the edges in $C_w$ are independently and uniformly distributed on the interval $I = (q, 1]$.
	Thus, we have $\prob[T_w > x] \leq (1-x)^d$ for any $0 \leq x \leq 1-q$.
	In particular, 
	\begin{align*}
		\tau := \prob\left[T_w > \frac{\log n}{n}\right] &\leq \left(1 - \frac{\log n}{n} \right)^{d}
		\\ &= \left(1 - \frac{\log n}{n} \right)^{\frac{n}{\log n} (\log n \,-\, 2 \log\log\log n)}
		\\ &\leq e^{-(\log n \,-\, 2 \log \log \log n)}
		\\ &= \frac{(\log\log n)^2}{n} < 1 \,.
	\end{align*}
	
	Now let $T$ be the waiting time of the last vertex in $V \setminus S$ to become adjacent to at least one of the temporal sinks in $S$, i.e.,
	$T = \max \{ T_w ~|~ w \in V \setminus S \}$.
	Because the waiting times $T_w, w \in V \setminus S$ are all independent, we have
	\[
		\prob[T > x] = 1 - \prod_w \prob[T_w < x] \leq 1 - (1 - (1-x)^d)^r.
	\]
	In the rest of the proof we show that $\lim_{n \to \infty} \prob[T > \frac{\log n}{n}] = 0$.
	For this purpose, we derive
	\begin{align*}
		0 \geq r \cdot \log(1-\tau) 
		&\geq r \cdot \log \left(1 - \frac{(\log\log n)^2}{n}\right)
		\\ &\geq - 2r \frac{(\log\log n)^2}{n} \quad
		\text{[\small{since $(\log\log n)^2/n < 0.7$, assuming $n \geq 2$}]}
		\\ &= - 2 \frac{(\log\log n)^2 \cdot \log\log\log n}{\log n} \xrightarrow{n \to \infty} 0 \,.
	\end{align*}
	Thus, we have $\lim_{n\to\infty} r \cdot \log(1-\tau) = 0$ and therefore
	\begin{align*}
		\prob\left[T > \frac{\log n}{n}\right]
		 \leq 1 - ( 1 - \tau)^r
		 = 1 - \exp\left( r \cdot \log(1-\tau) \right)
		 \xrightarrow{n \to \infty} 1 - e^0 = 0 \,.
	\end{align*}
\end{proof}

\subsection{Temporal Spanners}
\label{sec:temporalSpanners}

The results from \cref{sec:TempSource} can be directly used to show that
$\frac{4 \log n}{n}$ is a sharp threshold for the existence of a pivotal spanner.
Indeed, let $\G \sim \F_{n,p}$ for $p \geq \frac{4 \log n}{n} + \frac{6(\log n)^{0.8}}{n}$,
and let $v$ be an arbitrary vertex in $\G$. Then, by \cref{th:avgSource},
with probability at least $1 - 5/\log n$ vertex $v$ is a temporal sink in
$\G_{[0,p_1]}$, and with at least the same probability $v$ is a temporal source in $\G_{[p_1,p]}$,
where $p_1 = \frac{2 \log n}{n} + \frac{3 (\log n)^{0.8}}{n}$.
Hence, by the union bound, with probability at least $1 - 10/\log n$ temporal graph $\G$
contains a pivotal temporal spanner formed as the union of the hindmost tree for $v$ in $\G_{[0,p_1]}$
and the foremost tree for $v$ in $\G_{[p_1,p]}$.
On the other hand, if $p  \leq \frac{4 \log n}{n} - \frac{6(\log n)^{0.8}}{n}$, then by \cref{th:firstSource}
for any $p_1 \in [0, p]$ with probability at least $1 - \frac{5}{\log n}$ either $\G_{[0,p_1]}$ has no temporal sink, or $\G_{[p_1,p]}$ has no temporal source (or both), and therefore $\G$ contains no pivotal temporal spanner.
Consequently, $\frac{4 \log n}{n}$ is a sharp threshold for the emergence of a Pivotal Temporal Spanner.

\begin{theorem}
	\label{th:pivotalSpan}
	The function $\frac{4\log n}{n}$ is a sharp threshold for the emergence of a Pivotal Temporal Spanner.
	More specifically, for any sufficiently large $n$,
	a random temporal graph in $\F_{n,p}$
	\begin{enumerate}[(i)]
		\item
		has no pivotal temporal spanner with probability at least $1 - \frac{5}{\log n}$,
		if $p\leq\frac{4\log n}{n}-\eps$; and
		\item
		has a pivotal temporal spanner with probability at least $1 - \frac{10}{\log n}$,
		if $p\geq\frac{4\log n}{n}+2\eps$,
	\end{enumerate}
	where $\eps := \frac{3(\log n)^{0.8}}{n} \in o\left(\frac{\log n}{n}\right)$.
\end{theorem}

While a pivotal temporal spanner contains only 2 more edges than an optimal spanner, the sharp threshold
for the existence of a pivotal temporal spanner is quite far from the temporal connectivity threshold.
This naturally raises a number of questions. For example, when does an optimal spanner appear? Or,
can we have a temporal spanner with a linear number of edges \aas when $p$ is close to the connectivity threshold?
We partially address the former question in \cref{sec:optimalSpanners} by showing that an optimal spanner exists \aas at $p=\frac{4\log n}{n}$; furthermore, we conjecture that $\frac{4\log n}{n}$ is a sharp threshold for the Optimal Temporal Spanner property.
In \cref{sec:crazy-spanner}, we provide a strong answer to the latter question, by proving that 
$\frac{3 \log n}{n}$ is a sharp threshold for the existence of a nearly optimal temporal spanner, i.e.,
a spanner with $(2+o(1))n$ time-edges.

The proofs of both results are different elaborations of the above simple construction of a pivotal temporal spanner.
The construction of a nearly optimal spanner is more involved among the two, and we develop it iteratively, by
first presenting in the next section a simpler construction of a temporal spanner with at most $5n$ edges at 
$p=3.5 \frac{\log n}{n}$.

\subsubsection{Warm-up: a spanner with $5n$ edges at $3.5  \frac{\log n}{n}$}
\label{sec:warmup}

In this section we show that a random temporal graph from $\F_{n,p}$ \aas contains a temporal spanner
with at most $5n$ edges if $p \geq (3.5+o(1)) \frac{\log n}{n}$.
The proof relies upon a construction of a pair of sparse subgraphs.
Each of these subgraphs consists of two edge-disjoint trees with a common root,
such that every vertex in one of the trees can reach the root before some time $t$, while
every vertex in the other tree can be reached from the root after $t$.
Thus all vertices of the first tree can reach all vertices of the second tree through the root.
The two subgraphs together provide temporal connectivity for all but a sublinear
number of pairs of vertices. Each of the remaining pairs is connected directly by a foremost path.

\begin{theorem}
	\label{th:span3_5}
	A random temporal graph from $\F_{n,p}$ \aas has a spanner with at most $(4+o(1))n$ edges
	if $p\geq 3.5 \frac{\log n}{n} + \frac{6 (\log n)^{0.8}}{n}$.
\end{theorem}
\begin{proof}
	Let $\G \sim \F_{n,1}$. Set $\eps := \frac{3 (\log n)^{0.8}}{n}$, and
	\begin{align*}
		r &:= 1.5\frac{\log n}{n} + \eps, &
		q &:= 2 \frac{\log n}{n} + \eps \,.
	\end{align*}
	
	\noindent
	Let $u$ and $v$ be two arbitrary vertices.
	By \cref{th:avgSource} and \cref{th:reachability} \ref{th:reachability-3} (with $z = 0.5$, $y = 1$),
	we have that each of the following events happens with probability at least 
	$1 - 5 / \log{n}$
	\begin{enumerate}
		\item $u$ is a temporal sink in $\G_{[0, q]}$; we denote by $T_{u}$ the hindmost tree for $u$ in $\G_{[0, q]}$.
		\item $v$ is a temporal source in $\G_{[r, p]}$; we denote by $T_{v}$ the  foremost tree for $v$ in $\G_{[r, p]}$.
		\item $u$ reaches at least $n - \frac{\sqrt{n}}{\log n}$ vertices in $\G_{[q, p]}$; we denote by $T_{u}'$ the foremost tree for $u$ in $\G_{[q, p]}$.
		\item $v$ is reached by at least $n - \frac{\sqrt{n}}{\log n}$ vertices in $\G_{[0, r]}$; we denote by $T_{v}'$ the hindmost tree for $v$ in $\G_{[0, r]}$.
	\end{enumerate}
	
	\noindent
	Clearly, these four trees together have at most $4(n-1)$ edges.
	Let $A$ be the set of vertices \emph{not} reaching $v$ through $T_v'$,
	and analogously let $B$ be the set of vertices \emph{not} reached by $u$ through $T_u'$.
	Since all edges of $T_u$ appear before time~$q$ while all edges of $T_u'$ appear later,
	$T_u \cup T_u'$ provides reachability for all ordered vertex pairs in $V \times (V \setminus B)$.
	Analogously, $T_v' \cup T_v$ provides reachability for all pairs in $(V \setminus A) \times V$.
	Thus it remains to provide reachability only for the pairs in $A \times B$.
	
	By \cref{th:tempConnectivity}, $\G_{[0,p]}$ is temporally connected \aas
	Hence, by \cref{cor:foremost-len-aas}, \aas for every ordered pair $(a,b) \in A \times B$ there exists a foremost $(a,b)$-path in $\G_{[0,p]}$ with fewer than $14 \log n$ edges.
	Therefore, in total these paths have at most 
	$\abs{A \times B} \cdot 14 \log n \leq \frac{n}{(\log n)^2} \cdot 14 \log n = \frac{14n}{\log n}$ edges.
	This collection of paths together with the four trees form a temporal spanner with at most $4(n-1) + \frac{14n}{\log n} = (4+o(1))n$ edges in $\G_{[0,p]}$ a.a.s.
\end{proof}

\subsubsection{Sharp threshold for nearly optimal temporal spanners}
\label{sec:crazy-spanner}

In this section we show that at time $3 \frac{\log n}{n}$,
which is when temporal connectivity emerges (\cref{th:tempConnectivity}),
a graph contains a temporal spanner of nearly optimal size $(2 + o(1))n$ \aas
To achieve this, we extend and refine the approach used in the previous section.
In particular, we give a more subtle construction of the two sparse subgraphs 
(\cref{lm:sink-spanner} and \cref{lm:source-spanner}) and conduct a more involved
accompanying analysis (\cref{lem:sparse-almost-spanner}) to show that the two subgraphs provide temporal connectivity for all but a sublinear number of pairs by time 
$(3+o(1)) \frac{\log n}{n}$. As before, any leftover pairs are connected via direct foremost paths.

Let $\G$ be a temporal graph and let $u$ and $v$ be two vertices in $\G$. 
The \emph{earliest arrival time} to $v$ from $u$ in $\G$, denoted by $\mu_\G(u,v)$, 
is the minimum time $t$ such that there exists a temporal $(u,v)$-path in $\G$ with arrival time $t$.
Similarly, the \emph{latest departure time} from $u$
to $v$ in $\G$, denoted by $\sigma_\G(u,v)$, is the maximum time $t$ 
such that there exists a temporal $(u,v)$-path in $\G$ with departure time $t$.
If $\G$ has no temporal $(u,v)$-path, we set $\mu_\G(u,v) = \infty$
and $\sigma_\G(u,v) = -\infty$. For notational convenience we define $\infty - \infty := 0$.
We proceed with the lemma showing that if $p$ is large enough, then
for any fixed vertex $u$ in $\G \sim \F_{n,p}$ there is \aas a sparse temporal subgraph
in which the earliest arrival time from any vertex $v$ to $u$ is almost the same as in $\G$.

\begin{lemma}
	\label{lm:sink-spanner}
	Let $\delta_0 :=\frac{28}{\log n}$ and $\eps_0 := \frac{6}{(\log n)^{0.2}}$.
	Then
	for any $p \geq(1+\eps_0)\frac{\log n}{n}$
	and any fixed vertex $u$ of $\G\sim{}\F_{n,p}$,
	\aas there is a temporal subgraph $\G'$ of $\G$
	with at most $(1 + \delta_0)n$ edges
	so that
	\begin{align*}
		\abs{\mu_\G(v,u) - \mu_{\G'}(v,u)} \leq \eps_0 \frac{\log n}{n}\end{align*}
	holds for all vertices $v \in V(\G)$.
\end{lemma}
\begin{proof}
	Let $u$ be an arbitrary vertex in $\G$.
	Set $q_+ := (1+ \eps_0/2)\frac{\log n}{n}$ and $q_- := (1- \eps_0/2)\frac{\log n}{n}$.
	By \cref{th:reachability} \ref{th:reachability-3} and \ref{th:reachability-2} (with $z=1$, $y=2$), vertex $u$
	is reachable by all but at most $\frac{n}{(\log n)^2}$ vertices in $\F_{n,q_{+}}$ a.a.s.,
	and vertex $u$ is reachable by at most $\frac{n}{(\log n)^2}$ vertices in $\F_{n,q_{-}}$ a.a.s.
	We construct $\G'$ as the union of
	\begin{enumerate}[(1)]
		\item the hindmost tree for $u$ in $\G_{[0, \,q_{+}]}$, and
		\item an earliest-arrival hindmost $(v,u)$-path in $\G$ of minimal length for each vertex~$v$
		with $\mu_\G(v,u) \notin [q_{-}, q_{+}]$. 
	\end{enumerate}
	Observe that $\mu_\G(v,u) \leq \mu_{\G'}(v,u) \leq q_{+}$ for all vertices $v$
	with $q_{-} \leq \mu_\G(v,u) \leq q_{+}$ and that $\mu_\G(v,u) = \mu_{\G'}(v,u)$
	for all other vertices $v$. 
	Thus $\abs{\mu_\G(v,u) - \mu_{\G'}(v,u)} \leq \eps_0 \frac{\log n}{n}$ holds for all vertices of $\G$.

	To estimate the size of $\G'$, we observe that the hindmost tree in (1) contains at most $n-1$ edges,
	and by the choice of $q_{-}$ and $q_{+}$, the number of paths added in (2) is \aas at most
	$\frac{2n}{(\log n)^2}$.
	Furthermore, by \cref{cor:foremost-len-aas}, \aas each of these paths has length at most $14\log n$.
	Thus, \aas $\G'$ has at most $n-1 + \frac{2n}{(\log n)^2} \cdot 14 \log n \leq (1+\delta_0)n$ edges, as required. 
\end{proof}

The following counterpart of \cref{lm:sink-spanner} holds by symmetry under time reversal and shows that,
for any fixed vertex $u$, \aas $\G$ contains a sparse temporal subgraph in which 
the latest departure times from $u$ to any vertex $v$ is almost the same as in $\G$.

\begin{lemma}
	\label{lm:source-spanner}
	Let $\delta_0 :=\frac{28}{\log n}$ and $\eps_0 := \frac{6}{(\log n)^{0.2}}$.
	Then for any $p\geq(1+\eps_0)\frac{\log n}{n}$
	and any fixed vertex $u$ of $\G\sim{}\F_{n,p}$,
	\aas there is a temporal subgraph $\G'$ of $\G$
	with at most $(1 + \delta_0)n$ edges
	so that
	\begin{align*}
		\abs{\sigma_\G(u,v) - \sigma_{\G'}(u,v)} \leq \eps_0 \frac{\log n}{n}
	\end{align*}
	holds for all vertices $v \in V(\G)$.
\end{lemma}

In the next lemma we combine \cref{lm:sink-spanner,lm:source-spanner} to construct a sparse spanner
in $\G \sim \F_{n,p}$, where $p = (3 + o(1))\frac{\log n}{n}$; the spanner provides temporal connectivity 
for all but at most a sublinear number of pairs of vertices.

\begin{lemma}\label{lem:sparse-almost-spanner}
	Let $\delta :=\frac{56}{\log n}$.
	There exists a function $\eps = \eps(n) \in o(1)$ such that for 
	$p = (3 + \eps)\frac{\log n}{n}$ 
	\aas a temporal graph $\G\sim\F_{n,p}$
	contains a temporal subgraph $\G'$ with at most $(2+\delta)n$ edges
	and there are at most $\frac{n\log\log n}{(\log n)^2}$ ordered pairs of vertices $(v,u)$ for which
	there is no temporal $(v,u)$-path in $\G'$.
\end{lemma}
\begin{proof}
	Let $\eps_0 := \frac{6}{(\log n)^{0.2}}$.
	We assume that $n$ is large enough so that $\eps_0 < 1$.
	By~\cref{th:avgSource} there is $\eps_1 = \eps_1(n) \in o(1)$ 
	such that an arbitrary vertex of $\F_{n,(2 + \eps_1) \frac{\log n}{n}}$ is a temporal source a.a.s.
	If necessary, we increase $\eps_1$ to ensure $(\eps_1 - \eps_0)\log n > 3 (\log n)^{0.8}$.	
	Set $r := \floor{\log\log n}$, $\eps := 2 \eps_1 +  2/r$, and
	\begin{align*}
		p_1  &:= (2 + \eps_1 + 1/r) \lnnf \\
		p_2  &:= p - p_1 = (1+\eps - \eps_1 -1/r) \lnnf = (1+\eps_1+1/r) \lnnf = p_1 - \lnnf 
	\end{align*}
	Fix an arbitrary vertex $u \in \G$.
	Note that, by choice of $\eps_1$ and as $p_1 > (2+\eps_1)\lnnf$,
	vertex $u$ is \aas temporal sink in $\G_{[0, p_1]}$
	and temporal source in $\G_{[p_2, p]}$.
	
	Define 	$\delta_0 := \frac{28}{\log n}$.
	By \cref{lm:sink-spanner}, there exists \aas a temporal subgraph $\G_1$ of $\G_{[0, p_1]}$ 
	of at most $(1 + \delta_0)n$ edges such that for all vertices $v \in V(\G)$
	\[
	\abs{\mu_\G(v,u) - \mu_{\G_1}(v,u)} \leq \eps_0 \frac{\log n}{n} \,.
	\]
	In particular, $u$ is a temporal sink in $\G_1$ and thus every vertex $v$ has $\mu_{\G_1}(v, u) \leq p_1$.
	Symmetrically, by \cref{lm:source-spanner}, there exists \aas a temporal subgraph $\G_2$ of $\G_{[p_2, p]}$
	of at most $(1 + \delta_0)n$ edges such that for all vertices $w \in V(\G)$
	\[
	\abs{\sigma_\G(u,w) - \sigma_{\G_2}(u,w)} \leq \eps_0 \frac{\log n}{n}
	\]
	and in particular $\sigma_{\G_2}(u, w) \geq p_2$ for every vertex $w$, because $u$
	is a temporal source in $\G_{[p_2, p]}$.
	
	We define $\G'$ as the union $\G_1 \cup \G_2$ and observe that $G'$ has at most
	$(2 + 2\delta_0)n = (2 + \delta)n$ edges.
	Note that there exists a temporal $(v,w)$-path in $\G_1 \cup \G_2$ if $\mu_{\G_1}(v, u) < \sigma_{\G_2}(u, w)$.
	Otherwise, \ie if $\mu_{\G_1}(v, u) \geq \sigma_{\G_2}(u, w)$, we call the ordered pair $(v, w)$ \emph{mismatched}.
	We claim that the total number of mismatched pairs, and therefore the number of pairs of vertices
	that are not temporally connected in $\G'$, is at most $r \cdot \frac{n}{(\log n)^2} \leq \frac{n \log\log n}{(\log n)^2}$.
	To this end, we introduce, for $0 \leq i \leq r$,
	\begin{align*}
		q_i &:= p_2 + i \cdot \frac{p_1 - p_2}{r} = p_2 + \frac{i}{r} \cdot \frac{\log n}{n} \\
		A_i &:= \left\{v \in V(\G) \mid \mu_{\G_1}(v, u) \geq q_i \right\}\\
		B_i &:= \left\{w \in V(\G) \mid \sigma_{\G_2}(u, w) \leq q_i \right\}
		\intertext{and, for $0 \leq i < r$,}
		C_i &:= A_i \times B_{i+1} \,.
	\end{align*}
	Note that each mismatched pair is contained in some $C_i$. Indeed, a mismatched pair $(v, w)$
	belongs to $C_i$ for the maximum $i \in \{ 0, 1, \ldots, r-1 \}$ such that $q_i \leq \mu_{\G_1}(v,u)$.
	For every vertex $v \in A_i$, $0 \leq i < r$, we have
	\begin{align*}
		\mu_{\G}(v, u) 
		&\geq \mu_{\G_1}(v, u)  - \eps_0 \frac{\log n}{n} 
		\\ &\geq q_i - \eps_0 \frac{\log n}{n}
		\\ &= p_2 + \left(\frac{i}{r} -\eps_0 \right) \frac{\log n}{n}
		\\ &= \left( 1 + \eps_1 + \frac{1}{r} - \eps_0 + \frac{i}{r} \right) \cdot \frac{\log n}{n}
		\\ &= \left( 1 + \eps_1 - \eps_0 +  \frac{i+1}{r} \right) \cdot \frac{\log n}{n}
		\\ &> \left( 1 + \frac{i+1}{r} \right) \cdot \frac{\log n}{n} + \frac{3(\log n)^{0.8}}{n} =: a_i.
	\end{align*}
	
	\noindent
	On the other hand, by \cref{th:reachability} \ref{th:reachability-3} (with $z=1- (i+1)/r$ and $y=1$),
	with probability at least $1 - 5 / \log{n}$, at least $n-\left\lfloor \frac{n^{1- (i+1)/r}}{\log n} \right\rfloor$ vertices in $\G_{[0,a_i]}$ 
	can reach $u$, and therefore $\mu_{\G}(v, u) \leq a_i$ holds for all those vertices.
	Consequently, by the union bound, with probability at least $1 - 5r/ \log{n}$, we have $|A_i| \leq \frac{n^{1- (i+ 1)/r}}{\log n}$
	for every $0 \leq i < r$.
	
	Similarly, for every vertex $w \in B_i$, $0 < i \leq r$, we have 
	\begin{align*}
		\sigma_{\G}(u, w) 
		&\leq \sigma_{\G_2}(u, w)  + \eps_0 \frac{\log n}{n} 
		\\ &\leq q_i + \eps_0 \frac{\log n}{n}
		\\ &= p_2 + \left(\frac{i}{r} + \eps_0 \right) \frac{\log n}{n}
		\\ &= p - \left( 2 + \eps_1 + \frac{1}{r} - \eps_0 - \frac{i}{r} \right) \frac{\log n}{n} 
		\\ &< p - \left( \left( 2 - \frac{i-1}{r}  \right) \frac{\log n}{n} + \frac{3(\log n)^{0.8}}{n} \right) =: p - b_i.
	\end{align*}	
	
	\noindent
	On the other hand, by \cref{th:reachability} \ref{th:reachability-3} (with $z=(i-1)/r$ and $y=1$),
	with probability at least $1 - 5 / \log{n}$, at least $n-\left\lfloor \frac{n^{(i-1)/r}}{\log n} \right\rfloor$ vertices in $\G_{[p-b_i,p]}$
	can be reached from $u$, and therefore $\sigma_{\G}(u, w)  \geq p - b_i$ holds for all those vertices.
	Consequently, by the union bound, with probability at least $1 - 5r/ \log{n}$, we have $|B_i| \leq \frac{n^{(i-1)/r}}{\log n}$
	for every $0 < i \leq r$.

	Thus \aas  for all $0 \leq i < r$
	\[
	\abs{C_i} = \abs{A_i} \times \abs{B_{i+1}} \leq 
	\frac{n^{1- (i + 1)/r}}{\log n} \cdot \frac{n^{i/r}}{\log n} 
	= \frac{n^{1-1/r}}{(\log n)^2} < \frac{n}{(\log n)^2},
	\]
	and hence, as claimed, overall there are at most $\sum_{i=0}^{r-1} \abs{C_i} \leq r \cdot \frac{n}{(\log n)^2}$ 
	mismatched pairs.
\end{proof}

We are now ready to state and prove our main result.

\begin{theorem}
	\label{th:main}
	Let $\delta := \frac{56}{\log n} + \frac{14 \log\log n}{\log n} \in o(1)$.
	Then $p = 3\frac{\log n}{n}$ is a sharp threshold for a random temporal graph $\G \sim \F_{n,p}$ to have a spanner of size at most $(2 + \delta)n$.
\end{theorem}
\begin{proof}
	Let $\eps'(n)$ be given by \cref{lem:sparse-almost-spanner},
	and $\eps''(n)$ be given by \cref{th:tempConnectivity}.
	Set $\eps := \eps'(n)+\eps''(n) \in o(1)$.
	If $p \leq (3 - \eps) \frac{\log n}{n}$, then by \cref{th:tempConnectivity} \aas a random temporal graph
	$\G \sim \F_{n,p}$ is not temporally connected, and therefore it has no spanner at all.
	Hence, to prove the theorem,we show in the rest of the proof that if $p \geq (3 + \eps) \frac{\log n}{n}$,
	then \aas $\G \sim \F_{n,p}$ has a spanner of size at most $(2 + \delta)n$.
	
	Let $\G'$ be a subgraph of $\G$ with at most $\left( 2 + \frac{56}{\log n} \right)n$ edges as guaranteed by \cref{lem:sparse-almost-spanner}.
	Then \aas $\G'$ contains at most $\frac{n\log\log n}{(\log n)^2}$ ordered pairs $(v, w)$ with no temporal $(v,w)$-path.
	Since $\G$ is \aas temporally connected by \cref{th:tempConnectivity},
	it \aas contains a temporal $(v, w)$-path for each of these pairs $(v, w)$,
	and by \cref{cor:foremost-len-aas} \aas each of these temporal paths can be chosen to have length at most $14 \log n$.
	By adding these paths to $\G'$, we obtain a temporal spanner of size at most
	\[
	\left( 2 + \frac{56}{\log\log n} \right)n + \frac{n\log\log n}{(\log n)^2} \cdot 14 \log n = (2+\delta)n.
	\]
\end{proof}

\subsubsection{Optimal Temporal Spanners}
\label{sec:optimalSpanners}

In this section we show that a random temporal graph
from $\F_{n,p}$ contains an optimal spanner (\ie a spanner with exactly $2n-4$ edges) \aas
whenever $p \geq (4+o(1))\frac{\log n}{n}$.
The main idea of the construction is to replace the pivot vertex in
the pivotal temporal spanner construction by a pivot $4$-cycle.

\begin{theorem}
\label{th:opt-span4}
A random temporal graph in $\F_{n,p}$ \aas has an optimal spanner
if $p \geq 4\frac{\log n}{n} + \eps$
where $\eps := \frac{16 (\log n)^{0.8}}{n} \in o$$\left(\frac{\log n}{n}\right)$.
\end{theorem}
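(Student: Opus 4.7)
The plan is to realize the $2n-4$ edges as the disjoint union of three parts: a spanning reverse foremost forest $F_1$ rooted at a pivot 4-cycle $C$ ($n-4$ edges), the four cycle edges $E(C)$, and a spanning forward foremost forest $F_2$ rooted at $C$ ($n-4$ edges). Set $t^{*} := 2\log n/n + 3(\log n)^{0.8}/n$ and $t^{**} := p - 2\log n/n - 3(\log n)^{0.8}/n$; since $\eps(n) = 16(\log n)^{0.8}/n$, the middle window length $t^{**} - t^{*} \geq 10(\log n)^{0.8}/n$ is positive.

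The core step is to show that \aas $\G$ contains a 4-cycle $C = (a,b,c,d)$ simultaneously satisfying: (a) all four labels of $E(C)$ lie in $[t^{*}, t^{**}]$; (b) the two smallest of those labels lie on a pair of opposite cycle edges, \ie on $\{ab, cd\}$ or on $\{bc, da\}$; (c) some vertex of $C$ is a temporal sink of $\G_{[0, t^{*}]}$; (d) some vertex of $C$ is a temporal source of $\G_{[t^{**}, p]}$. A direct case analysis over the $24$ orderings of four labels around a 4-cycle shows that (b) characterises, with probability $1/3$, exactly the orderings under which, starting at any vertex of $C$ at any time at most $\min_{e \in E(C)} \lambda(e)$, one can reach every other vertex of $C$ using only cycle edges. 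The expected number of 4-cycles satisfying (a)+(b) is $\Theta(n^4 (t^{**}-t^{*})^4) = \Theta((\log n)^{3.2}) \to \infty$; pairs of such cycles with disjoint edge sets dominate the second moment, so Chebyshev yields existence \aas. Since (c) and (d) depend only on edges whose labels lie outside $[t^{*}, t^{**}]$, they are independent of (a)+(b); by \cref{th:avgSource} (applied in both the sink and source directions via the symmetry noted just before that theorem), any fixed vertex fails the required property with probability at most $5/\log n$, so on any fixed 4-cycle both (c) and (d) hold with probability at least $1 - O(1/\log n)$. Folding this into the second-moment estimate produces \aas a 4-cycle satisfying all of (a)--(d).

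Given such $C$, run the backward analogue of \cref{alg:foremostTree} with $C$ treated as a super-sink inside $\G_{[0, t^{*}]}$: by (c), the resulting forest $F_1$ spans $V$, has exactly $n-4$ edges, and uses only labels $\leq t^{*}$. Symmetrically construct $F_2$ inside $\G_{[t^{**}, p]}$ with $n-4$ edges and all labels $\geq t^{**}$. The three parts $F_1$, $E(C)$, and $F_2$ use edges with labels in the pairwise disjoint intervals $[0, t^{*}]$, $[t^{*}, t^{**}]$, and $[t^{**}, p]$, so their union has exactly $2n-4$ edges. To check temporal connectivity, for any $u, v \in V$: $F_1$ brings $u$ to some $x \in C$ by time $\leq t^{*} \leq \min_{e \in E(C)} \lambda(e)$; by (b) the cycle then carries $x$ to any chosen $y \in C$ with arrival time $\leq t^{**}$; and $F_2$ finally takes $y$ to $v$ using only labels $\geq t^{**}$. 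The main obstacle is the second-moment computation in the pivot-existence step: the covariances from 4-cycle pairs sharing one or two edges must be shown to be of strictly lower order than the square of the mean, which is a careful but standard enumeration.
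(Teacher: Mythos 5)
Your proposal follows essentially the same route as the paper: a pivot 4-cycle located in a short middle time window via a second-moment argument, glued between a reverse foremost tree on the early window and a forward foremost tree on the late window, with \cref{th:avgSource} applied in both the sink and source directions and a first-moment/Markov step to guarantee a good pivot exists. The only differences are minor: the paper forces the cycle's temporal ordering by placing the two opposite edge pairs in two disjoint sub-windows (rather than conditioning on the two smallest labels lying on opposite edges), and it deletes the other three cycle vertices before invoking \cref{th:avgSource} so that the theorem applies verbatim to a clean $\F_{n-3,1}$ instance --- a precaution your ``independence'' claim (and the subsequent folding of (c)--(d) into the second moment, which the paper replaces by Markov's inequality on the number of bad squares) would need to make explicit.
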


\begin{proof}
Assume $n \geq 8$ and set
$\eps_0 := \frac{\eps}{4} = \frac{4 (\log n)^{0.8}}{n} > \frac{3 (\log (n-3))^{0.8}}{n-3} > \frac{\log \log n}{n}$.
Let also $\G \sim \F_{n,1}$.
Our goal is to show that $\G_{[0, p]}$ has an optimal spanner a.a.s.
For this purpose, set
\begin{align*}
	p_1 &:= 2\frac{\log n}{n} + \eps_0,
	\\ p_2 &:= 2 \frac{\log n}{n} + 2\eps_0,
	\\ p_3 &:= 2 \frac{\log n}{n} + 3 \eps_0 \,.
\end{align*}
Define $V^{(4)}$ to be the set of all 4-tuples of pairwise distinct vertices of $\G$.
We say that such a tuple $(w, x, y, z) \in V^{(4)}$ of $\G$ forms a \emph{square}
if $\{\lambda(wx), \lambda(yz)\} \subset [p_1, p_2]$
and $\{\lambda(xy), \lambda(wz)\} \subset [p_2, p_3]$.
Note that in a square each vertex can reach any other by a temporal path with labels from $[p_1, p_3]$.
Our first goal is to show that \aas there exist sufficiently many squares.
This will guarantee that at least one of them can be chosen as a pivot $4$-cycle.

Let $S$ be the number of squares in $\G$.
Clearly $\expect[S] = n^{(4)}\eps_0^4$,
where $n^{(i)} := n\cdot(n-1)\cdot(n-2) \dotsm  (n-i+1)$ denotes the falling factorial.
To bound the variance of $S$, we need to investigate
\[
	\expect[S^2] = \sum_{A \in V^{(4)}} \sum_{B \in V^{(4)}} \prob[A, B \text{ are both squares}] 
	= \sum_{i=4}^8 \sum_{\substack{A, B \in V^{(4)}\\\abs{A \cup B} = i}} \prob[A, B \text{ are both squares}] \,.
\]
For $i = 8$ (i.e., $A, B$ being vertex-disjoint) we clearly have $n^{(8)}$ summands,
whereas for $i \leq 7$ the number of pairs $A, B \in V^{(4)}$ with $\abs{A \cup B} = i$ can be upper-bounded by $(8n)^{i} \leq 8^7 n^{i}$.
The time label of every edge in $A \cup B$ belongs to an interval of length $\varepsilon_0$ and thus $\prob[A, B \text{ are both squares}] \leq \eps_0^i$.
Therefore,
\[
	\expect[S^2] \leq n^{(8)} \eps_0^8 + \sum_{i=4}^7 8^7 n^i \eps_0^i
	\leq \expect[S]^2 + 8^7 \cdot \sum_{i=4}^7 4^i (\log n)^{0.8i}
	\leq \expect[S]^2 + 8^7 \cdot 4 \cdot 4^7 (\log n)^{5.6}
\]
and thus the variance $\variance[S]$ is no more than
        $8^7\cdot 4^8(\log n)^{5.6} \in o(\expect[S]^2)$.
By Chebyshev's inequality, we obtain that $S \geq \expect[S]/2$ a.a.s.

Assume now a 4-tuple~$(w, x, y, z)$ of vertices to be given
(which might or might not form a square).
Let $\G'$ be the temporal subgraph of $\G$
obtained by deleting the vertices $x, y, z$.
Observe that $\G'$ is distributed as an element of $\F_{n-3,\:1}$,
independently of whether $(w,x,y,z)$ form a square in $\G$.
Note that the intervals $[0, p_1]$ and $[p_3, p]$ both have length $\frac{2 \log n}{n} + \eps_0$.
Thus, by \cref{th:avgSource} and the union bound, with probability at least $1 - \frac{10}{\log (n-3)}$,
vertex $w$ is a temporal sink of $\G'_{[0, p_1]}$
and also a temporal source in $\G'_{[p_3, p]}$.
If $(w,x,y,z)$ additionally form a square in $\G$,
then we call $(w,x,y,z)$ a \emph{good square}.
Then, by taking the good square $(w,x,y,z)$
together with the foremost tree for $w$ in $\G'_{[p_3, p]}$ and the hindmost tree for $w$ in $\G'_{[0, p_1]}$,
we obtain a spanner of $\G_{[0, p]}$ that has $4 + 2(n-4) = 2n - 4$ edges.
\newcommand{\Sbad}{S_{\textnormal{bad}}}

It only remains to show that there exists a good square.
According to the above, the expected number of bad (i.e., non-good) squares $\Sbad$ satisfies $\expect[\Sbad] \leq \expect[S] \frac{10}{\log (n-3)}$.
By Markov's inequality and our bound on $S$, the probability for every square of $\G$ to be bad is at most
\[
	\prob[\Sbad = S] 
\leq \prob\left[S < \frac{\expect [S]}{2}\right] +
\frac{\expect[\Sbad]}{(\expect S) / 2}
 \leq 
o(1)+
\frac{20}{\log(n-3)} \in o(1)\,.
\]
Thus, $\G$ has a good square a.a.s.
\end{proof}

We conjecture that $4\frac{\log n}{n}$ is in fact a sharp threshold for the existence of an optimal spanner.

\begin{conjecture}\label{conjecture}
The sharp threshold for the existence of an optimal spanner in a random simple temporal graph
exists and is equal to $4\frac{\log n}{n}$.
\end{conjecture}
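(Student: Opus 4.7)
The upper bound in the conjecture is already established by \cref{th:opt-span4}, so the main task is to establish the matching lower bound: for any $\varepsilon > 0$, if $p < (4-\varepsilon)\log n / n$, then a random temporal graph from $\F_{n,p}$ \aas does not admit an optimal spanner. The range $p \leq (3+o(1))\log n / n$ is already excluded by \cref{th:tempConnectivity} together with \cref{lm:tconnLower}, so the real challenge lies in the window $(3+o(1))\log n / n \leq p \leq (4-\varepsilon)\log n / n$, where the graph is temporally connected but should nevertheless fail to admit an optimal spanner.

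The natural strategy is to combine a structural characterisation of optimal spanners with the probabilistic tools developed in \cref{sec:foremostGrowth,sec:tempProp}. Specifically, since any optimal spanner $(H,\lambda')$ has exactly $2n-4$ edge appearances and must make every vertex simultaneously a temporal source and a temporal sink, a counting argument in the spirit of Bumby should force $H$ to decompose, up to low-order exceptions, as the union of an increasing tree and a decreasing tree rooted at a common pivot vertex $v$, with the two trees sharing at least two edges---equivalently, as a ``pivot $4$-cycle'' (or a similarly rigid degenerate structure) through which the temporal flow has to concentrate. Letting $t$ denote the timestamp of the latest edge of the in-tree portion touching the pivot, all vertices must reach $v$ via the decreasing tree by time $t$ and $v$ must reach all vertices via the increasing tree strictly after time $t$.

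Given such a dichotomy, the probabilistic lower bound reduces to showing that for every candidate pivot structure and every candidate split time $t$, the joint events ``$v$ is a temporal sink of $\G_{[0,t]}$'' and ``$v$ is a temporal source of $\G_{[t,p]}$'' cannot both hold \aas unless $t \geq 2\log n/n - o(\log n/n)$ \emph{and} $p - t \geq 2\log n/n - o(\log n/n)$. This is the analog of \cref{th:avgSource} applied separately in the two time windows, and by \cref{lm:LastYCappedConcentrate} (together with its time-reversed counterpart, via the source/sink symmetry noted before \cref{th:avgSource}) each half requires length at least $2\log n/n - o(\log n/n)$, summing to the desired lower bound of $4\log n/n - o(\log n/n)$. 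A union bound over the $\bigO(n^4)$ candidate pivot $4$-cycles and a polynomial-sized discretisation of $t$ would then complete the argument, since the tail estimates of order $n^{-\sqrt{\log n}}$ furnished by \cref{th:YhatConc} are more than strong enough to absorb the union bound.

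The main obstacle is expected to be the combinatorial step: rigorously classifying all topologies of optimal spanners and verifying that each such topology admits a ``pivot bottleneck'' interpretation compatible with the probabilistic argument above. While the $2n-4$ bound is classical~\cite{Bumby1979}, an exhaustive description of the extremal configurations---including potentially non-obvious degenerate cases in which the pivot structure is not a clean $4$-cycle but, say, a pair of internally disjoint paths sharing two endpoints---may be delicate, and could in principle allow spanner architectures that evade a clean ``two temporal-source thresholds stacked back-to-back'' obstruction. Overcoming this would likely require either a sharper structural lemma on the extremal side of the Bumby bound (plausibly implicit in the existing literature but not stated in the form we need), or a direct first/second-moment argument counting candidate optimal spanners in $\F_{n,p}$, analogous to the square-counting used in the proof of \cref{th:opt-span4} but run so as to produce an \emph{upper} bound on this count below the conjectured threshold.
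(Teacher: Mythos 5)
First, note that this statement is \cref{conjecture}: the paper does not prove it. The authors establish only the upper bound (\cref{th:opt-span4}) and the trivial lower bound of $3\log n/n$ inherited from \cref{lm:tconnLower}, and explicitly leave the matching lower bound open. So there is no proof in the paper to compare yours against, and what you have written is, by your own admission, a plan rather than a proof.

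The genuine gap is the structural step, and it is not a minor technicality. Your argument needs the claim that \emph{every} spanner with exactly $2n-4$ edge appearances decomposes as a gathering tree and a scattering tree hinged on a rigid pivot, so that its existence forces some vertex $v$ and some time $t$ with ``$v$ is a temporal sink of $\G_{[0,t]}$'' and ``$v$ is a temporal source of $\G_{[t,p]}$''. Nothing in \cite{Bumby1979} as used in the paper gives this: the $2n-4$ bound is a counting lower bound, not a classification of extremal configurations, and the known extremal structures route information through a $4$-cycle, i.e., through a \emph{set} of four vertices rather than a single pivot. A vertex in the gather phase only needs to reach \emph{some} member of that set, and in the scatter phase only needs to be reached from \emph{some} member; these are weaker events than single-vertex sink/source, so even granting a clean two-phase structure you would still have to show that ``every vertex reaches one of four fixed vertices'' has the same $2\log n/n$ threshold as the single-target version (plausible from the foremost-tree analysis, since $\sum_{i\ge 4} \frac{1}{i(n-i)}$ is still $(2\log n - O(1))/n$, but it is not a statement proved in the paper). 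Moreover, pairs of vertices near each other in the gather tree may reach one another without ever passing through the pivot structure, so the two phases need not be separated by a single global time $t$; ruling out such interleaved or multi-pivot architectures is exactly the obstruction the authors could not overcome. By contrast, the probabilistic half of your plan is essentially sound: the uniform lower-tail bound $Y^v_{n-1}\ge \capped{Y}^v_{n-1}\ge \frac{2\log n}{n}-\frac{3(\log n)^{0.8}}{n}$ for all $v$ simultaneously (\cref{lm:LastYCappedConcentrate}, as used in \cref{th:firstSource}), monotonicity in $t$, the fact that deleting already-used edges from $\G_{[t,p]}$ only hurts reachability, and a union bound over a polynomial discretisation of $t$ would indeed stack two windows of length $2\log n/n - o(\log n/n)$ --- \emph{if} the structural reduction were available. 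As it stands, the conjecture remains open and your proposal does not close it.
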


\section{Connections to other models}
\label{sec:other-models}

In this section, we explore some connections to other models in the literature.
First and most importantly, we show how our results translate to gossiping and population protocols, with or without repetition of interactions between pairs of nodes.
Afterwards, we look into quite surprising parallels to known results about randomly weighted static graphs.
Finally, we mention existing results in random edge-ordered graphs,
where the model is equivalent to RSTGs but the questions that have been studied are different.

\subsection{Gossiping and population protocols}
\label{sec:gossip-population}

In the classical gossiping setting, there are $n$ agents, each of whom
knows a single secret. The agents can communicate via telephone calls.
Whenever an agent calls another agent, the two exchange
all the secrets they know. An agent who learns all $n$ secrets becomes 
an \emph{expert}. The standard goal is to reach a configuration in which all agents are experts. A sequence of calls leading to this state is called a \emph{gossiping protocol}\footnote{There are related models of \emph{parallel rumor spreading} in which multiple pairwise calls can occur in a single round. In those models, the target measure is usually the number of rounds needed for secrets spreading (see, e.g., \cite{frieze1985shortest, karp2000randomized}).}.

In the theory of population protocols, a similar process appears in the analysis of time complexity. Here the order of interactions is governed by a sequential \emph{scheduler}. Unless this scheduler is adversarial, it is generally assumed, for analysis, that the interactions between pairs of nodes are chosen uniformly at random, with repetitions.
The complexity of a protocol is then measured in terms of the number of interactions required to reach a certain configuration, which depends on the problem at hand.

If we consider the population protocol's interactions to be symmetric (\ie not oriented), then the above two models are equivalent in terms of interactions.
For the sake of concreteness, in what
follows, we will rely on the terminology of gossiping protocols (in particular~\cite{van2017reachability}). 
Two natural models of random interactions can be defined, depending on whether repetitions are allowed between the same pair of agents:
\begin{enumerate}
	\item[\ANY{}:] No restrictions apply; every call is performed uniformly at random, without dependency on previous calls;
	
	\item[\CO{}:] Stands for \emph{call-once}, \ie no call can be repeated; there is at most one call between any pair of agents.
\end{enumerate}

According to a classical result from the '70s, gossiping protocols
require at least $2n-4$ calls even deterministically (see \cite{bollobas2006art} for a historical 
note on the corresponding combinatorial problem).
On the probabilistic side, the duration of gossiping protocols has
been studied mostly in the \ANY{} model.
In a sequence of three papers
\cite{moon1972random, boyd1979random, haigh1981random}
with the same title ``\emph{Random exchanges of information}'',
asymptotics for the \emph{expected} number of calls until a fixed agent becomes an expert and
until all agents become experts were obtained for the randomized \ANY{} model
of communication, i.e., for the model where every
next call happens between a pair of distinct agents chosen uniformly
at random among all pairs of distinct agents.

\begin{theorem}[\cite{moon1972random, boyd1979random, haigh1981random}]
	In the randomized \ANY{} model of communication,
	\begin{enumerate}
		\item the expected number of calls until a \emph{fixed} agent becomes an expert is
		$n \log n + \bigO(n)$;
		\item the expected number of calls until \emph{all} agents become experts is
		$\frac{3}{2} n \log n + \bigO(n)$.
	\end{enumerate}
\end{theorem}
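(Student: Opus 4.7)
For Part (1), a direct Markov chain analysis suffices and should present no real difficulty. By symmetry, the expected number of calls until a fixed agent becomes an expert equals the expected number of calls until a fixed secret has reached all $n$ agents. Track the number $k$ of agents currently knowing a fixed secret: at each call, $k$ increases by one precisely when the call is between a knower and a non-knower, which happens with probability $k(n-k)/\binom{n}{2}$. The waiting time to increase $k$ is geometric, so summing its expectations over $k=1,\ldots,n-1$ yields
\[
\sum_{k=1}^{n-1} \frac{\binom{n}{2}}{k(n-k)} = \frac{n(n-1)}{2} \cdot \frac{2 H_{n-1}}{n} = (n-1) H_{n-1} = n\log n + O(n),
\]
using the identity $\sum_{k=1}^{n-1}\frac{1}{k(n-k)} = \frac{2}{n} H_{n-1}$.

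For Part (2), the plan is to reduce to the sharp threshold for Temporal Connectivity established in \cref{th:tempConnectivity}. Label the $m$-th call in the randomized \ANY{} model with timestamp $m$, and let $\G^*_m$ denote the simple temporal graph obtained by keeping only the first occurrence of each pair. Since repeated calls can only enrich reachability, temporal connectivity of $\G^*_m$ implies that all agents are experts after $m$ calls. For $m = \Theta(n\log n)$ the expected number of repeated calls is $O(m^2/n^2) = O((\log n)^2)$, so $\G^*_m$ is distributed essentially as an element of $\F_{n,p}$ with $p = m/\binom{n}{2}$.

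Setting $m_{\pm} := (3/2 \pm \delta) n\log n$ and invoking \cref{th:tempConnectivity}, we obtain that $m_+$ calls suffice for all agents to be experts \aas, while at time $m_-$ there are \aas $\Omega(n)$ unreached ordered vertex pairs, a deficit that the $O((\log n)^2)$ repeated calls cannot compensate for. Let $M$ be the first call after which all agents are experts. For the upper bound a self-bounding restart argument works: conditional on $M>m_+$ (probability $o(1)$), the process has already made progress, so its future is at least as fast as a fresh run, giving $\expect[M] \leq m_+ + \prob[M > m_+]\cdot \expect[M]$ and hence $\expect[M] \leq m_+/(1-o(1))$. The matching lower bound follows from $\expect[M] \geq m_- \cdot \prob[M > m_-] = m_-(1-o(1))$.

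The main obstacle is sharpening the remainder from $o(n\log n)$ to the claimed $O(n)$. The a.a.s.\ statements of \cref{th:tempConnectivity} yield the correct leading constant $3/2$, but the $O(n)$ remainder requires quantitative tail bounds of the form $\prob[M - (3/2)n\log n > Cn] = n^{-\omega(1)}$. One route is to refine the concentration of $\capped{Y}^v_{n-1}$ in \cref{sec:foremostTree} (tightening \cref{lm:LastYCappedConcentrate} to an $O(1/n)$ window); another is a direct moment computation along the lines of \cite{haigh1981random}, exploiting that once one vertex is an expert, the remaining secrets propagate via a process strongly coupled to the one analysed in Part (1). I expect this finer tail analysis to be the technically most demanding step.
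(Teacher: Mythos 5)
This theorem is quoted from the literature (\cite{moon1972random, boyd1979random, haigh1981random}); the paper itself offers no proof, so the relevant comparison is with the machinery of \cref{sec:other-models}, which you invoke. Your Part~(1) is correct and complete: the source/sink symmetry under reversal of the (i.i.d.) call sequence is exactly the observation made before \cref{th:avgSource}, the knower-count process is a genuine Markov chain with geometric waiting times of mean $\binom{n}{2}/(k(n-k))$, and the identity $\sum_{k=1}^{n-1}\frac{1}{k(n-k)}=\frac{2}{n}H_{n-1}$ gives $(n-1)H_{n-1}=n\log n+O(n)$. This is the standard argument and needs nothing more.

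Part~(2), however, has a genuine gap that you yourself flag but do not close: the reduction to \cref{th:tempConnectivity} can only ever yield $\expect[M]=\tfrac{3}{2}n\log n\,(1+o(1))$, because the a.a.s.\ window of that theorem has width $\Theta(n(\log n)^{0.8})$ in call count, which swamps the claimed $O(n)$ remainder. Deferring the $O(n)$ term to ``a finer tail analysis'' leaves the actual content of the cited theorem unproved; what you have established is essentially item~4 of \cref{th:ANY}, not the statement at hand. Two further points deserve care. First, your lower-bound transfer is not sound as phrased: discarding repeated calls \emph{under}-approximates reachability, so non-connectivity of the deduplicated simple temporal graph does not by itself imply that some agent is not yet an expert, and ``$O((\log n)^2)$ repeats cannot compensate for $\Omega(n)$ unreached pairs'' is not an argument (a single well-placed late repeat can create many new reachable pairs). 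The correct route is the one in \cref{lm:tconnLower} / \cref{lm:tconnLower-ANY}: exhibit a vertex receiving \emph{no} call (counted with multiplicity) in the critical time window, which is immune to repeats. Second, the restart inequality $\expect[M]\le m_+ + \prob[M>m_+]\expect[M]$ needs the monotonicity/coupling justification spelled out (the post-$m_+$ state stochastically dominates the initial state because knowledge sets only grow and future calls are i.i.d.); it is true in the \ANY{} model but is exactly the kind of step that fails silently in the \CO{} model. As it stands, Part~(2) proves a weaker statement than claimed.
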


Unsurprisingly, the number of calls until a fixed agent becomes an expert also turns out to be concentrated around its expected value. To our knowledge, the best concentration to date was obtained in~\cite{Emmanuelle}. We extract only the following:

\begin{theorem}[\cite{Emmanuelle, Mocquard2019}]
	In the randomized \ANY{} model of communication, \aas the number of calls until a \emph{fixed} agent becomes an expert is 
	$ (1 + o(1)) (n \log n)$.
\end{theorem}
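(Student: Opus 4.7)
The plan is to mirror the foremost-tree analysis of \cref{sec:foremostTree} in the randomized \ANY{} setting rather than going through $\F_{n,p}$ as an intermediate black box. The starting observation is that ``$v$ becomes an expert after $m$ calls'' is equivalent to ``$v$ is a temporal sink in the (multi-)temporal graph formed by the first $m$ calls''. By the source--sink symmetry mentioned before \cref{th:avgSource} (which here follows from the invariance of the \ANY{} process under reversal of call order), it is equivalent to study the first moment at which $v$ becomes a temporal source.

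Next I would run \cref{alg:foremostTree} from $v$ on the evolving call sequence. For $k \in [n-1]$, let $X_k$ be the number of \ANY{} calls between the moment the foremost tree has size $k$ and the moment it has size $k+1$. The key observation is that, conditional on the current tree having size $k$, each subsequent call is independently uniform over the $\binom{n}{2}$ pairs, and the tree grows precisely when a call falls on one of the $k(n-k)$ ``cross'' pairs (a tree vertex together with a non-tree vertex); the increasing-time requirement of \cref{alg:foremostTree} is automatic because every new call has a later time stamp than all previous ones. Hence $X_k$ is geometric with parameter $p_k := k(n-k)/\binom{n}{2}$, and the memoryless i.i.d.\ structure of \ANY{} makes $X_1, \ldots, X_{n-1}$ mutually independent.

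Writing $T^\ast := \sum_{k=1}^{n-1} X_k$, the identity $\sum_{k=1}^{n-1} \frac{1}{k(n-k)} = \frac{2 H_{n-1}}{n}$ already used in the proof of \cref{lm:LastYCappedConcentrate} gives
\[
	\expect[T^\ast] = \sum_{k=1}^{n-1} \frac{\binom{n}{2}}{k(n-k)} = n \log n\,(1+o(1)).
\]
For concentration, $\variance[X_k] \leq 1/p_k^2$ and the easy bound $\sum_{k=1}^{n-1} 1/(k(n-k))^2 = O(1/n^2)$ yield $\variance[T^\ast] = O(n^2)$, so the standard deviation is $o(n \log n)$ and Chebyshev's inequality delivers $T^\ast = n \log n\,(1+o(1))$ \aas, as desired. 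Note that, unlike in $\F_{n,p}$, there is no need for truncation and Azuma here because the waiting times are exactly independent.

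The step I expect to require the most care is justifying the independence of the $X_k$'s in spite of the many ``useless'' calls (both endpoints inside or both outside the current tree) that occur between growth steps; this is handled by conditioning on the filtration at each growth step and invoking the i.i.d.\ structure of \ANY{}, so that the future call stream is always a fresh sequence of uniform pairs regardless of the past. An alternative, less self-contained route would be to observe, via exchangeability, that the first-appearance temporal subgraph of the first $m$ calls is distributed as $\F_{n,p}$ conditioned on its edge count, and then invoke \cref{th:avgSource} directly; this reduction yields the upper bound cheaply but requires a separate argument to rule out the few repeated calls (only $O((\log n)^2)$ in expectation) turning $v$ into a sink prematurely, a subtlety that the direct foremost-tree approach avoids entirely.
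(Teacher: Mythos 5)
Your argument is correct, and it takes a genuinely different (and more elementary) route than the paper. The paper treats this statement as a cited result and, when it re-derives it in \cref{sec:ANY}, does so by embedding the \ANY{} call sequence into the continuous-time Poisson model $\Hh_{n,p}$ and then re-running the whole foremost-tree machinery of \cref{sec:foremostTree} --- truncated waiting times, Azuma's inequality via \cref{lm:YCapConcentrate}, and finally translating a time threshold back into a number of calls via concentration of the Poisson edge count. You instead stay entirely in discrete time: since each \ANY{} call is uniform over all $\binom{n}{2}$ pairs independently of the past, and since every new call carries a later timestamp than all previous ones (so the increasing-tree condition in \cref{alg:foremostTree} is vacuous), the informed set grows by one exactly at cross calls, making the $X_k$ exactly independent $\mathrm{Geom}\bigl(k(n-k)/\binom{n}{2}\bigr)$ variables. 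The expectation computation $(n-1)H_{n-1}=n\log n(1+o(1))$ and the variance bound $O(n^2)$ are both right, and Chebyshev then suffices; no truncation or martingale argument is needed. What your approach buys is simplicity and self-containment for \ANY{}; what the paper's heavier continuous-time apparatus buys is uniformity across models --- in particular it is forced on one in the \CO{} model, where past calls constrain future ones and the waiting times are no longer independent geometrics, and it yields all the reachability thresholds (not just this one) from a single concentration statement. Your closing caveat about the alternative reduction to $\F_{n,p}$ is well taken, but the direct argument you give is complete as stated and does not need it.
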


As for the number of calls until all agents become experts, it is known that it does not exceed its expected value \aas. The best bound to date appears to be from~\textcite{burman2021time}.
We remark that the actual lemma in~\cite{burman2021time}
states a weaker bound on time (to achieve a stronger bound on the probability),
but an intermediate step of the proof directly implies the following:

\begin{theorem}[\cite{burman2021time}, proof of Lemma~2.9]
	In the randomized \ANY{} model of communication, a.a.s.
	the number of calls until \emph{all} agents become experts is \emph{at most}
	$(1 + o(1)) (\frac{3}{2} n \log n)$ a.a.s.
      \end{theorem}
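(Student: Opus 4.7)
The plan is to derive this bound as a corollary of \cref{th:tempConnectivity} by relating the randomized \ANY{} process to the $\F_{n,p}$ model. Fix $\eps>0$ arbitrarily small and set $T := \lceil (3/2)(1+\eps) n \log n \rceil$; I will show that after $T$ calls in the randomized \ANY{} model, all agents are \aas experts.

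The key construction is the \emph{first-call temporal graph} $\G^\star_T$ on vertex set $V=\{1,\dots,n\}$, whose edges are precisely those pairs $\{u,v\}$ called at least once during the first $T$ calls, with time label equal to the index of their first such call. If $\G^\star_T$ is temporally connected, then all agents are experts after step $T$, because any temporal $(u,v)$-path in $\G^\star_T$ corresponds to an actual increasing sequence of calls transmitting the secret of $u$ to $v$; later repeated calls can only accelerate gossip, so the first-call channel alone suffices for an upper bound.

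Let $N$ denote the number of distinct pairs among the first $T$ calls. By symmetry of the uniform scheduler under permutations of vertex pairs, conditional on $N=m$ the edge set of $\G^\star_T$ is uniform among size-$m$ subsets of $\binom{V}{2}$ with a uniformly random first-occurrence ordering; equivalently, $\G^\star_T \mid N=m$ is distributed as $\F_{n,p}$ conditioned on $|E|=m$. Standard coupon-collector estimates give $\expect[N] = \binom{n}{2}\bigl(1-(1-1/\binom{n}{2})^T\bigr) = (3/2)(1+\eps)n\log n \cdot (1-o(1))$, and a routine second-moment argument on the indicators $\mathbf{1}[\text{pair $e$ is ever called}]$ yields $N \geq (3/2)(1+\eps/2)n\log n$ \aas, so that $p' := N/\binom{n}{2} \geq (3+\eps)\log n / n$, comfortably above the threshold $3\log n/n + 3(\log n)^{0.8}/n$ appearing in \cref{th:tempConnectivity}.

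I conclude by monotonicity: temporal connectivity is preserved under edge additions at any time, so the standard $G_{n,p}$-versus-$G_{n,m}$ transfer lifts the threshold statement for $\F_{n,p'}$ to its conditional version $\F_{n,p'} \mid |E|=N$ \aas temporally connected, whenever $N$ lies above the expected edge count at the threshold density. Applying \cref{th:tempConnectivity} therefore shows $\G^\star_T$ is \aas temporally connected, and hence all agents become experts within $T = (3/2)(1+\eps) n \log n$ calls; letting $\eps = \eps(n) \to 0$ slowly yields the claimed $(1+o(1))$ bound. The main obstacle I anticipate is the $G_{n,p}$-to-$G_{n,m}$ passage: because the available threshold slack is only of order $(\log n)^{0.8}/n$, the concentration of $N$ must be sharp enough, and the conversion from $|E|=N$ back to an equivalent density $p'$ loose enough, to place $N$ safely above the threshold window; this is essentially the only piece of non-trivial bookkeeping required.
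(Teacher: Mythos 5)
Your argument is correct in outline, but it takes a genuinely different route from the paper. The statement you are proving is actually cited from prior work, and the paper's own (stronger, two-sided) version is \cref{th:ANY}, which it obtains by introducing the Poisson-point-process model $\Hh_{n,p}$ in \cref{sec:ANY} and re-running the entire foremost-tree analysis with exponential waiting times (\cref{lem:ExpOfXk-ANY}, \cref{lem:connUpperBound-ANY}, etc.). You instead reduce the \ANY{} process to the already-proved RSTG result \cref{th:tempConnectivity} by passing to the first-call temporal graph and conditioning on the number $N$ of distinct pairs called. The distributional claim at the heart of your reduction is sound: for i.i.d.\ uniform draws, the repetition pattern is independent of the identities of the distinct pairs, so given $N=m$ the first-call graph is a uniformly edge-ordered $G_{n,m}$, i.e.\ exactly $\F_{n,p}$ conditioned on $|E|=m$. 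Two points deserve explicit justification rather than the phrase ``standard transfer'': (a) monotonicity of the temporal-connectivity probability in $m$ for uniformly ordered $G_{n,m}$ (couple $G_{n,m+1}$ with a uniformly ordered $G_{n,m}$ by deleting a uniformly random edge while keeping the relative order of the rest; deletion never creates temporal connectivity); and (b) the direction of the $\F_{n,p}$-to-$G_{n,m}$ conversion, namely $\Pr[G_{n,m_0}\text{ t.c.}]\ge \Pr[\F_{n,p}\text{ t.c.}]-\Pr[|E|>m_0]$ with $m_0$ a few standard deviations above $\binom{n}{2}p$. Your bookkeeping worry is not actually tight: with fixed $\eps>0$ the slack $\eps n\log n$ dwarfs both the $O(\sqrt{n\log n})$ fluctuations of $N$ and of $|E|$, and the $(\log n)^{0.8}$ window, so the argument closes comfortably and the $(1+o(1))$ form follows by letting $\eps\to 0$ slowly. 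What the two approaches buy: yours is shorter and reuses \cref{th:tempConnectivity} as a black box, but only yields the upper bound on the all-experts time stated here; the paper's Poissonization gives the matching lower bound and all the other sharp thresholds of \cref{th:ANY} in one framework, at the cost of redoing the concentration lemmas.
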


\noindent
According to \textcite{van2017reachability}, there seem to be no similar estimations known
for the randomized \CO{} model, which may arguably be due to the fact that the no-repetition property of this model creates dependencies between past and future events which are more difficult to handle.
In \cref{sec:CO} (below), we show how our results readily translate to the \CO{} model. Quite significantly, we provide asymptotic estimates
for the actual number of calls, not only the expected value.
In \cref{sec:ANY}, we demonstrate the flexibility of our results by showing that they apply to the \ANY{} model as well, strengthening the known results in gossiping theory and in the analysis of population protocols.
Indirectly, the transferability of our results also implies that information propagates at essentially the same speed with or without repetitions in the model.

Finally, we do not discuss our results on temporal spanners in this section, although they could be translated in a similar way, because the corresponding problems have not been considered in the areas of gossiping protocols and population protocols. The remainder of this section provides technical details on the claimed equivalences.

\subsubsection{Random information exchanges in the CO model}
\label{sec:CO}

Recall that in our model $\F_{n,1}$, the time labels induce edge orderings of the complete graph,
each of which is equiprobable.
The same distribution of edge orderings is obtained if we construct 
a random ordering by choosing edges one by one uniformly at random among all
edges that are not yet chosen, i.e., in the same way as calls are originated in the
randomized \CO{} model.
Therefore, by interpreting edges as calls and time labels as ranks of the calls,
the number of edges in $\G_{[0, p]}$, where $\G \sim \F_{n,1}$, is the number of calls that 
happened no later than time $p$. 
Furthermore, since the underlying graph of $\G_{[0, p]}$ is distributed according to
$G_{n,p}$ and the number of edges in $G_{n,p}$ is concentrated around ${n \choose 2} p$
when $p = \Theta(\log n / n)$, our results about random temporal graphs from \cref{sec:tempProp}
translate straightforwardly to the following results in the randomized \CO{} model.

\begin{theorem}\label{th:CO}
	In the randomized \CO{} model, \aas
	\begin{enumerate}
                \item the number of calls until two fixed agents exchange their secrets is $\frac{1}{2} n \log n \cdot (1 + o(1))$;
                        \\ \qquad (\cref{th:pathUV})
		\item the number of calls until at least one agent becomes an expert is $n \log n \cdot (1 + o(1))$;
                        \\ \qquad (\cref{th:firstSource})
		\item the number of calls until a fixed agent becomes an expert is $n \log n \cdot (1 + o(1))$;
                        \\ \qquad (\cref{th:avgSource})
		\item the number of calls until all agents become experts is $\frac{3}{2}n \log n \cdot (1 + o(1))$.
                        \\ \qquad (\cref{th:tempConnectivity})
	\end{enumerate}
\end{theorem}

\subsubsection{Random information exchanges in the ANY model}
\label{sec:ANY}

By replacing the uniform distribution of edge labels with another distribution, we can
obtain random temporal graphs with different temporal dynamics.
This flexibility, in particular, allows us to simulate the randomized \ANY{} model. In order to achieve this, we replace the uniform distribution for edge labels in $\F_{n,1}$ 
with a suitable Poisson point process.

More specifically, we introduce a random temporal graph model $\Hh_n$,
in which the labels of each of the ${n \choose 2}$ potential edges appear independently according to a Poisson point process 
with rate $1$ starting at time~$0$ and running infinitely long.
In this model each edge gets a countably infinite set of labels with probability~$1$.
We also define a finite random temporal graph model $\Hh_{n,p}:=(\Hh_n)_{[0,p]}$
where the process is stopped at time $p$.
Note that, in $\Hh_{n,p}$, an edge may appear an arbitrary number of times, 
including zero in which case the edge is not present in the underlying graph. 
The expected number of appearances for each edge is exactly $p$.
Furthermore, as the Poisson distribution has variance equal to the expectation, by Chebyshev's inequality the number of appearances of a fixed edge is concentrated around its expected value.
Hence, the total number of edge appearances in $\Hh_{n,p}$ 
is concentrated around ${n \choose 2} p$.

Since all edges appear according to independent identical Poisson point processes,
at any fixed point in time, the next edge to appear is distributed uniformly at random among all $\binom{n}{2}$ possible edges, i.e., 
in the same way as calls are scheduled in the randomized \ANY{} model.
Hence, if $p_0$ is a threshold probability for a temporal property in $\Hh_{n,p}$, then it translates to
a ${n \choose 2} p_0$-calls threshold in the randomized \ANY{} model for the corresponding property.

In this section, we will show that all our main results about $\F_{n,p}$ can be transferred to $\Hh_{n,p}$ with only minor changes in the proofs.
Precisely, the corresponding results from \cref{sec:tempProp}
translate to the following results in the randomized \ANY{} model.

\begin{theorem}\label{th:ANY}
	In the randomized \ANY{} model, \aas
	\begin{enumerate}
		\item the number of calls until two fixed agents exchange their secrets is $\frac{1}{2} n \log n \cdot (1 + o(1))$;
		\item the number of calls until at least one agent becomes an expert is $n \log n \cdot (1 + o(1))$;
		\item the number of calls until a fixed agent becomes an expert is $n \log n \cdot (1 + o(1))$;
		\item the number of calls until all agents become experts is $\frac{3}{2}n \log n \cdot (1 + o(1))$.
	\end{enumerate}
\end{theorem}

In the remainder of this section, we explain which changes (if any) need to be made to the statements of
\cref{sec:2hop,sec:foremostTree,sec:tempProp} in order to apply to the $\Hh_{n,p}$ model.

\subsubsection*{Results based on 2-hop approach (\cref{sec:2hop})}

\begin{lemma}[cf.\ \cref{thm:2hop-result}]
\label{thm:2hop-result-ANY}
	Let $\alpha \geq 3$ and let $p = \alpha\sqrt{\log{n}/n}$.
	Then, for all $n \geq 4$ and $p\leq 1$,
an arbitrary vertex of $(G, \lambda) \sim \Hh_{n,p}$ is a temporal source 
	with probability at least $1 - n^{-\alpha^2/12 + 1}$.
\end{lemma}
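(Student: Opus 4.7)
The plan is to adapt the proof of \cref{thm:2hop-result} to the $\Hh_{n,p}$ model, essentially by replacing the Bernoulli-based calculation of $\prob[\overline{S_{yz}}]$ by its Poisson analog. Fix an arbitrary vertex $x$, and for $y,z \in V \setminus \{x\}$ (distinct) let $R_z$ be the event that $x$ reaches $z$, and $S_{yz}$ the event that there is a $yz$-appearance strictly after some $xy$-appearance in $[0,p]$. As in the original proof, the independence of the Poisson processes on distinct edges yields $\prob[\overline{R_z}] \leq \prob\bigl[\bigcap_y \overline{S_{yz}}\bigr] = \prob[\overline{S_{yz}}]^{n-2}$ and then a union bound over $z$.

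The key computation is $\prob[\overline{S_{yz}}]$. Consider the combined Poisson process on the edges $xy$ and $yz$ in $[0,p]$; its number of points $M$ is $\mathrm{Poisson}(2p)$ and, conditionally on $M=m$, each of the $m$ points is independently labeled $xy$ or $yz$ with probability $\frac12$. The event $\overline{S_{yz}}$ occurs exactly when every $yz$-point precedes every $xy$-point, i.e.\ the label sequence has the form $\mathrm{yz}^a\,\mathrm{xy}^b$ with $a+b=m$, which has conditional probability $(m+1)/2^m$. Summing,
\[
\prob[\overline{S_{yz}}] = e^{-2p}\sum_{m\geq 0}\frac{(2p)^m(m+1)}{m!\,2^m} = e^{-p}(1+p).
\]

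Next, for $p\in[0,1]$, Taylor's expansion gives $\ln(1+p)-p = -p^2/2+p^3/3-p^4/4+\dots \leq -p^2/2+p^3/3 \leq -p^2/6$, so $\prob[\overline{S_{yz}}] \leq e^{-p^2/6}$. Combined with the union bound and $\frac{n-2}{n}\geq\frac12$ for $n\geq 4$,
\[
\prob[\overline{R_z}] \leq e^{-p^2(n-2)/6} = n^{-\alpha^2(n-2)/(6n)} \leq n^{-\alpha^2/12},
\]
and then $\gamma \geq 1 - (n-1)\,n^{-\alpha^2/12} \geq 1 - n^{-\alpha^2/12+1}$, which is the required bound.

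The only real obstacle is the Poisson integral; everything else is a direct transcription of the original argument. Note that the loss of a factor of $3$ relative to \cref{thm:2hop-result} (i.e.\ $\alpha^2/12$ instead of $\alpha^2/4$) is inherent: in the Bernoulli case $1-\prob[\overline{S_{yz}}]$ has leading term $p^2/2$, while in the Poisson case $1-(1+p)e^{-p}$ has the same leading term $p^2/2$ but the effective exponent accessible through elementary log bounds on $[0,1]$ is only $p^2/6$.
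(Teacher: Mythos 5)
Your proof is correct and follows essentially the same route as the paper's: compute $\prob[\overline{S_{yz}}] = (1+p)e^{-p}$, bound it by $e^{-p^2/6}$ via the Taylor expansion of $\log(1+p)$ using $p\leq 1$, and then run the original independence and union-bound argument with $\alpha^2/12$ in place of $\alpha^2/4$. The only difference is that you derive the Poisson identity $(1+p)e^{-p}$ in full (via the superposition/thinning argument), where the paper simply cites it as a standard property of the Poisson process.
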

\begin{proof}
The construction with intermediate vertices stays the same.
Note that the number of edges of a fixed path~$x,y,z$ that one can traverse in~$\Hh_n$ before some point in time
can be modeled as a Poisson process stopped after two occurences.
Thus, the probability~$\prob[S_{yz}]$ that vertex $x$ can reach $z$ along said path before time~$p$ becomes
$1-e^{-p}(1+p)
>1-\exp(-p+p-p^2/2+p^3/3)
=1-\exp(-p^2/2+p^3/3)
>1-\exp(-p^2/6)$.
Here the first inequality uses the Taylor series of the natural logarithm,
and the second uses $p\leq 1$.
The probability $p_1$ of all $n-2$ intermediate vertices being unsuitable
is at most 
$(1-(1-\exp(-p^2/6)))^{n-2}
=\exp(-(p^2/6)(n-2))
=\exp(-\alpha^2 (\log n)(n-2)/(6n))
\leq n^{-\alpha^2/12}$
by independence.
The rest of the proof is the same as before with $\alpha^2/12$ instead of $\alpha^2/4$.
\end{proof}

Just like \cref{cor:2hopTempConn}, we obtain the following corollary.

\begin{corollary}[cf.\ \cref{cor:2hopTempConn}]\label{cor:2hopTempConn-ANY}
	Let $p = \frac{\log{n}}{\sqrt{n}}$. Then, $(G, \lambda) \sim \Hh_{n,p}$ is temporally connected with probability at least $1-n^{-\frac{\log{n}}{12} + 2}$.
\end{corollary}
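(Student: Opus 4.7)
The plan is to mirror the derivation of \cref{cor:2hopTempConn} from \cref{thm:2hop-result}, using the \ANY-variant \cref{thm:2hop-result-ANY} instead. Concretely, I would set $\alpha := \sqrt{\log n}$ in \cref{thm:2hop-result-ANY}, so that the chosen edge probability matches: $p = \alpha \sqrt{\log n / n} = \log n / \sqrt n$.

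Before invoking the lemma I need to check its hypotheses. For all sufficiently large $n$, we have $\alpha = \sqrt{\log n} \geq 3$, and $p = \log n/\sqrt n \leq 1$. With those in place, the lemma gives that a fixed vertex is a temporal source in $(G,\lambda) \sim \Hh_{n,p}$ with probability at least $1 - n^{-\alpha^2/12 + 1} = 1 - n^{-\log n/12 + 1}$.

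Finally I would apply the union bound over all $n$ vertices: the probability that some vertex fails to be a temporal source is at most $n \cdot n^{-\log n/12 + 1} = n^{-\log n/12 + 2}$. Since a temporal graph is temporally connected exactly when every vertex is a temporal source, this yields the claimed bound $1 - n^{-\log n/12 + 2}$.

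There is essentially no obstacle: the only subtlety is confirming that the parameter restrictions of \cref{thm:2hop-result-ANY} ($\alpha \geq 3$ and $p \leq 1$) are respected by the choice $\alpha = \sqrt{\log n}$, which holds for all large enough $n$; the statement of the corollary is therefore implicitly understood to apply in that asymptotic regime, exactly as in \cref{cor:2hopTempConn}.
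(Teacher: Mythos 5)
Your proposal is correct and matches the paper's derivation exactly: the corollary is obtained by taking $\alpha = \sqrt{\log n}$ in \cref{thm:2hop-result-ANY} and applying the union bound over all $n$ vertices, just as \cref{cor:2hopTempConn} is obtained from \cref{thm:2hop-result}. Your check that $\alpha \geq 3$ and $p \leq 1$ hold for large $n$ is a reasonable (implicit in the paper) verification of the lemma's hypotheses.
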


\subsubsection*{Results on foremost tree evolution (\cref{sec:foremostTree})}

A change in the foremost tree algorithm (\cref{alg:foremostTree}) is necessary
to account for the multiple labels on some edges.
The algorithm is now applied to the graphs distributed 
according to $\Hh_n$.
The change is similar to treating each edge label as a separate edge.
More specifically, for each edge $e$ in $S_k$, i.e., an edge connecting two vertices,
exactly one of which is included in the tree $T_{k-1}$,
we consider the earliest label $\lambda_{e,k}\in\lambda(e)$ such that $T_{k-1}\cup\{(e,\lambda_{e,k})\}$
is an increasing temporal tree. 
With probability $1$, such a label exists.
Just like before, we select the edge $e$ with the minimal possible $\lambda_{e,k}$.

Apart from the fact that $(G, \lambda)$ is now sampled from $\Hh_{n}$
and notation changes to use $\lambda_{e,k}$ instead of $\lambda(e)$ where appropriate,
the definitions of $T_k^v$, $Y_k^v$, $c_k$, $\capped{Y}_k^v$ and $\A_k^v$ apply without changes,
as do \cref{lem:foremostTree} and \cref{lm:XCapAAS_sum}:

\begin{lemma}[cf.\ \cref{lem:foremostTree}]
        Let $(G, \lambda)$ be a temporal graph (not necessarily simple)
        and $v$ be a temporal source in $(G, \lambda)$. Then
	\begin{enumerate}[(i)]
		\item the modified \cref{alg:foremostTree} constructs a foremost tree for $v$ in $(G, \lambda)$;
                \item $\lambda_{e_1,1} \leq \lambda_{e_2,2} \leq \cdots \leq \lambda_{e_{n-1},n-1}$.
	\end{enumerate}
\end{lemma}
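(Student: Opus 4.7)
The plan is to mirror the proof of \cref{lem:foremostTree} for the single-label case, with one essential new ingredient: a simple observation about how the label-selection function $\lambda_{e,k}$ depends on $k$. Once a vertex $a$ enters the tree at step $i$ via an edge-label pair $(e_i,\lambda_{e_i,i})$, its arrival time from $v$ is fixed at $\lambda_{e_i,i}$ and does not change thereafter. Consequently, for every edge $e=ab$ whose endpoint $a$ is in the tree while $b$ is not, the value $\lambda_{e,k}$ — defined as the smallest label in $\lambda(e)$ making $(e,\lambda_{e,k})$ an admissible extension at step $k$ — is simply the smallest element of $\lambda(e)$ that is at least $\lambda_{e_i,i}$. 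This value is therefore the same for all steps $k$ in which $e$ remains in $S_k$, a fact I will use repeatedly. With probability $1$ such a label exists, because each Poisson process on $[0,\infty)$ contains arbitrarily large values.

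For part~(i), I will proceed by induction on $k$, maintaining the same invariants as the original proof: (a) $T_{k-1}$ admits a valid extension (so $e_k$ is well-defined), and (b) every temporal path from $v$ in $T_k$ is foremost in $(G,\lambda)$. Part~(a) follows because $v$ is a temporal source, so every vertex $u\notin V(T_{k-1})$ is reached by some temporal path $P$, and the first edge of $P$ crossing the tree boundary, together with its label on $P$, witnesses a valid extension. For part~(b), suppose the temporal $(v,b)$-path in $T_k$ is not foremost; then a strictly better temporal $(v,b)$-path $P$ in $(G,\lambda)$ provides an edge $e'=a'b'$ at its first tree-boundary crossing together with the label $\lambda'$ used on $P$. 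The $(v,a')$-prefix of $P$ arrives by time $\lambda'$, and the inductive hypothesis~(b) applied to $a'$ guarantees that the $(v,a')$-path in $T_{k-1}$ arrives no later. Hence $(e',\lambda')$ is an admissible extension, so $\lambda_{e',k}\leq\lambda'<\lambda_{e_k,k}$, contradicting the $\arg\min$ choice of $e_k$.

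For part~(ii), I will argue by contradiction: take the smallest $k\geq 2$ with $\lambda_{e_k,k}\leq\lambda_{e_{k-1},k-1}$, write $e_k=ab$ with $a\in V(T_{k-1})$, and let $e_i$ be the edge on the $(v,a)$-path in $T_{k-1}$ incident to $a$, so that $a$ entered the tree at step $i$ via $(e_i,\lambda_{e_i,i})$. Since $T_k$ is increasing, $\lambda_{e_i,i}\leq\lambda_{e_k,k}$, and by the minimality of $k$ the sequence $\lambda_{e_i,i},\lambda_{e_{i+1},i+1},\dots,\lambda_{e_{k-1},k-1}$ is non-decreasing. Choose the smallest $j\in\{i+1,\dots,k-1\}$ with $\lambda_{e_j,j}\geq\lambda_{e_k,k}$ (such $j$ exists, as $j=k-1$ qualifies). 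At step $j$, edge $e_k$ lies in $S_j$ because $a\in V(T_i)\subseteq V(T_{j-1})$ while $b\notin V(T_{k-1})\supseteq V(T_{j-1})$, and by the key observation $\lambda_{e_k,j}=\lambda_{e_k,k}\leq\lambda_{e_j,j}$, contradicting the $\arg\min$ definition of $e_j$ (with probability $1$ no two Poisson labels coincide, so this inequality is strict). I do not anticipate any substantive obstacle here; once the invariance of arrival times is spelled out, the remainder is a direct translation of the single-label argument.
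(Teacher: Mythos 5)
Your proof is correct and follows essentially the same route as the paper, which simply asserts that the single-label proof of \cref{lem:foremostTree} carries over "apart from notation changes." The one genuine addition you make — that $\lambda_{e,k}$ is independent of $k$ while $e$ stays on the tree boundary, because the arrival time at the tree endpoint is frozen once that vertex enters — is exactly the fact the paper leaves implicit and is what makes the step-$j$ contradiction in part~(ii) go through in the multi-label setting.
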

\begin{proof}
Apart from notation changes, this lemma is proven exactly as before.
\end{proof}

Instead of \cref{lem:ExpOfXk}, we may state a slightly stronger result here which also replaces \cref{cor:ExpOfcappedXk}:

\begin{lemma}[cf.\ \cref{lem:ExpOfXk}]\label{lem:ExpOfXk-ANY}
For a vertex $v$, with probability at least $1 - 4/\log{n}$ the equality $\capped{X}^v_k=X^v_k$ holds for every $k \in [n-1]$.

Further, for a vertex $v$ and every $k \in [n-1]$, we have
	\begin{flalign*}
		 \text{(i)} && \expect[X^v_k \mid \A^v_{k-1}]  &= \frac{1}{k(n-k)}; &
		 \\
		 \text{(ii)}&& \frac{1 - 1/ \log n}{k(n-k)} \leq \expect[\capped{X}^v_k \mid \A^v_{k-1}] &\leq \frac{1}{k(n-k)}. &
	\end{flalign*}
\end{lemma}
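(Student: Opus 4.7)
The plan is to exploit the memoryless property of Poisson point processes, which should make both bounds essentially exact and drastically simplify the analysis compared with Lemma~\ref{lem:ExpOfXk}. The central observation is that, conditional on $\A^v_{k-1}$, for each edge $e \in S^v_k$ the restriction of $\lambda(e)$ to the time interval $[Y^v_{k-1}, \infty)$ should still be an independent rate-$1$ Poisson point process. Indeed, whether $e$ was (or was not) selected at some earlier step $j \le k-1$ depends only on the labels of $e$ in $[0, Y^v_{k-1}]$, together with the analogous information for the other candidate edges in each $S^v_j$, so by the independent-increments property, the future label processes on the edges of $S^v_k$ remain unconditioned and mutually independent.

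For part~(i), I would argue that for any $e \in S^v_k$ with, say, $a \in T_{k-1}$ and $b \notin T_{k-1}$, every label of $e$ in $[Y^v_{k-1}, \infty)$ is admissible, in the sense that $T_{k-1}$ extended by that appearance of $e$ forms an increasing temporal tree (using \cref{lem:foremostTree}\ref{lambda_inequality} in the adapted form stated above). Hence the first usable label of $e$ after time $Y^v_{k-1}$ is distributed as $Y^v_{k-1} + E_e$ with $E_e \sim \mathrm{Exp}(1)$, and the $E_e$ are independent across $e \in S^v_k$. Since $|S^v_k| = k(n-k)$, it follows that $X^v_k \mid \A^v_{k-1} \sim \mathrm{Exp}(k(n-k))$, which yields exactly $\expect[X^v_k \mid \A^v_{k-1}] = \frac{1}{k(n-k)}$.

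For part~(ii), the upper bound is immediate from $\capped{X}^v_k \le X^v_k$ combined with part~(i). For the lower bound, the tail formula gives
\[
\expect[\capped{X}^v_k \mid \A^v_{k-1}] = \int_0^{c_k} \prob[X^v_k > t \mid \A^v_{k-1}]\, dt = \int_0^{c_k} e^{-k(n-k)t}\, dt = \frac{1 - e^{-k(n-k) c_k}}{k(n-k)}.
\]
Substituting $c_k = \frac{2\log(\min\{k, n-k\}) + \log\log n}{k(n-k)}$ produces $e^{-k(n-k)c_k} = \frac{1}{(\min\{k, n-k\})^2 \log n} \le \frac{1}{\log n}$, which delivers $\expect[\capped{X}^v_k \mid \A^v_{k-1}] \ge \frac{1 - 1/\log n}{k(n-k)}$.

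The main obstacle will be the precise justification of the memorylessness step: the information in $\A^v_{k-1}$ is not merely ``the first $k-1$ edges of the tree are as specified with their labels'' but also implicitly records that every other edge of every $S^v_j$ failed to offer a smaller admissible label at step $j$. One has to verify that each such non-selection constraint is an event depending only on labels in $[0, Y^v_{k-1}]$, so that conditioning on $\A^v_{k-1}$ affects only the past of each Poisson process while leaving the future independent and rate-$1$. Once this bookkeeping is done, both claims follow by direct computation, with no need to analyse admissible ranges as in the original proof of \cref{lem:ExpOfXk}.
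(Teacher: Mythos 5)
Your proposal is correct and follows essentially the same route as the paper: part~(i) via memorylessness of the Poisson process reducing $X^v_k$ conditioned on $\A^v_{k-1}$ to the minimum of $k(n-k)$ independent $\mathrm{Exp}(1)$ variables, and part~(ii) via the tail integral $\int_0^{c_k} e^{-k(n-k)t}\,dt$ together with $e^{-k(n-k)c_k}\le 1/\log n$. Your explicit bookkeeping of why conditioning on $\A^v_{k-1}$ only constrains the label processes on $[0,Y^v_{k-1}]$ is a more careful justification of the memorylessness step that the paper states without elaboration.
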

\begin{proof}
	The proof of the first part of the lemma is exactly the same as the proof of corresponding claim
	in \cref{lem:ExpOfXk} with the only 
difference that $e^{-k(n-k)c_{k}}$ 
(from the computation in \cref{lm:XCapAAS_sum}) is now the exact value of $\prob(X^v_k\geq c_k)$ rather than an upper bound on it.

	(i):
	Due to the memorylessness of the exponential distribution, $X^v_k$ conditioned on $\A^v_{k-1}$ is distributed as the minimum of $k(n-k)$ independently Exp(1)-distributed random variables.
	Thus, the distribution of $X^v_k$ is exactly Exp($k(n-k)$) and as such it has expected value $\frac{1}{k(n-k)}$.

	(ii):
	The only difference to the proof of \cref{lem:ExpOfXk} is in the evaluation of the integral
	\begin{align*}
	\int_{0}^{c_k} \prob[X^v_k \geq t \mid A^v_{k-1}] \dd t
			&=
			\int_{0}^{c_k} \prob[X_k^v \geq t] \dd t 
			\\ &=
			\int_{0}^{c_k} \exp\left( -k (n-k) t \right) \dd t
			\\ &=
			\frac{1 - \exp(-k(n-k)c_k)}{k(n-k)} \,.
	\end{align*}
	Then, the desired bound follows from the fact that
   $\exp(-k(n-k)c_k)  \leq  \frac{1}{\log n}$ (see (\ref{eq:prob_capped_uncapped_sum})).
\end{proof}

The remaining results in \cref{sec:foremostGrowth} (\cref{lm:YCapConcentrate} to \cref{th:reachability}) and their proofs carry over to $\Hh_{n,p}$ with only trivial modifications.

Regarding \cref{sec:foremostTreeHeight},
we observe that \cref{lm:foremost-heights-step-dist} holds trivially for $\Hh_{n}$, 
as the attachment point is uniformly distributed for the Poisson point process case.
Consequently, \cref{lm:foremost-height-step} and \cref{lm:foremost-height-upper-bounds} also hold,
and so do \cref{the:foremost-heights-aas} and \cref{cor:foremost-len-aas}
as all the properties of $\F_{n,p}$ used in the proofs are shared by $\Hh_{n,p}$.

\subsubsection*{Results on sharp thresholds for temporal graph properties (\cref{sec:tempProp})}

\cref{th:pathUV,th:avgSource,th:firstSource} 
(sharp thresholds for point-to-point reachability, temporal source, and first temporal source)
apply without modifications as they follow solely from the concentration results obtained in \cref{sec:foremostGrowth}.
\Cref{lm:tconnLower} only requires a minor change to the proof:
\begin{lemma}[cf.\ \cref{lm:tconnLower}]
\label{lm:tconnLower-ANY}
Let $p \leq \frac{3\log n}{n} -\eps$, where $\eps := \frac{6 (\log n)^{0.8}}{n}$,
and let $\G \sim \Hh_{n}$.
Then, \aas the temporal graph $\G_{[0, p]}$ contains at least one vertex which is not a temporal sink.
\end{lemma}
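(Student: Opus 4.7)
The plan is to follow the strategy of \cref{lm:tconnLower}, adapted to the Poisson setting of $\Hh_n$. First I would define $q := \min\{q' \mid \G_{[0, q']} \text{ has a temporal sink}\}$. Since \cref{th:firstSource} applies unchanged in the $\Hh_n$ model, $q \geq q_{\min} := \frac{2\log n}{n} - \frac{3(\log n)^{0.8}}{n}$ a.a.s. Moreover, at time $q$ at most two vertices are temporal sinks of $\G_{[0, q]}$, since the appearance of a single edge can turn at most two new vertices into sinks simultaneously.

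The next step is to show that a.a.s.\ at least three vertices of $\G$ have no incident label falling in $[q, p]$. Any such vertex $v$ that is not a sink in $\G_{[0, q]}$ will remain a non-sink in $\G_{[0, p]}$: every temporal path terminating at $v$ in $\G_{[0, p]}$ must end on an edge incident to $v$, whose label must then lie in $[0, q]$; by monotonicity of labels along the path, the entire path is contained in $\G_{[0, q]}$. Hence the set of vertices that can reach $v$ coincides in $\G_{[0, q]}$ and $\G_{[0, p]}$, so $v$ continues to witness the lack of temporal sink property. Since at most two of the three candidate vertices were sinks at time $q$, at least one of them provides the vertex required by the lemma.

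To avoid the conditioning issues associated with $q$ being a random time, I would replace the random interval $[q, p]$ with the deterministic interval $[q_{\min}, p]$. Let $H'$ be the graph on $V(\G)$ in which edge $uv$ is present iff $uv$ has at least one appearance in $[q_{\min}, p]$. By independence of the Poisson point processes attached to different edges, $H' \sim G_{n, u'}$ with $u' := 1 - e^{-(p - q_{\min})}$. On the a.a.s.\ event $\{q \geq q_{\min}\}$ we have $[q, p] \subseteq [q_{\min}, p]$, so every vertex isolated in $H'$ is automatically free of incident labels in $[q, p]$. It therefore suffices to show that $H'$ has at least three isolated vertices a.a.s.

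Using $p - q_{\min} \leq \frac{\log n - 3(\log n)^{0.8}}{n}$, I would compute
\[
    n(1 - u')^{n-1} = n \cdot e^{-(n-1)(p - q_{\min})} \geq e^{3(\log n)^{0.8} (1 - 1/n) - (\log n)/n} \longrightarrow \infty,
\]
from which \cite[Theorem~3.1(ii)]{BollobasRandomGraphs} yields the desired three isolated vertices in $H'$ a.a.s. The main delicate point I anticipate is precisely the handling of the random stopping time $q$, which is circumvented cleanly by passing to the deterministic surrogate $[q_{\min}, p]$; once this is done, the Poisson independence structure actually makes the argument more transparent than in the uniform case, as there is no need to subtract pre-existing edges as was done in \cref{lm:tconnLower}.
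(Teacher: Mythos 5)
Your proof is correct and follows essentially the same route as the paper: both reduce the claim to the existence of three isolated vertices in a $G_{n,u}$ on the interval between the first-sink time and $p$, bound $u$ (here $1-e^{-(p-q)}\le p-q$, with no need to remove pre-existing edges thanks to the Poisson structure), and invoke \cite[Theorem~3.1(ii)]{BollobasRandomGraphs}. Your substitution of the deterministic interval $[q_{\min},p]$ for the random $[q,p]$ is a slightly cleaner way of handling the stopping time than the paper's ``$u$ is maximized when $q$ is minimal'', but it is the same idea.
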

\begin{proof}
The proof works the same,
except that the probability for a pair of vertices to form an edge in the underlying graph $H$ of $\G_{[q,p]}$ is always
\[
	\gamma = 1 - e^{q-p} \leq p - q 
	\leq \frac{3 \log n}{n} - \eps - \frac{2 \log n}{n} + \frac{\eps}{2}
	< \frac{\log n - (\log n)^{0.8}}{n} \,.
\]
In particular, $H$ is distributed as an element of $G_{n, \gamma}$
and the remainder of the proof may be copied as is.
\end{proof}
The proof of \cref{lem:connUpperBound} is simplified due to the fact that we do not need to bound $\Delta(\G_{[0, q]})$ anymore.
\begin{lemma}[cf.\ \cref{lem:connUpperBound}]
	\label{lem:connUpperBound-ANY} 
	Let $p \geq \frac{3\log n}{n} + \eps$, where $\eps := \frac{3 (\log n)^{0.8}}{n}$,
	and let $\G \sim \Hh_{n}$.
	Then, \aas every vertex in $\G_{[0, p]}$ is a temporal sink.
\end{lemma}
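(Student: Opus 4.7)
The plan is to mirror the proof of \cref{lem:connUpperBound} but exploit the key feature of $\Hh_n$: by the memoryless property of the Poisson process, edge appearances after time $q$ are independent of the entire history before time $q$. This removes the need to control the maximum degree of the underlying graph of $\G_{[0,q]}$, which was the technical bottleneck in the original argument.

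First, I would set $q := \tfrac{2\log n}{n} + \eps(n)$ and $r := \tfrac{n}{\log n}\log\log\log n$ exactly as before. By \cref{th:avgSource} (which transfers to $\Hh_n$ unchanged as noted in the excerpt), each vertex fails to be a temporal sink of $\G_{[0,q]}$ with probability at most $\tfrac{5}{\log n}$, so Markov's inequality gives, with probability $1-o(1)$, that the set $V\setminus S$ of non-sinks at time $q$ has $|V\setminus S|\leq r$. As in the original proof, once some $w\in V\setminus S$ gains an edge to any $v\in S$ at time $t\in(q,p]$, $w$ becomes a sink by time $t$.

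For each $w\in V\setminus S$ and each $v\in S$, the waiting time $T_w^{(v)}$ for the next appearance of edge $wv$ after time $q$ is $\operatorname{Exp}(1)$-distributed and independent across pairs, by the strong Markov property of the Poisson processes governing distinct edges. Crucially, this holds irrespective of whether or how many times $wv$ appeared before time $q$, so the ``available'' set of edges is simply all of $S$ (size at least $n-1-r$), and no degree bound is needed. Setting $T_w:=\min_{v\in S}T_w^{(v)}\sim\operatorname{Exp}(|S|)$, one obtains
\[
\tau:=\prob\!\left[T_w>\tfrac{\log n}{n}\right]\leq\exp\!\left(-(n-1-r)\tfrac{\log n}{n}\right)\leq\tfrac{(\log\log n)^2}{n}(1+o(1)).
\]
The variables $\{T_w\}_{w\in V\setminus S}$ are mutually independent since they involve disjoint edge sets, so letting $T:=\max_w T_w$,
\[
\prob\!\left[T>\tfrac{\log n}{n}\right]=1-(1-\tau)^{|V\setminus S|}\leq 1-(1-\tau)^r,
\]
and the computation $r\log(1-\tau)\to 0$ proceeds exactly as in the proof of \cref{lem:connUpperBound}, giving $\prob[T>\tfrac{\log n}{n}]=o(1)$.

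Since $q+\tfrac{\log n}{n}<p$ for all sufficiently large $n$ under our hypothesis on $p$, the event ``every $w\in V\setminus S$ acquires an edge to $S$ by time $p$'' holds a.a.s., so every vertex of $\G_{[0,p]}$ is a temporal sink a.a.s. The main conceptual point—and the only place where the argument diverges from \cref{lem:connUpperBound}—is recognizing that the memorylessness of the Poisson process makes the constraint ``$\lambda(vw)>q$'' trivially available for every pair, making the degree bound superfluous; beyond that, all estimates carry over verbatim.
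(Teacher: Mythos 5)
Your proof is correct and follows essentially the same route as the paper's: both arguments reuse the structure of \cref{lem:connUpperBound} and observe that, by the memorylessness/independent-increments property of the Poisson processes, every edge from $w$ to $S$ is "available" after time $q$ with an $\operatorname{Exp}(1)$ waiting time, so the maximum-degree bound on $G_{[0,q]}$ becomes unnecessary. The remaining estimates ($\tau \leq e^{-(1-o(1))\log n}$ and $r\log(1-\tau)\to 0$) match the paper's.
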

\begin{proof}
We define $q$ and $r$ and $S$ as in \cref{lem:connUpperBound},
and observe that $\abs{V \setminus S} \leq r$ a.a.s.
The set $C_w$ of potential edges connecting $w \in V \setminus S$ to an element of $S$ is simplified to
\[
 C_w = \{vw \mid v \in S\}
\]
which now trivially implies $\abs{C_w} = n - 1 - r \geq n - 2r = d$.
As the waiting time of each edge is exponentially distributed, we now get
$\prob[T_w > x] \leq e^{-xd}$.
In particular, the estimation
\[
 \tau := \prob\left[T_w > \frac{\log n}{n}\right] \leq e^{-(\log n \,-\, 2 \log \log \log n)}
\]
still applies, so the remainder of the proof may be copied without modifications.
\end{proof}
Thus, we again obtain \cref{th:tempConnectivity}, i.e., the sharp threshold
on temporal connectivity in $\Hh_{n,p}$.

Of our results in \cref{sec:temporalSpanners}, all except \cref{th:opt-span4} are based solely on results which we have already proven to also hold for $\Hh_{n,p}$.
Finally, \cref{th:opt-span4} only requires a minor modification to the proof.

\begin{theorem}[cf.\ \cref{th:opt-span4}]
\label{th:opt-span4-ANY}
$\Hh_{n,p}$ \aas has an optimal spanner
if $p \geq 4\frac{\log n}{n} + \eps$
where $\eps := \frac{16 (\log n)^{0.8}}{n} \in o\left(\frac{\log n}{n}\right)$.
\end{theorem}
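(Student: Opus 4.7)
The plan is to mirror the proof of \cref{th:opt-span4} almost verbatim, adjusting only for the Poisson structure of labels in $\Hh_n$. Set $\eps_0 := \eps(n)/4$ and $p_i := 2\log n / n + i\eps_0$ for $i \in \{1,2,3\}$. I will call a 4-tuple $(w,x,y,z) \in V^{(4)}$ a \emph{square} if each of $wx, yz$ has at least one label in $[p_1, p_2]$ and each of $xy, wz$ has at least one label in $[p_2, p_3]$; as before, a square permits any of its vertices to reach any other by a temporal path using only labels in $[p_1, p_3]$.

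The first step is to count squares. Since the Poisson processes on distinct edges are independent, the probability that a fixed edge has at least one label in an interval of length $\eps_0$ is $1 - e^{-\eps_0} = \eps_0(1+o(1))$. This gives $\expect[S] = n^{(4)}(1 - e^{-\eps_0})^4 = (1+o(1))\, n^{(4)} \eps_0^4$, matching the original first-moment computation up to a $(1+o(1))$ factor. The second-moment calculation goes through with identical bookkeeping: partitioning pairs $(A, B) \in V^{(4)} \times V^{(4)}$ by $i := |A \cup B|$, the case $i = 8$ contributes $n^{(8)}(1-e^{-\eps_0})^8$, and cases $i \leq 7$ contribute at most $(8n)^i (1-e^{-\eps_0})^i$, since two squares with $|A \cup B| = i$ share at most $8-i$ edges whose Poisson processes must each land in the appropriate short interval. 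Hence $\variance[S] \in o(\expect[S]^2)$ and $S \geq \expect[S]/2$ a.a.s.\ by Chebyshev.

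The second and third steps are to produce a good square and assemble the spanner. Fix a candidate square $(w,x,y,z)$ and let $\G'$ be the restriction of $\G \sim \Hh_n$ to $V \setminus \{x,y,z\}$; because the Poisson processes on the four square edges are independent of those on the remaining edges, $\G'$ remains distributed as $\Hh_{n-3}$ even after conditioning on the square event. The intervals $[0, p_1]$ and $[p_3, p]$ both have length $\frac{2\log n}{n} + \eps_0$, which exceeds $\frac{2\log(n-3)}{n-3} + \eps(n-3)$ for large $n$, so applying the $\Hh$-version of \cref{th:avgSource} and its temporal-sink dual (via time-reversal of the Poisson processes on $[0,p]$) to $\G'$ shows that with probability at least $1 - 10/\log(n-3)$ the vertex $w$ is simultaneously a temporal sink of $\G'_{[0, p_1]}$ and a temporal source of $\G'_{[p_3, p]}$; call such a square \emph{good}. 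Markov's inequality combined with the lower bound on $S$ then gives that a good square exists a.a.s., and combining it with the foremost tree from $w$ in $\G'_{[p_3, p]}$ and the reverse foremost tree to $w$ in $\G'_{[0, p_1]}$ yields a spanner of $\G_{[0,p]}$ with $4 + 2(n-4) = 2n - 4$ edges. The only substantive point requiring care is the independence claim that keeps $\G'$ distributed as $\Hh_{n-3}$ after conditioning on the square event, which is immediate from the product structure of the Poisson processes on disjoint edge sets; all other steps track the original proof with only constant-factor perturbations.
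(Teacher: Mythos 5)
Your proposal is correct and follows essentially the same route as the paper: replace the per-edge probability $\eps_0$ by $1-e^{-\eps_0}$ in the first- and second-moment computations for the square count, note that $1-e^{-\eps_0} = (1+o(1))\eps_0$ so the Chebyshev bound survives, and then reuse the good-square argument with the $\Hh$-versions of the source/sink results. The independence point you highlight (that $\G'$ restricted to $V\setminus\{x,y,z\}$ remains distributed as $\Hh_{n-3}$ by the product structure of the Poisson processes) is exactly the justification the paper relies on implicitly.
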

\begin{proof}
In comparison to the proof of \cref{th:opt-span4},
only the probability of an edge appearing in an interval of length $\eps_0$
changes from $\eps_0$ to $1 - e^{-\eps_0}$.
Thus, $\expect[S] = n^{(4)}(1-e^{-\eps_0})^4$
and
\begin{align*}
	\expect[S^2] &\leq n^{(8)} (1-e^{-\eps_0})^8 + \sum_{i=4}^7 8^7 n^i (1-e^{-\eps_0})^i
	\leq \expect[S]^2 + 8^7 \cdot \sum_{i=4}^7 n^i \eps_0^i \,.
\end{align*}
Therefore, the bound on the variance $\variance[S]$ and the remainder of the proof remain valid.
\end{proof}

\subsection{Complete graphs with random weights}
\label{sec:first-passage-percolation}

Another model related to RSTGs (technically, to $\F_{n,p}$), although less closely than gossiping and population protocols scheduling,
is the randomly edge-weighted complete graph model studied by \textcite{Janson99}
and related to percolation theory.
Janson's model considers a (static) complete graph with edge weights drawn independently and uniformly%
\footnote{The results also hold for any other distribution that closely matches the uniform distribution around~$0$.} from~$[0, 1]$.
Instead of arrival times, this model simply measures path lengths as sums of edge weights.
Intriguingly, some of our sharp thresholds have analogs with the same numerical values in that model.
Part of the underlying reason is that,
as shown in our analysis in \cref{lem:ExpOfXk},
if one reaches either endpoint of any particular edge at any point in time which is sufficiently close to~$0$,
then the waiting time until that edge appears
is distributed close to uniformly on~$[0, 1]$.
The latter condition matches the assumption made by \citeauthor{Janson99} about the distribution of the edge weights in his model and causes the growth of single-source trees to behave identically in both models.
This explains why the time thresholds for Point-to-Point Reachability and Temporal Source 
in our model agree with the weight thresholds for One-to-One Shortest Path and 
One-to-All Shortest Paths in Janson's model, respectively.

However, the above argument is not sufficient to account for the coincidence of the thresholds
involving multiple sources, \ie the time threshold for Temporal Connectivity and
the weight threshold for All-to-All Shortest Paths.
Indeed,
the strong independence assumption in the randomly edge-weighted complete graph model 
does not hold even in the gossiping model with repeated interactions:
In the former, the time required to cross an edge (\ie the edge weight) is independent of all other
factors, whereas in temporal graphs and gossiping two temporal paths emanating from different
sources, in general, need to wait different amounts of time to cross the same edge,
as those depend on the times at which the paths reach an endpoint of that edge.
The rather surprising observation that the above two thresholds are nevertheless the same
is ultimately a consequence of the fact that
the determining factor for pairwise connectivity in each of the two models
turns out to be
the waiting time until at least one edge appears for every vertex.
This time is identical in both models as all vertices start at time~$0$.

In essence, one might say that our results prove the temporal dependencies present in random temporal graphs to be insignificant for reachability times,
as witnessed by the fact that the resulting waiting time thresholds coincide with analogous thresholds in the simple weighted complete graph model, where such dependencies are absent.
Note, however, that our results aim in a different direction than those of \textcite{Janson99}.
Instead of characterizing the shape of the limit probability distribution,
we focus on the speed of convergence
and on the properties of partial foremost trees. Also, it is not clear that the aspects related to spanners have natural interpretations in Janson's model, as these graphs admit classical spanning trees unconditionally, which is not true in temporal graphs, whatever the density.

\subsection{More dependencies}

Our techniques can be applied to models in which edge appearances
are more dependent, and their distribution deviates even further from uniformity.
For instance, with some technical work we can show that the results obtained in this paper
also hold in a model where an adversary blocks a small fraction of the potential edges at each vertex
before any random choice is made.
Additionally, we could also allow an adversary to pick different time label
distributions for the edges as long as each of them is asymptotically close
to the uniform distribution around time~$0$.

\subsection{Edge-ordered graphs}
\label{sec:edge-ordered}

If only the relative order of time labels but not their absolute values is of interest,
simple temporal graphs with pairwise different time labels are clearly equivalent to
\emph{edge-ordered graphs}, which are graphs with a given total order of their edges.

In combinatorics, edge-ordered graphs have been studied from various perspectives.
Perhaps closest to our work is the study of the lengths of longest monotone 
paths and walks, i.e., paths or walks whose edge sequences are strictly increasing.
Note that in the terminology of temporal graphs, these are exactly the notions of temporal paths or walks.

Interest in this topic dates back to the '70s,
with \textcite{chvatal1971some} asking for worst-case bounds on the length of a longest monotone path or walk
in any edge ordering of the complete graph~$K_n$.
\Textcite{graham1973increasing} gave an answer for the case of walks 
and found a lower bound of~$\Omega(\sqrt{n})$ and an upper bound of~$3n/4$ for the case of paths.
The upper bound has subsequently been improved to roughly $n/2$ \cite{CalderbankCS84}.
The lower bound stayed current for a long time, but recently \textcite{bucic2020nearly} raised it to $n^{1-o(1)}$, nearly closing the gap.

More closely related to our paper is the work of \textcite{LL14}, who considered random edge orderings of~$K_n$
and found that even monotone Hamiltonian paths exist with probability at least $1/e$ and conjectured them to exist a.a.s.
This conjecture was recently proven true by \textcite{Martinsson19}.
Even more recently, \textcite{angel2020long} established the longest monotone walk in a random edge ordering of $G_{n,p}$ to \aas have length about $enp$.
These results can be interpreted also for temporal graphs.
Specifically the result of \citeauthor{Martinsson19} shows that any $\G \sim \F_{n,1}$ \aas contains a temporal path visiting all vertices,
while that of \citeauthor{angel2020long} implies that the length of a longest temporal walk in $\F_{n,p}$ is concentrated around~$enp$.
%

\section{Concluding remarks and open questions}
\label{sec:conclusion}

In this paper, we studied a natural model of random temporal graphs, in which every edge of an Erd{\H{o}}s–R{\'e}nyi graph $G_{n,p}$ is assigned a single presence time chosen uniformly at random in the unit interval $[0,1]$.
The study of various degrees of temporal reachability and existence of small temporal spanners 
in these graphs revealed a rich diversity of thresholds, in stark contrast with static graphs.
Put together, these thresholds offer a measure of the discrepancy between static and temporal connectivity in random graphs.

Despite the simplicity of RSTGs, we have shown that they capture, in a scale-preserving way, several classical phenomena observed in gossip theory and population protocols.
Furthermore, some of these results were shown to strengthen and/or to complete existing results in these fields.
In stark contrast with \emph{deterministic} temporal graphs, where significant obstructions exist which prevent the existence of linear-size spanners even in some dense graphs, we have shown that all possible obstructions to the existence of nearly optimal temporal spanners turn out to be statistically insignificant in random temporal graphs.
Thus, the concept of a nearly optimal spanner recovers some form of universality in the spirit of spanning trees in static graphs.
Observe that the spanners we construct are highly centralized, with almost all communication passing through a single pivot vertex.
It would be interesting to determine whether this is inherent or incidental.

All but one of our characterizations are sharp thresholds.
The remaining one concerns the existence of optimal temporal spanners (\ie spanners of size exactly~$2n-4$) in RSTGs, for which we prove an upper bound of $p=4 \log n / n$.
Whether this bound is actually a sharp threshold is left open; we conjecture that it is (Conjecture~\ref{conjecture}).

Now that the most basic phase transitions in temporal reachability are characterized in RSTGs, it would be quite natural to start looking at more complex properties, e.g., motivated by networking applications.
On the one hand, one might want to obtain connectivity via at least~$2$ 
(or~$k$) disjoint temporal paths, where disjoint can mean edge-disjoint
or vertex-disjoint.
On the other hand, one could extend the task of sending a message to the task
of sending a message and receiving a reply, \ie roundtrip communication.
We conjecture that there are similar sharp thresholds for roundtrip communication. 
In particular, our preliminary investigation suggests the following:

\begin{conjecture}[Roundtrip connectivity]~
 \begin{enumerate}
         \item For two fixed vertices, the sharp threshold for possibility
                 of roundtrip communication is $2\log n / n$;
         \item For one fixed vertex $u$, the sharp threshold for roundtrip communication from~$u$ 
                 to all other vertices and back, as well as the sharp threshold for roundtrip communication
                 from all the other vertices to $u$ and back, is $3\log n / n$;
\item  The sharp threshold for roundtrip communication between all pairs of vertices (\ie roundtrip connectivity) is $4\log n / n$.
 \end{enumerate}
\end{conjecture}

A classical concept in Erd{\H{o}}s–R{\'e}nyi random graphs is the one of
a \emph{giant component}, \ie a set of $\Theta(n)$ mutually reachable vertices. In this paper, we showed that $\log n / n$ is a sharp threshold for any fixed pair of vertices to reach each other via temporal paths. However, in a subtle way, this property does not imply that a subset of $\Theta(n)$ vertices can \emph{all} reach each other (due to the intransitivity of temporal paths). Nevertheless, we conjecture that such components also appear at $\log n / n$ and that the threshold is sharp, in contrast to static graphs where the giant component emerges progressively.

\begin{conjecture}
  $\log n / n$ is a sharp threshold for the existence of a temporal component of size~$\Theta(n)$. In fact, the same holds even for components of size $n-o(n)$.
\end{conjecture}

Finally, we have shown in the paper that in an RSTG, every pair of temporally connected vertices is connected by a temporal path containing at most a logarithmic number of edges.
We conjecture that for sufficiently sparse RSTGs this also constitutes a lower bound. Precisely:

\begin{conjecture}
There exists a positive-valued function $f$
such that 
for any constant $C$, if $p=C \log n/n$, then the graph \aas contains some pairs of vertices
with no connecting temporal path having fewer than $f(C)\log n$ edges.
\end{conjecture}

In a dedicated section, we have shown that our analyses can be applied essentially in the same way for multiple types
of processes beyond RSTGs.
We hope that this versatility can be extended further and characterized
by some simple conditions.
This was the case with \emph{parallel} rumor spreading time \cite{DoerrK17},
and we hope that sequential rumor propagation could enjoy the same.

In conclusion, the notion of reachability in temporal graphs is quite different from reachability in static graphs, and the present paper illustrates the fact that reachability in \emph{random} temporal graphs is also quite different from reachability in \emph{deterministic} temporal graphs. We hope that our results will help paving the way for further (and more complex) investigations of temporal reachability.

\smallskip
\bigskip
\noindent
\textbf{Acknowledgements.}
We are very grateful to the anonymous reviewers for their careful reading and many useful and detailed comments that improved the presentation of the paper.

\printbibliography

\end{document}